%% file: main.tex
\documentclass[a4paper,UKenglish,cleveref, autoref, thm-restate]{lipics-v2021}

\usepackage{graphicx}
\usepackage{amsmath, nccmath}
\usepackage{amsfonts}
\usepackage{amssymb} %
\usepackage{amsthm}
\usepackage{mathtools}
\usepackage{color, ifthen} %
\usepackage[all]{xy} \SelectTips{cm}{}  %
\usepackage{quiver}
\usepackage{hyperref}
\usepackage{cleveref}

\newcommand{\Set}{\mathbf{Set}}
\newcommand{\Cat}{\mathbf{Cat}}

\newcommand{\id}{\mathrm{id}}

\newcommand{\ol}[1]{\overline{#1}}
\newcommand{\Coalg}[1]{\mathrm{Coalg}(#1)}
\newcommand{\CoalgOcov}[2]{\mathrm{Coalg}^\cat{O}_{#2}(#1)}
\newcommand{\CoalgOctr}[1]{\mathrm{Coalg}^\cat{O}(#1)_{\mathrm{ctr}}}

\newcommand{\cat}[1]{\mathbb{#1}}
\newcommand{\nat}{\mathbb{N}}

\newcommand{\pullbackmark}[2]{\save ;p+<.8pc,0pc>:(0,-1)::%
(#1) *{\phantom{Z}} %
;p+(#2)-(0,0) **@{-}%
;p-(#1)+(0,0) *{\phantom{Z}} **@{-} \restore}

\newcommand{\bset}{\mathbf{B}}

\newcommand{\btrue}{\mathrm{t}}
\newcommand{\bfalse}{\mathrm{f}}

\newcommand{\Vc}[1]{V^*_{#1}}
\newcommand{\last}{\mathrm{last}}

\newcommand{\Pol}{\mathsf{Sch}}

\newcommand{\Hist}{\mathsf{Hist}}

\newcommand{\Bel}{\mathsf{Bel}}

\newcommand{\ev}{\mathrm{ev}}
\newcommand{\str}{\mathrm{st}}
\newcommand{\stl}{\mathrm{st}} %
\newcommand{\stlift}{\widehat{\stl}} %

\newcommand{\obsv}{\mathrm{obs}}
\newcommand{\slicemonad}[2]{\ol{#1}}
\newcommand{\slicemonadO}[2]{\ol{#1}_{#2}}
\newcommand{\slicefunc}[2]{#1_{/#2}}
\newcommand{\domofslice}[3]{\dom\ol{#1}(#3)} %

\newcommand{\Sl}{\mathrm{Sl}}

\newcommand{\dom}{d}

\newcommand{\pfunc}{\mathcal{P}}
\newcommand{\pefunc}{\mathcal{P}_{\emptyset}}
\newcommand{\R}{\mathbb{R}_{\geq 0}}
\newcommand{\eR}{[0, \infty]}
\newcommand{\N}{\mathbb{N}}
\newcommand{\dfunc}{\mathcal{D}}

\newcommand{\rfunc}{\mathcal{R}}

\newcommand{\expTR}{\mathrm{TER}}
\newcommand{\supp}{\mathrm{supp}}

\usetikzlibrary{arrows.meta,automata,positioning,calc}

\newcommand\restr[2]{{%
  \left.\kern-\nulldelimiterspace %
  #1 %
  \vphantom{\big|} %
  \right|_{#2} %
  }}

\newif\ifarxiv\arxivtrue

 \newif\ifdraft\draftfalse

\ifdraft
\usepackage{todonotes}

\else
\usepackage[disable]{todonotes}
\fi

\newenvironment{exa}
  {\begin{example}}
  {\end{example}}

\newenvironment{defi}
  {\begin{definition}}
  {\end{definition}}

\newenvironment{lem}
  {\begin{lemma}}
  {\end{lemma}}

\newenvironment{rem}
{\begin{remark}}
{\end{remark}}

\newenvironment{prop}
{\begin{proposition}}
{\end{proposition}}

\newenvironment{thm}
{\begin{theorem}}
{\end{theorem}}

\newtheorem{assumption}{Assumption}

\newenvironment{asm}
{\begin{assumption}}
{\end{assumption}}
\crefformat{section}{\S#2#1#3}
\Crefformat{section}{\S#2#1#3}

\crefname{section}{\S}{\S\S}
\Crefname{section}{\S}{\S\S}

\crefname{definition}{Def.}{Definitions}
\Crefname{definition}{Def.}{Definitions}

\crefname{proposition}{Prop.}{Propositions}
\Crefname{proposition}{Prop.}{Propositions}

\crefname{theorem}{Thm.}{Theorems}
\Crefname{theorem}{Thm.}{Theorems}

\crefname{lemma}{Lem.}{Lemmas}
\Crefname{lemma}{Lem.}{Lemmas}

\crefname{figure}{Fig.}{Figures}
\Crefname{figure}{Fig.}{Figures}

\crefname{equation}{Eq.}{Equations}
\Crefname{equation}{Eq.}{Equations}

\title{From Coalgebraic Determinization\\ to Belief Construction for Partial Observability} %

\author{Mayuko Kori}{Research Institute for Mathematical Sciences, Kyoto University, Kyoto}{mkori@kurims.kyoto-u.ac.jp}{}{}%

\author{Kazuki Watanabe}{National Institute of Informatics, Tokyo\\ The Graduate University for Advanced Studies (SOKENDAI), Tokyo}{kazukiwatanabe@nii.ac.jp}{}{}%

\authorrunning{Mayuko Kori and Kazuki Watanabe} %

\Copyright{Mayuko Kori and Kazuki Watanabe} %

\ccsdesc[500]{Theory of computation~Verification by model checking}
\keywords{coalgebra, coalgebraic determinization, belief construction, POMDP} %

\category{} %

\acknowledgements{
We thank the anonymous reviewers for their constructive comments.
M.~K.~is supported by the JST grant No.~JPMJAX25CD, and
K.~W.~is supported by the JST grant No.~JPMJPR25KD.}%

\nolinenumbers %

\begin{document}

\maketitle

\begin{abstract}
  The \emph{belief construction} is a fundamental technique for transforming partially observable systems to fully observable ones while preserving the relevant semantics. It plays a central role in the analysis of partially observable systems, in particular
  \emph{partially observable
  Markov decision processes} (POMDPs), which is a central model in artificial intelligence and formal verification. 
  In this paper, we develop a coalgebraic framework for the belief construction. 
  To handle observations categorically, we lift a monad to slice categories and introduce a belief decomposition that reorganizes states according to their observations. 
  This allows us to introduce a coalgebraic generalization of the belief construction, obtained by combining the belief decomposition with the coalgebraic determinization of Silva, Bonchi, Bonsangue, and Rutten.
  In this framework, we show that the semantics of a partially observable system coincides with that of the corresponding belief coalgebra. 
  We then study when the latter further agrees with the semantics of its fully observable counterpart, and use this to identify conditions under which the semantics of a partially observable system coincides with that of the corresponding fully observable belief system. 
As a consequence, we recover the standard equivalence between POMDPs and belief MDPs, and obtain a new equivalence result for weighted transition systems with the semimodule monad. 

\end{abstract}

\section{Introduction}\label{S:one}
The notion of observation is ubiquitous in artificial intelligence and formal verification, since in real-world settings it is often unrealistic to assume that we have precise information about the systems under analysis. 
\emph{Partially observable Markov decision processes} (POMDPs) (e.g.~\cite{RN2020}) provide a central model for incorporating partial observability, and hence for reasoning about imperfect information about systems. 

\begin{figure}[t]
\centering

\begin{minipage}[t]{0.4\textwidth}
\centering
\scalebox{0.75}{
\begin{tikzpicture}[
  >=Latex,
  node distance=13mm,
  state/.style={draw, circle, minimum size=8mm, inner sep=0pt},
  every edge/.style={draw, ->, >=Latex}
]
\node[state] (s0) {$s_0$};
\node[state, below left=of s0, fill=cyan] (s1) {$s_1$};
\node[state, below right=of s0, fill=cyan] (s2) {$s_2$};
\node[state, below=of s1] (t1) {$t_1$};
\node[state, below=of s2] (t2) {$t_2$};

\draw[->] ([xshift=-10mm] s0.west) -- (s0.west);

\path
(s0) edge[bend left=10] node[left, yshift=3pt]  {\footnotesize$a, \frac12 \, / \, 1$} (s1)
(s0) edge[bend right=10] node[right, yshift=3pt] {\footnotesize$a, \frac12 \, / \, 1$} (s2)
(s0) edge[bend left=30]  node[right]  {\footnotesize$b, 1\,/\,0$} (t1);

\path
(s1) edge[bend left=8]  node[right]  {\footnotesize$a, 1\,/\,1$} (t1)
(s1) edge[bend right=8] node[left] {\footnotesize$b, 1\,/\,0$} (t1);

\path
(s2) edge[bend right=8] node[left] {\footnotesize$a, 1\,/\,0$} (t2)
(s2) edge[bend left=8]  node[right] {\footnotesize$b, 1\,/\,2$} (t2);

\path
(t1) edge[loop below] node {\footnotesize$a,b, 1\,/\,0$} ()
(t2) edge[loop below] node {\footnotesize$a,b, 1\,/\,0$} ();
\end{tikzpicture}
}
\caption{A POMDP $c$, where the states $s_1$ and $s_2$ are assigned the same observation $o$. }
\label{fig:pomdpLeftEx}

\end{minipage}
\hfill
\begin{minipage}[t]{0.5\textwidth}
\centering
\scalebox{0.75}{
\begin{tikzpicture}[
  >=Latex,
  node distance=13mm,
  bstate/.style={draw, rounded corners, minimum height=8mm, minimum width=10mm, align=center},
  every edge/.style={draw, ->, >=Latex}
]
\node[bstate] (b0) {$\delta_{s_0}$};
\node[bstate, right=of b0, fill=cyan] (mu) {$\mu$};
\node[bstate, above right=of mu] (bt1) {$\delta_{t_1}$};
\node[bstate, below right=of mu] (bt2) {$\delta_{t_2}$};

\draw[->] ([xshift=-10mm] b0.west) -- (b0.west);

\path
(b0) edge[] node[above] {\footnotesize$a, 1 \,/\, 1$} (mu)
(b0) edge[bend left=15] node[above] {\footnotesize$b, 1 \,/\, 0$} (bt1);

\path
(mu) edge[bend left=14]  node[above, left] {\footnotesize$a, \frac12\,/\,1$} (bt1)
(mu) edge[bend right=14] node[below, right] {\footnotesize$b, \frac12\,/\,0$} (bt1)
(mu) edge[bend left=14]  node[above, right] {\footnotesize$b, \frac12\,/\,2$} (bt2)
(mu) edge[bend right=14] node[below, left] {\footnotesize$a, \frac12\,/\,0$} (bt2);

\path
(bt1) edge[loop above] node {\footnotesize$a,b, 1\,/\, 0$} ()
(bt2) edge[loop below] node {\footnotesize$a,b, 1 \,/\, 0$} ();
\end{tikzpicture}
}
\caption{The reachable part of the belief MDP, where $\mu \coloneqq \tfrac12\delta_{s_1}+\tfrac12\delta_{s_2}$.}
\label{fig:beliefPOMDP}
\end{minipage}

\end{figure}

\begin{exa}
  We illustrate this with the POMDP $c$ shown in~\cref{fig:pomdpLeftEx}.\footnote{While rewards are assigned to transitions in the example, our coalgebraic framework for POMDPs assigns rewards to states.
This difference is not essential, in the sense that the former can be modelled by the latter by adding dummy states, and the converse direction is trivial.  }
A POMDP is a Markov decision process (MDP) equipped with an observation function assigning to each state an observation.
In this example, there are five states $S \coloneqq \{s_0,s_1,s_2,t_1, t_2\}$, and the observation function $\obsv\colon S\rightarrow \{o, o_0, o_1, o_2\}$ is given by 
\begin{align*}
  \obsv(s_0) \coloneqq o_0,\quad \obsv(s_1) = \obsv(s_2) \coloneqq o, \quad \obsv(t_1) \coloneqq o_1, \quad \obsv(t_2) \coloneqq o_2.
\end{align*}
From each state, choosing an action $a$ or $b$ induces a probabilistic transition together with an immediate reward.
For instance, from the initial state $s_0$, choosing action $a$ leads to $s_1$ with probability $\frac{1}{2}$ and reward $1$, and to $s_2$ with probability $\frac{1}{2}$ and reward $1$.

The quantity of interest here is 
the maximal total expected reward 
over observation-based schedulers.
An observation-based scheduler is 
a map
$u \colon O^{+}\to \{a,b\}$, where $O = \{o_0, o, o_1, o_2\}$
and $O^+$ is the set of nonempty finite sequences of observations,
which selects the next action based only on the history of observations, rather than the underlying states.\footnote{This presentation is equivalent to the more common presentation of schedulers as functions $O \times (\{a, b\} \times O)^* \to \{a,b\}$ since previously chosen actions are themselves determined by the preceding observations.}
For the POMDP $c$,
the maximal total expected reward from $s_0$ is
\begin{equation} \label{eq:ex_pomdp}
\max_{u\in \{a, b\}^{O^{+}}} \expTR_{c, u}  = \frac{1}{2}(1+2)+\frac{1}{2}(1+0)=2,
\end{equation}
where $\expTR_{c, u}$ is the total expected reward of $c$ under the scheduler $u$.
The maximum is attained by choosing action $a$ first and then action $b$.
Notably, if we instead regard the same system as a fully observable MDP, then the maximal total expected reward becomes $\frac{5}{2}$, since the choice of action can depend on the visited state ($s_1$ or $s_2$).
\end{exa}

The belief construction for POMDPs (e.g.~\cite{ShaniPK13}) transforms 
a POMDP into an equivalent fully observable MDP, called its \emph{belief MDP}. The states of this MDP are probability distributions over the original state space, called \emph{beliefs}.
Although the belief MDP of a finite-state POMDP is generally infinite-state, the construction reduces the original problem to one on MDPs, for which many effective abstraction and approximation techniques have been developed (e.g.~\cite{SimaoS023,BorkKQ22,AndriushchenkoBCJKM26,Norman0Z17,RoyGT05}).

\begin{exa}
  We present  in~\cref{fig:beliefPOMDP} the belief MDP $c^\Bel$ of the POMDP $c$ above;
we write $\delta_s$ for the Dirac distribution at a state $s$. 
The figure shows only the reachable part 
 from the initial belief $\delta_{s_0}$. 
The maximal total expected reward from the initial belief $\delta_{s_0}$ is
\begin{equation} \label{eq:ex_bel}
\frac{1}{2}(1+2) + \frac{1}{2}(1+0) = 2,
\end{equation}
which coincides with the maximal total expected reward~\eqref{eq:ex_pomdp} of the POMDP.
\end{exa}

The above example illustrates that the usual belief construction for POMDPs is semantics-preserving: the maximal total expected reward of the original POMDP coincides with that of its belief MDP.
This naturally raises the following two questions:
\begin{itemize}
  \item How can belief constructions be defined uniformly for classes of partially observable systems that include POMDPs?
  \item How can we prove that such constructions preserve semantics?
\end{itemize}

In this paper, we answer these questions from a coalgebraic perspective.
We model a partially observable system with initial states as 
\emph{pointed partially observable coalgebras}. Concretely, such a coalgebra consists of
a pair of an initial-state map $i\colon I \to S$ and
a coalgebra
$c \coloneqq \langle \delta,\obsv\rangle\colon S \to FTS \times O,
$
where $F\colon \cat{C}\to \cat{C}$ describes the one-step transition type, $T\colon \cat{C}\to \cat{C}$ is a monad describing the branching type, such as nondeterminism and probability, and $O$ is an object of observations.

We then propose a generic construction, called the \emph{coalgebraic belief construction}.
Given a pointed coalgebra $(i, c)$,
the construction produces a new pointed coalgebra
$\Bel(i, c)$, called its \emph{belief coalgebra}.
The idea is to abstract the ordinary belief update for POMDPs: 
one first applies the probabilistic transition to a belief, and then decomposes the resulting distribution into conditional beliefs indexed by observations.
Categorically, the first step is provided by
the coalgebraic determinization~\cite{DBLP:journals/corr/abs-1302-1046},
while the second step is provided by a new structure that we call a \emph{belief decomposition}.

To formulate the second step, we regard state spaces equipped with observations as 
objects of the slice category $\cat{C}/O$,
and
lift the monad $T$ on $\cat{C}$ to a monad 
$\slicemonadO{T}{O}$ on 
the slice category $\cat{C}/O$.
The belief decomposition uses this lifted monad to describe the operation of splitting a $T$-structured state into components indexed by observations, generalizing the decomposition of a belief into conditional beliefs in the  POMDP case.

Our main theorem states that the coalgebraic belief construction is correct:
the semantics of a partially observable coalgebra coincides with that of its belief coalgebra.
In our abstract setting, the belief coalgebra still carries an observation map, whereas the usual belief MDP of a POMDP is fully observable. 
We therefore further provide conditions under which
the semantics of this belief coalgebra agrees with the semantics of its fully
observable counterpart.
This recovers the standard equivalence between a POMDP and its
belief MDP.

We instantiate the framework with the nonempty powerset monad for nondeterminism, the distribution monad for probability, and the semimodule monad for weighted systems.
These instances recover the usual belief construction for POMDPs and yield a new belief construction for partially observable weighted transition systems.
As a corollary, we recover the decidability of termination for partially observable nondeterministic transition systems.

In summary, our contributions are as follows:
\begin{itemize}
  \item a generic coalgebraic framework for belief constructions, based on monads lifted to slice categories and belief decompositions;
  \item a correctness theorem for the coalgebraic belief construction, showing that the semantics of a partially observable coalgebra coincides with that of its belief coalgebra;
  \item sufficient conditions under which partially observable coalgebras can be seen semantically as fully observable ones;
  \item concrete instances for 
  nondeterministic, probabilistic, and weighted transition systems,
  recovering known belief constructions and yielding a new belief construction for partially observable weighted transition systems with the semimodule monad.
\end{itemize}

\noindent
\textbf{Structure}. 
After recalling preliminaries in~\cref{sec:preliminaries}, we lift monads to
slice categories in~\cref{sec:slicingMonads} and define the coalgebraic belief
construction in~\cref{sec:belief_const}. We introduce semantics in~\cref{sec:semBeliefCoal} and prove the correctness theorem in~\cref{sec:correctness}. We then compare partially observable coalgebras and their
fully observable counterparts in~\cref{sec:reachable}, present examples in~\cref{sec:examples}, discuss related work in~\cref{sec:relatedwork}, and
conclude in~\cref{sec:conclusion}.

\section{Preliminaries}
\label{sec:preliminaries}

We recall preliminaries on coalgebras, as well as 
the coalgebraic framework for determinization introduced by Silva et al.~\cite{DBLP:journals/corr/abs-1302-1046}, which has also been applied to trace semantics~\cite{Jacobs0S15,GoyR18,BonchiBCR016,FrankMU22} and graded semantics~\cite{ForsterSWBGM24}.
We refer the reader to \cite{J2016,adamek2025initial} for the background on coalgebras.

Given an endofunctor $F\colon \cat{C} \to \cat{C}$, an \emph{$F$-coalgebra} $c$ is a morphism $c\colon S\rightarrow FS$,
and a \emph{coalgebra morphism} from $c\colon S \to FS$ to $c'\colon S' \to FS'$ is a morphism $f\colon S \to S'$ such that $c' \circ f = Ff \circ c$.
\begin{defi}
  Let $F\colon \cat{C} \to \cat{C}$.
  For an object $I \in \cat{C}$,
  an \emph{($I$-)pointed coalgebra} is a pair $(i,c)$ of a morphism $i\colon I \to S$ and an $F$-coalgebra $c\colon S \to FS$, often written as $(i, c)\colon I \to S \to FS$.
  A \emph{morphism} $f$ between pointed coalgebras $(i,c)\colon I \to S \to FS$ and $(i',c')\colon I \to S' \to FS$ is a morphism $f\colon S \to S'$ such that 
  $f \circ i = i'$ and $c' \circ f = Ff \circ c$.
\end{defi}

\begin{defi}
Let $T \colon \cat{C} \to \cat{C}$ be a monad and let
$F \colon \cat{C} \to \cat{C}$ be a functor.
  A natural transformation $\lambda\colon TF \Rightarrow FT$ is a \emph{distributive law} (of $T$ over $F$) 
  if 
  $F(\eta_X) = \lambda_X \circ \eta_{FX}$ and $F(\mu_X) \circ \lambda_{TX} \circ T(\lambda_X) = \lambda_X \circ \mu_{FX}$ for each $X \in \cat{C}$. 
\end{defi}
A distributive law allows one to combine the branching structure described by $T$ with the one-step transition type described by $F$.
It is known that distributive laws correspond bijectively to liftings of $F$ to the category of Eilenberg--Moore algebras for $T$; see, e.g.,~\cite{J2016}.

\newcommand{\Det}{\mathrm{Det}}
\begin{defi}
Let $T \colon \cat{C} \to \cat{C}$ be a monad,
$F \colon \cat{C} \to \cat{C}$ be a functor,
  and $\lambda\colon TF \Rightarrow FT$ be a distributive law.
  For a coalgebra
$c \colon S \to FTS$,
the \emph{(coalgebraic) determinization} is the $F$-coalgebra given by
$
\Det(c)
\coloneqq
TS\xrightarrow{Tc} TFTS \xrightarrow{\lambda_{TS}} FTTS \xrightarrow{F\mu_S} FTS.
$
\end{defi}

Intuitively, $Tc$ applies the original transition structure pointwise to a
$T$-structured state, $\lambda$ exchanges the transition types $T$ and $F$, and
$F\mu_S$ flattens the resulting double $T$-structure.
This construction generalizes familiar determinization procedures such as the
powerset construction for nondeterministic automata.
\begin{exa}
  Let $T$ be the finite powerset monad $\pfunc\colon \Set \to \Set$ and $F$ be the functor $(\_)^A \times \mathbf{2}\colon \Set \to \Set$ where $A$ is a fixed finite set of actions and $\mathbf{2} = \{0, 1\}$. 
  Then an $FT$-coalgebra $\langle \delta, \mathrm{acc} \rangle\colon S \to (\pfunc S)^A \times \mathbf{2}$ is precisely a nondeterministic automaton: 
  for each state $s$ and action $a$,
  $\delta(s)(a)$ is the set of successors of $s$ under $a$, and $\mathrm{acc}(s)$ indicates whether $s$ is an accepting state.
  We define a distributive law $\lambda\colon \pfunc ((\_)^A \times \mathbf{2}) \Rightarrow (\pfunc(\_))^A \times \mathbf{2}$ by 
$
\lambda_X(U) \coloneqq \big\langle a \mapsto \bigcup_{(f, t) \in U} f(a), \bigvee_{(f, t) \in U} t \big\rangle,
$
for each $X \in \Set$ and $U \in \pfunc(X^A \times \mathbf{2})$.
Then $\Det(\langle \delta, \mathrm{acc} \rangle)$ is the standard powerset construction for nondeterministic automata. Its state space is $\pfunc S$, the successor of a state $U$ under an action $a$ is the state $\bigcup_{s \in U} \delta(s)(a)$, and $U$ is accepting if $\mathrm{acc}(s) = 1$ for some $s \in U$.
\end{exa}

\section{Slicing Monads}
\label{sec:slicingMonads}

The coalgebraic determinization recalled in~\Cref{sec:preliminaries} applies to coalgebras of the form $c\colon S \to FTS$. 
For the belief construction, however,
states are additionally equipped with observations.
To incorporate observations categorically, we work in slice categories. 

For an object $O \in \cat{C}$, recall that the slice category 
$\cat{C}/O$ has as objects morphisms $f\colon X \to O$,
and as morphisms from $f\colon X \to O$ to $g\colon Y \to O$ 
morphisms $h\colon X \to Y$ satisfying $g \circ h = f$.
Thus an object of $\cat{C}/O$ may be regarded as a state space $X$ equipped with an observation map into $O$.
By abuse of notation, for a morphism $f$ in a slice category, we sometimes write $f$ for its underlying morphism in $\cat{C}$.

In this section, we aim to lift the monad $T$ on $\cat{C}$ to a monad on each slice category $\cat{C}/O$. We then study how these lifted monads behave under change of the observation
object and how they relate to the original monad $T$ via the forgetful functor
$\dom_O \colon \cat{C}/O \to \cat{C}$. 

We begin by recalling oplax monad morphisms. 
\begin{defi}
Let \(T = (T,\eta^T,\mu^T)\) be a monad on \(\cat{C}\), and \(S = (S,\eta^S,\mu^S)\) a monad on \(\cat{D}\).
An \emph{(oplax) monad morphism} from \((\cat{C},T)\) to \((\cat{D},S)\) 
is a tuple $(F, \psi)$ of a functor \(F \colon \cat{C} \to \cat{D}\) and 
a natural transformation $\psi \colon F T \Rightarrow S F$
such that 
  $\eta_{FX}^S = \psi_X \circ F(\eta_{X}^T)$ and $\mu_{FX}^S \circ S\psi_X \circ \psi_{TX} = \psi_X \circ F(\mu_{X}^T)$ for each $X \in \cat{C}$. 
\end{defi}

\begin{lem}[{Cf.~\cite[Sec.~2]{Hernandez2014ApplicationsOT},\cite{10.1007/BFb0063105}}] \label{lem:adj_monad}
  Consider
functors $F\colon \cat{C}' \to \cat{C}, G\colon \cat{D}' \to \cat{D}$ and adjunctions $L \dashv R\colon \cat{D} \to \cat{C}$ and $L' \dashv R'\colon \cat{D}' \to \cat{C}'$. Assume that $LF = GL'$, and
let $\alpha\colon FR' \Rightarrow RG$ be the mate of $LF = GL'$.
Then $(F, \alpha L')$
is 
a monad morphism from $(\cat{C}', R'L')$ to $(\cat{C}, RL)$.
\end{lem}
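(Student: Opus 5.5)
First I fix notation. Let $\eta,\varepsilon$ be the unit and counit of $L \dashv R$, and $\eta',\varepsilon'$ those of $L' \dashv R'$. The mate of the identity $LF = GL'$ is the composite
\[
\alpha \coloneqq \bigl(FR' \xrightarrow{\eta FR'} RLFR' = RGL'R' \xrightarrow{RG\varepsilon'} RG\bigr).
\]
Set $\psi \coloneqq \alpha L'\colon FR'L' \Rightarrow RLF$, where the target uses $RGL' = RLF$. The monads are $T' = R'L'$, with unit $\eta'$ and multiplication $R'\varepsilon' L'$, and $T = RL$, with unit $\eta$ and multiplication $R\varepsilon L$. I will verify the two axioms of an (oplax) monad morphism directly, using naturality together with the triangle identities.

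\textbf{Unit axiom.} We need $\psi_X \circ F\eta'_X = \eta_{FX}$. Expanding $\psi_X \circ F\eta'_X$ gives $RG\varepsilon'_{L'X} \circ \eta_{FR'L'X} \circ F\eta'_X$. By naturality of $\eta$, this equals $RG\varepsilon'_{L'X} \circ RLF\eta'_X \circ \eta_{FX}$. Since $LF = GL'$, we have $RLF\eta'_X = RGL'\eta'_X$. The triangle identity $\varepsilon'L' \circ L'\eta' = \id$ then collapses the composite to $\eta_{FX}$.

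\textbf{Multiplication axiom.} We need
\[
R\varepsilon_{LFX} \circ RL\psi_X \circ \psi_{R'L'X} = \psi_X \circ FR'\varepsilon'_{L'X}.
\]
This is the step I expect to require the most care. The cleanest route is to note that mates are compatible with pasting.

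1. First I use the adjoint characterization of $\alpha$: its transpose $L F R' \Rightarrow G$ under $L \dashv R$ equals $G\varepsilon'$, because $\varepsilon G \circ L\alpha = G\varepsilon'$ after rewriting $LFR' = GL'R'$.
2. Both sides of the required equation are morphisms into $R(\cdot)$, so it suffices to compare their transposes $L(\cdot) \to LFX = GL'X$. Transposition means postcomposing with $\varepsilon$ after applying $L$.
3. For the right-hand side, the transpose is $\varepsilon_{GL'X} \circ L\alpha_{L'X} \circ LFR'\varepsilon'_{L'X}$. By the characterization in step 1, this is $G\varepsilon'_{L'X} \circ GL'R'\varepsilon'_{L'X}$.
4. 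For the left-hand side, the transpose is $\varepsilon_{LFX} \circ L\psi_X \circ L\psi_{R'L'X}$, using the triangle identity $\varepsilon L \circ L\eta = \id$ to absorb the $R\varepsilon$. Applying the characterization in step 1 twice, with naturality of $\varepsilon$ in between, turns this into $G\varepsilon'_{L'X} \circ G\varepsilon'_{L'R'L'X}$.
5. The two results coincide by naturality of $\varepsilon'$: $\varepsilon' \circ L'R'\varepsilon' = \varepsilon' \circ \varepsilon' L'R'$.

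Naturality of $\psi$ is immediate, since $\psi$ is a whiskering of the natural transformation $\alpha$. Throughout, the only real bookkeeping is tracking the identification $LF = GL'$ when rewriting indices, and I will state it explicitly at each use.
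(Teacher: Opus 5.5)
Your proof is correct. It takes a different, more hands-on route than the paper, which gives no argument and defers to the mate calculus (Kelly's doctrinal adjunction; Hernandez, Sec.~2). There the lemma is an instance of a general principle: a square of adjunctions that commutes on the left adjoints induces, via its mate, a morphism of the generated monads. Both monad axioms then follow from the compatibility of mates with pasting.

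\textbf{What each route buys.} You instead check the unit axiom directly and reduce the multiplication axiom to an equality of transposes $LFR'L'R'L'X \to GL'X$. That is self-contained and uses only naturality and the triangle identities. The 2-categorical route also gives, for free, the functoriality asserted in Remark~\ref{rem:indexed_ol_T}, namely that $\theta_{u'u}$ agrees with the composite of $\theta_{u'}$ and $\theta_u$, because mates of pasted squares are pasted mates. Your computation would need a separate check for that.

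\textbf{A slip in step 4.} The composite $\varepsilon_{LFX}\circ L\psi_X\circ L\psi_{R'L'X}$ does not typecheck: $L\psi_{R'L'X}$ lands in $LRLFR'L'X$, while $L\psi_X$ starts at $LFR'L'X$. The transpose of the left-hand side is actually $\varepsilon_{LFX}\circ L\psi_X\circ \varepsilon_{LFR'L'X}\circ L\psi_{R'L'X}$. You get it from two uses of naturality of $\varepsilon$, first $\varepsilon_{LFX}\circ LR\varepsilon_{LFX}=\varepsilon_{LFX}\circ\varepsilon_{LRLFX}$ and then $\varepsilon_{LRLFX}\circ LRL\psi_X = L\psi_X\circ\varepsilon_{LFR'L'X}$. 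The triangle identity is not what absorbs the $R\varepsilon$; it is only used inside your step~1.

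With that correction, step~1 applied to $\varepsilon_{LFX}\circ L\psi_X$ and to $\varepsilon_{LFR'L'X}\circ L\psi_{R'L'X}$ gives $G\varepsilon'_{L'X}\circ G\varepsilon'_{L'R'L'X}$, exactly as you claim, and step~5 finishes the argument.
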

We use \Cref{lem:adj_monad} twice in this section:
to obtain
the base-change morphisms \((\Sigma_u,\theta_u)\) in \Cref{prop:ol_T}
and the forgetful morphism \((d,\iota)\) in \Cref{prop:iota}.

For the remainder of this section,
we assume that 
$\cat{C}$ is a category with pullbacks. 
Let $T$ be a monad on $\cat{C}$, and
for each $f\colon X \to O$ in $\cat{C}$,
we fix a pullback of $Tf$ along $\eta^T_O$.
\newcommand{\domofslicenew}[3]{B_{#3}}
\begin{defi}
  For each object \(O\in \cat C\), we define an endofunctor
  $\slicemonadO{T}{O} \colon \cat C/O \to \cat C/O$.
  For an object \(f:X\to O\), the object \(\slicemonadO{T}{O}(f)\colon \domofslicenew{T}{O}{f} \to O\) is defined by the pullback of $Tf$ along $\eta_O$, as in the left diagram below.
  For a morphism $h\colon f \to g$ in $\cat{C}/O$, we define $\slicemonadO{T}{O}(h)$ to be the unique morphism induced by universality of the pullback for $\domofslicenew{T}{O}{g}$, as in the right diagram below.
  \begin{equation} \label{diag:pullback_eta_o}
    \xymatrix{
      \domofslicenew{T}{O}{f} \pullbackmark{0, 1}{1, 0} \ar[r]^{\slicemonadO{T}{O}f} \ar[d]_{\iota_f} &O \ar[d]^{\eta_O} \\
      TX \ar[r]^{Tf} &TO
    } \qquad
    \xymatrix{
      \domofslicenew{T}{O}{f} \ar[d]_{\iota_f} \ar@/^1em/[rr]^{\slicemonadO{T}{O}f} \ar@{.>}[r]_{\slicemonadO{T}{O}h}
      &\domofslicenew{T}{O}{g} \pullbackmark{0, 1}{1, 0} \ar[r]_{\slicemonadO{T}{O}g} \ar[d]_{\iota_g} &O \ar[d]^{\eta_O} \\
      TX \ar[r]^{Th} \ar@/_1em/[rr]_{Tf} &TY \ar[r]^{Tg} &TO
    }
  \end{equation}
\end{defi}

\begin{prop} \label{prop:ol_T}
  For each object $O \in \cat{C}$,
the endofunctor \(\slicemonadO{T}{O}\) carries a canonical monad structure.
Moreover, for each morphism \(u:O\to O'\), the post-composition functor
$\Sigma_u \coloneqq u\circ(\_):\cat C/O \to \cat C/O'$
extends to a monad morphism
$(\Sigma_u,\theta_u):(\cat C/O,\slicemonadO{T}{O}) \to (\cat C/O',\slicemonadO{T}{O'})$,
where \(\theta_u:\Sigma_u\slicemonadO{T}{O} \Rightarrow \slicemonadO{T}{O'}\Sigma_u\) is the natural transformation whose component at \(f:X\to O\) is induced
by universality of the pullback of $\eta_O'$ along $T(u \circ f)$.
\end{prop}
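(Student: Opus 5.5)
Our plan is to obtain both parts of the statement from one adjunction, and then read off the monad morphism from \Cref{lem:adj_monad}. Since $\cat{C}$ has pullbacks, we can take the pullback along $\eta_O$ to get a functor $\eta_O^*\colon \cat{C}/TO \to \cat{C}/O$, which is right adjoint to $\Sigma_{\eta_O}\colon \cat{C}/O \to \cat{C}/TO$. The monad $T$ induces a functor $T_{/O}\colon \cat{C}/O \to \cat{C}/TO$ sending $f\colon X\to O$ to $Tf\colon TX \to TO$. Our first attempt is to write $\slicemonadO{T}{O} = \eta_O^* \circ T_{/O}$. This functor is not a composite of an adjoint pair, however. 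So we instead use the adjunction between $\cat{C}/O$ and the Eilenberg--Moore-like category $\cat{C}^T/(TO,\mu_O)$ of $T$-algebras over the free algebra on $O$.

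\textbf{The adjunction.} The left adjoint $L_O$ sends $f\colon X\to O$ to the free algebra morphism $Tf\colon (TX,\mu_X)\to (TO,\mu_O)$. The right adjoint $R_O$ sends an algebra map $g\colon (A,a)\to (TO,\mu_O)$ to its pullback along $\eta_O$.

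\textbf{Checking the adjunction.} Maps $L_O f\to g$ over $TO$ are algebra maps $TX\to A$, which correspond to maps $k\colon X\to A$ in $\cat{C}$ with $g\circ a\circ Tk = Tf$ after precomposition with the unit. We need these to correspond exactly to maps $k$ with $g\circ k=\eta_O\circ f$. In one direction, $g\circ k = g\circ a\circ Tk\circ \eta_X = Tf\circ\eta_X=\eta_O f$ by naturality of $\eta$. Conversely, if $gk=\eta_O f$, then $g\circ a\circ Tk = \mu_O\circ Tg\circ Tk = \mu_O\circ T\eta_O\circ Tf = Tf$, using that $g$ is an algebra map and the unit law. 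This is exactly the universal property of the pullback $R_O g$. Hence $\slicemonadO{T}{O} = R_O L_O$ on objects and morphisms, and it acquires the canonical monad structure of an adjunction. The unit at $f$ is induced by $(\eta_X, f)$, and the multiplication is $R_O\varepsilon L_O$.

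\textbf{Monad morphism part.} For $u\colon O\to O'$, let $G_u\colon \cat{C}^T/TO\to \cat{C}^T/TO'$ be postcomposition with $Tu$, which is an algebra map. Then $L_{O'}\Sigma_u f = T(uf) = Tu\circ Tf = G_u L_O f$ strictly, and similarly on morphisms. Applying \Cref{lem:adj_monad} with $F=\Sigma_u$, $G=G_u$ gives a monad morphism $(\Sigma_u, \alpha L_O)$. It remains to identify $\alpha L_O$ with $\theta_u$. The mate $\alpha$ is the composite $\Sigma_u R_O \Rightarrow R_{O'}L_{O'}\Sigma_u R_O = R_{O'}G_u L_O R_O\Rightarrow R_{O'}G_u$. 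At $L_O f$ this is the map $B_f\to B_{u f}$ induced into the pullback of $\eta_{O'}$ along $T(uf)$ by the pair $(\iota_f, u\circ\slicemonadO{T}{O}f)$. To check this, we chase the unit and counit: the counit is $a\circ \iota$, which for a free algebra is $\mu\circ T\iota$, and precomposing with the unit gives back $\iota$. This is the $\theta_u$ of the statement, and by uniqueness into the pullback it is the same morphism.

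\textbf{Main obstacle.} We expect the main difficulty to be this bookkeeping: confirming that the equation $LF=GL'$ holds strictly, including on morphisms and with the fixed pullback choices, and identifying the mate concretely. Both reduce to uniqueness of maps into pullbacks, so we would verify them componentwise with the pullback universal property.
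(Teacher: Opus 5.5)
Your proposal is correct and follows essentially the same route as the paper. The monad comes from the sliced Eilenberg--Moore adjunction $\cat{C}/O \rightleftarrows \cat{C}^T/(TO,\mu_O)$, whose left adjoint applies $T$ and whose right adjoint pulls back along $\eta_O$; the monad morphism comes from \Cref{lem:adj_monad} applied to $\Sigma_u$ and to postcomposition with $Tu$ (the paper's $\Sigma_{F_T(u)}$). Your explicit checks of the adjunction and of the mate at $L_O f$ fill in details the paper leaves implicit, with one harmless slip: the counit at $g\colon (A,a)\to(TO,\mu_O)$ is $a\circ T\pi$, where $\pi$ is the pullback projection (giving $\mu_X\circ T\iota_f$ in the free case), not $a\circ\iota$.
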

We refer to
\cref{prop:appol_T} 
for the proof.

\newcommand{\Mnd}{\mathbf{Mnd}}
\begin{rem} \label{rem:indexed_ol_T}
  The result in \Cref{prop:ol_T} gives the following 
  $2$-categorical reformulation.
Define a functor $\Sl\colon \cat{C} \to \Cat$ by
$\mathrm{Sl}(O)\coloneqq \cat{C}/O$ and $\mathrm{Sl}(u)\coloneqq \Sigma_u$.
Then the assignment
$O \in \cat{C} \mapsto (\cat{C}/O,\slicemonadO{T}{O})$ 
with $(u\colon O \to O') \mapsto (\Sigma_u, \theta_u)$
defines
a functor
$\ol{T}\colon \cat{C} \to \Mnd$
whose composite with the forgetful $2$-functor $\Mnd \to \Cat$ is $\Sl$.
Here $\Mnd$ is the 2-category of monads and oplax monad morphisms.
\end{rem}
We sometimes omit the subscript $O$ in $\slicemonadO{T}{O}$ when it is clear from the context.

The forgetful functor $\dom\colon \cat{C}/O \to \cat{C}$
is also compatible with the monads $\slicemonad{T}{O}$ and $T$ as below,
see 
\Cref{ap:proof_iota} 
for the proof.
Please note that $\domofslice{T}{O}{f} = \domofslicenew{T}{O}{f}$ in \eqref{diag:pullback_eta_o} for each $f \in \cat{C}/O$. 
\begin{prop} \label{prop:iota}
  There is a monad morphism $(\dom, \iota)\colon (\cat{C}/O, \slicemonad{T}{O}) \to (\cat{C}, T)$
  given by the morphisms $(\iota_f)_{f \in \cat{C}/O}$ defined in the pullback square on the left in \eqref{diag:pullback_eta_o}.
\end{prop}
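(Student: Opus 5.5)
We plan to obtain \Cref{prop:iota} from \Cref{lem:adj_monad}, as announced after that lemma. The idea is to exhibit $\slicemonadO{T}{O}$ as the monad of an adjunction between slice categories. We then choose the functors in the lemma so that the mate becomes $\iota$.

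\textbf{The adjunction for $\slicemonadO{T}{O}$.} Consider the Eilenberg--Moore adjunction $L \dashv R\colon \cat{C}^T \to \cat{C}$. Slicing it over $O$ and over the free algebra $(TO,\mu_O)$ gives an adjunction
\[
L' \dashv R'\colon \cat{C}^T/(TO,\mu_O) \to \cat{C}/O .
\]
Here $L'(f\colon X\to O) = Tf\colon (TX,\mu_X)\to (TO,\mu_O)$. The right adjoint $R'$ sends an algebra map $g\colon (A,a)\to(TO,\mu_O)$ to the pullback of $g$ along $\eta_O$. This uses the pullbacks in $\cat{C}$ and the fact that $\eta_O$ is the unit of the free-forgetful adjunction at $O$.

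The adjunction itself follows from a short check. An algebra map $Tf\to g$ over $TO$ corresponds to a map $h\colon X\to A$ with $g\circ h=\eta_O\circ f$, by freeness. Such $h$ correspond, by the pullback property, to maps $f\to R'g$ in $\cat{C}/O$. Hence $R'L'(f)$ is exactly the pullback defining $\slicemonadO{T}{O}(f)$. The monad structure is the one fixed in \Cref{prop:ol_T}, presumably constructed this way in \cref{prop:appol_T}. If that appendix used another route, we would first check that the two unit and multiplication maps agree; this is routine via uniqueness into pullbacks.

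\textbf{Applying \Cref{lem:adj_monad}.} We take $\cat{C}'=\cat{C}/O$ and $\cat{D}'=\cat{C}^T/(TO,\mu_O)$ with the adjunction $L'\dashv R'$ above, and $\cat{C},\cat{D}=\cat{C}^T$ with $L\dashv R$. The functors are $F=\dom$ and $G$ the forgetful functor $\cat{C}^T/(TO,\mu_O)\to\cat{C}^T$. The condition $LF=GL'$ holds strictly, since both send $f\colon X\to O$ to $(TX,\mu_X)$. The lemma then yields a monad morphism $(\dom,\alpha L')\colon(\cat{C}/O,\slicemonadO{T}{O})\to(\cat{C},T)$.

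\textbf{Identifying the mate (main obstacle).} It remains to show that $(\alpha L')_f=\iota_f$. The mate is
\[
\alpha = R G\epsilon' \circ \eta_{\dom R'},
\]
that is, it is built from the unit of $L\dashv R$ and the counit $\epsilon'$ of $L'\dashv R'$. We evaluate it at $R'L'f = B_f$.
\begin{enumerate}
\item The component $\epsilon'_{L'f}\colon TB_f\to TX$ is the algebra map adjoint to the pullback projection $\iota_f\colon B_f\to TX$, so it equals $\mu_X\circ T\iota_f$.
\item Precomposing with $\eta_{B_f}$ and applying the monad unit law, we get $\mu_X\circ T\iota_f\circ\eta_{B_f}=\mu_X\circ\eta_{TX}\circ\iota_f=\iota_f$.
\end{enumerate}
So the mate is $\iota$, as required. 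The care needed in this step is in pinning down the counit $\epsilon'$ of the sliced adjunction correctly. Alternatively, a direct verification of the two monad-morphism equations using pullback uniqueness would also work as a fallback.
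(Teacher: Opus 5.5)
Your proposal is correct and matches the paper's proof: the paper also applies \Cref{lem:adj_monad} to the forgetful functors $\cat{C}/O \to \cat{C}$ and $\cat{C}^T/F_T(O) \to \cat{C}^T$, using the Eilenberg--Moore adjunction and its slice over $O$. You additionally compute the mate at $L'f$ as $\mu_X \circ T\iota_f \circ \eta_{B_f} = \iota_f$, a step the paper's one-line proof leaves implicit.
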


\section{Coalgebraic Belief Construction} \label{sec:belief_const}

We introduce our
coalgebraic belief construction, which transforms a coalgebra with an observation map into a coalgebra, called its belief coalgebra.
The construction is based on 
the coalgebraic
determinization (see~\Cref{sec:preliminaries})
together with 
the monads $\slicemonadO{T}{O}$ on slice categories
introduced in \Cref{sec:slicingMonads}.

We begin by introducing coalgebras equipped with observations.
Here, we adopt a wide subcategory $\cat{O}$ of $\cat{C}$, that is, a subcategory containing all objects of $\cat{C}$, to specify admissible observation morphisms.

\begin{defi}
  Let $\cat{C}$ be a cartesian category,
  $\cat{O}$ be a wide subcategory of $\cat{C}$,
  and $G$ be an endofunctor on $\cat{C}$.
  A \emph{partially observable $G$-coalgebra} (or \emph{PO coalgebra} for short) is a morphism 
  $\langle \delta, \obsv\rangle\colon S \to GS \times O$,
  and a morphism from $\langle \delta, \obsv\rangle\colon S \to GS \times O$ to $\langle \delta', \obsv'\rangle\colon S' \to GS' \times O'$
  is a pair of morphisms $f\colon S \to S'$ in $\cat{C}$ and $g\colon O \to O'$ in $\cat{O}$ such that $f\colon \delta \to \delta'$ is a $G$-coalgebra morphism and $g \circ \obsv = \obsv' \circ f$.

  For an object $I \in \cat{C}$, 
  an \emph{($I$-)pointed partially observable $G$-coalgebra} (or \emph{pointed PO coalgebra}) is 
  a pair $(i, c)$ of a morphism $i\colon I \to S$ and a PO coalgebra $c\colon S \to GS \times O$, often written as $(i, c)\colon I \to S \to GS \times O$.
  A morphism $(f, g)\colon (i, c) \to (i', c')$ between pointed PO coalgebras 
  is a morphism $(f, g)\colon c \to c'$ between PO coalgebras  such that 
  $f \circ i = i'$.
We write $\CoalgOcov{G}{}$ for the category of partially observable $G$-coalgebras and $\CoalgOcov{G}{I}$ for the category of $I$-pointed partially observable $G$-coalgebras. 
\end{defi}
We say that $(i, \langle \delta, \obsv\rangle)\colon I \to S \to GS \times O$ is \emph{fully observable} if $S = O$ and $\obsv = \id_S$.

\begin{asm}
  \label{asm:minimal}
  We fix the following data:
\begin{itemize}
  \item $\cat{C}$ is a category with finite products and pullbacks,
  \item $\cat{O}$ is a wide subcategory of $\cat{C}$, 
  \item 
  $F \colon \cat{C} \to \cat{C}$ is a
functor and
$T\colon \cat{C} \to \cat{C}$ is a monad,
  \item $\lambda \colon TF \Rightarrow FT$ is a distributive law,
  \item $I$ is an object of $\cat{C}$.
\end{itemize}
\end{asm}

Throughout the section, 
we work
under \Cref{asm:minimal} and restrict our attention to $I$-pointed partially observable $FT$-coalgebras $(i, \langle \delta, \obsv\rangle)\colon I \to S \to FT S \times O$.

  For instance, 
our framework specializes to
POMDPs 
by taking 
$\cat{C} = \Set$,
$F = (\_ \times \mathbb{R}_{\geq 0})^A$, and $T = \mathcal{D}$
where $A$ is the set of actions and $\mathcal{D}$ is the finite distribution monad.
Although our framework assumes deterministic observations $S \to O$,
standard POMDPs with stochastic observations can be encoded by expanding the state space; see \cite[Remark 1]{DBLP:journals/ai/ChatterjeeCGK16}.

Our aim is to associate with each pointed PO coalgebra a pointed $FT$-coalgebra, called the belief coalgebra, whose state space is $\domofslice{T}{O}{\obsv}$, obtained by
the pullback of $\eta_O$ along $T(\obsv)$ as in~\eqref{diag:pullback_eta_o}.
The following is a key ingredient for the construction of the belief coalgebra.

\newcommand{\flatten}{\mathsf{flat}}
\begin{defi}[belief decomposition] \label{def:alpha}
For each object $O\in\cat{O}$, 
define 
$\flatten^O$ to be the natural transformation $\mu\, \dom \circ T\iota\colon T \dom \slicemonadO{T}{O} \Rightarrow T \dom$ between functors of the type $\cat{C}/O \to \cat{C}$.
A \emph{belief decomposition on $T$} consists of a family of natural
transformations
$\alpha^O\colon T \dom \Rightarrow T \dom \slicemonadO{T}{O}$
for each $O\in\cat{O}$ such that 
(i) each $\alpha^O$ is a section of $\flatten^O$, i.e.~$\flatten^O \circ \alpha^O = \id_{T \dom}$,
and (ii) for each $u\colon O\to O'$ in $\cat{O}$,
the following equality of natural transformations from $\cat{C}/O$ to $\cat{C}$ holds:
$T \dom_{O'}\theta_u \circ \alpha^{O} = \alpha^{O'} \Sigma_u\colon T\dom_O \Rightarrow \dom_{O'} \slicemonadO{T}{O'}\Sigma_u$, where we use $\dom_{O'} \Sigma_u  = \dom_O$.
\end{defi}

Allowing observation objects to vary reflects the functoriality
of the slice-category construction in the observation object.
Accordingly,
Condition~(ii) requires
the family of belief decompositions to be coherent 
under admissible changes of observation objects,
see \Cref{ap:indexed-belief-decomp} for details.
In all examples, we take $\cat{O}$ to be the wide subcategory of $\cat{C}$ consisting of all monomorphisms.

  When no confusion arises, we omit the superscripts of $\alpha^O$ and $\flatten^O$.
  We now fix a belief decomposition $\alpha$, which in turn induces the following coalgebraic belief construction.
\begin{defi}[coalgebraic belief construction] \label{def:bel}
Let
$(i, \langle \delta,\obsv\rangle) \colon I \to S \to FTS \times O$
be a pointed PO coalgebra,
and write $c$ for $\langle \delta,\obsv\rangle$.
We define its \emph{belief coalgebra} by
\[
\Bel(i, \langle \delta,\obsv\rangle) \coloneqq \big(\eta^{\slicemonad{T}{O}}_{\obsv} \circ i, \langle c^\Bel, \slicemonad{T}{O}(\obsv)\rangle\big) \colon I  \to \domofslice{T}{O}{\obsv} \to FT(\domofslice{T}{O}{\obsv}) \times O
\]
where 
the transition part $c^\Bel$ is defined by the following composite:
\begin{align*}
c^\Bel
&\coloneqq \domofslice{T}{O}{\obsv} \xrightarrow{\iota_{\obsv}} TS \xrightarrow{\Det(\delta)} FTS \xrightarrow{F\alpha_{\obsv}} FT(\domofslice{T}{O}{\obsv}).
\end{align*}
\end{defi}
  Here, following the convention introduced at the beginning of \Cref{sec:slicingMonads},
  $\eta^{\slicemonad{T}{O}}_{\obsv}$ denotes the underlying morphism in $\cat{C}$ of the unit morphism  in $\cat{C}/O$.

The belief coalgebra is obtained by combining the determinization
$\Det(\delta)\colon TS \to FTS$
with 
the inclusion morphism
$\iota_\obsv\colon \domofslice{T}{O}{\obsv} \to TS$ and
the belief decomposition
$\alpha_{\obsv} \colon TS \to T(\domofslice{T}{O}{\obsv})$.
More precisely, the morphism $\iota_{\obsv} \colon \domofslice{T}{O}{\obsv} \to TS$
first forgets the observation-indexed structure, then $\Det(\delta)$ applies the
usual coalgebraic determinization, and finally $\alpha_{\obsv}$ reorganizes the
resulting $T$-structured state according to the observation map $\obsv$.

The belief construction defined above extends to a functor on the category of pointed PO coalgebras, as follows. See 
\Cref{ap:proof_bel} 
for the proof.
\begin{prop} \label{prop:bel}
The assignment $(i, c) \mapsto \Bel(i, c)$ extends to a
functor \[\Bel \colon \CoalgOcov{FT}{I} \to \CoalgOcov{FT}{I}\]
by mapping
a morphism
$(f, g) \colon (i, \langle \delta,\obsv\rangle) \to (i', \langle \delta',\obsv'\rangle)$
to
$\left(
\dom\slicemonad{T}{O'}(f) \circ \dom(\theta_g)_{\obsv},
\, g
\right)$,
where the first component is as below:
\[
\domofslice{T}{O}{\obsv} = \dom \Sigma_g \slicemonad{T}{O}(\obsv) \xrightarrow{\dom (\theta_g)_\obsv} \dom \slicemonad{T}{O'}\Sigma_g(\obsv) \xrightarrow{\dom\slicemonad{T}{O'}(f)} \domofslice{T}{O'}{\obsv'}.\]
\end{prop}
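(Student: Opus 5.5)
Write $h \coloneqq \dom\slicemonad{T}{O'}(f) \circ \dom(\theta_g)_{\obsv}$. The proof has three parts: show $(h,g)$ is a morphism of pointed PO coalgebras, then check identities, then composition. Throughout, $h$ is the underlying morphism of the composite $\slicemonad{T}{O'}(f)\circ(\theta_g)_{\obsv}\colon \Sigma_g\slicemonad{T}{O}(\obsv)\to \slicemonad{T}{O'}(\obsv')$ in $\cat{C}/O'$. Here $f\colon \Sigma_g\obsv\to\obsv'$ is a morphism in $\cat{C}/O'$ because $g\circ\obsv=\obsv'\circ f$.

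\textbf{Observation condition.} Since $h$ underlies a morphism in $\cat{C}/O'$, we get $\slicemonad{T}{O'}(\obsv')\circ h = g\circ\slicemonad{T}{O}(\obsv)$ immediately.

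\textbf{Pointedness.} We need $h\circ\eta^{\slicemonad{T}{O}}_{\obsv}\circ i=\eta^{\slicemonad{T}{O'}}_{\obsv'}\circ i'$. The unit axiom of the monad morphism $(\Sigma_g,\theta_g)$ from \cref{prop:ol_T} gives $(\theta_g)_{\obsv}\circ\Sigma_g\eta_{\obsv}=\eta_{\Sigma_g\obsv}$. Naturality of $\eta$ of $\slicemonad{T}{O'}$ at $f$ then yields $h\circ\eta_{\obsv}=\eta_{\obsv'}\circ f$, and we conclude with $f\circ i=i'$.

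\textbf{Transition condition.} This is the main step: we must show $c'^{\Bel}\circ h = FTh\circ c^{\Bel}$. I will chase the composite $\iota,\Det,F\alpha$ in three stages.
\begin{enumerate}
\item $\iota_{\obsv'}\circ h = Tf\circ\iota_{\obsv}$. This holds because both $\dom\slicemonad{T}{O'}(f)$ and $(\theta_g)_{\obsv}$ are induced by pullback universality and commute with the $\iota$ legs: $\iota_{\obsv'}\circ\dom\slicemonad{T}{O'}(f)=Tf\circ\iota_{\Sigma_g\obsv}$, and $\iota_{\Sigma_g\obsv}\circ\dom(\theta_g)_{\obsv}=\iota_{\obsv}$. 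Equivalently, this is naturality of $\iota$ from \cref{prop:iota} together with the compatibility of $\theta$ with $\iota$.
\item $\Det(\delta')\circ Tf = FTf\circ\Det(\delta)$. This is the standard fact that $Tf$ is an $F$-coalgebra morphism between determinizations whenever $f$ is a coalgebra morphism; it follows from naturality of $\lambda$ and $\mu$.
\item $F\alpha^{O'}_{\obsv'}\circ FTf = FTh\circ F\alpha^{O}_{\obsv}$. This is where the belief decomposition axioms enter, and I expect it to be the only nonroutine step. First, naturality of $\alpha^{O'}$ at the slice morphism $f\colon\Sigma_g\obsv\to\obsv'$ gives $\alpha^{O'}_{\obsv'}\circ Tf = T\dom\slicemonad{T}{O'}(f)\circ\alpha^{O'}_{\Sigma_g\obsv}$; this uses $\dom_{O'}\Sigma_g\obsv=S$. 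Second, condition (ii) of \cref{def:alpha}, applied for $g\in\cat{O}$, gives $\alpha^{O'}_{\Sigma_g\obsv}=T\dom(\theta_g)_{\obsv}\circ\alpha^O_{\obsv}$. Composing the two gives $Th\circ\alpha^O_{\obsv}$. Note that condition (i) is not needed here, and $g\in\cat{O}$ is exactly what makes (ii) applicable.
\end{enumerate}

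\textbf{Functoriality.} For identities, $\theta_{\id}=\id$ holds by uniqueness of the pullback-induced map, and $\slicemonad{T}{O}(\id)=\id$. For composition of $(f,g)$ and $(f',g')$, we need
\[
\slicemonad{T}{O''}(f')\circ(\theta_{g'})\circ\Sigma_{g'}\bigl(\slicemonad{T}{O'}(f)\circ\theta_g\bigr)=\slicemonad{T}{O''}(f'f)\circ\theta_{g'g}.
\]
To prove it, use naturality of $\theta_{g'}$ to move $\slicemonad{T}{O'}(f)$ past it, and then the coherence $\theta_{g'g}=\theta_{g'}\Sigma_{g'}\circ\Sigma_{g'}\theta_g$ from the functoriality in \cref{rem:indexed_ol_T}. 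This coherence can also be checked directly, since both sides are maps into a pullback that agree after composing with $\iota$ and with the map to $O''$.
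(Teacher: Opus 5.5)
Your proposal is correct and follows the paper's proof essentially step for step: pointedness from the unit axiom of $(\Sigma_g,\theta_g)$, the observation condition from $h$ being a slice morphism, and the same chain $FTh\circ c^{\Bel} = F\alpha_{\obsv'}\circ FTf\circ\Det(\delta)\circ\iota_{\obsv} = F\alpha_{\obsv'}\circ\Det(\delta')\circ Tf\circ\iota_{\obsv} = {c'}^{\Bel}\circ h$, justified by naturality of $\alpha^{O'}$ plus condition (ii), functoriality of determinization, and $\iota_{\obsv'}\circ h = Tf\circ\iota_{\obsv}$. You additionally check identities and composition, which the paper leaves implicit; one typo there: the coherence should read $\theta_{g'g} = \theta_{g'}\Sigma_{g}\circ\Sigma_{g'}\theta_g$, not $\theta_{g'}\Sigma_{g'}$, as your displayed component-wise equation already correctly reflects.
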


\begin{exa}[nondeterminism]
  \label{ex:nondet}
We use partially observable nondeterministic transition systems as a running example throughout the paper.
 Let $T$ be the  nonempty finite powerset monad $\pefunc$ on $\Set$, and $F \coloneqq (\_) + \{\checkmark\}$, where $\checkmark \, (\not \in S)$ denotes the designated terminal state. 
We define a \emph{distributive law} 
$\lambda\colon \pefunc((\_) + \{\checkmark\}) \rightarrow \pefunc(\_) + \{\checkmark\}$ by, for each $X \in \Set$,
$\lambda_X (U) \coloneqq \checkmark$ if $U = \{\checkmark\}$, and $\lambda_X (U) \coloneqq U\cap X$ otherwise. 
We then define a \emph{belief decomposition} $\{\alpha^O_f\colon (\pefunc \dom)(f) \rightarrow (\pefunc\dom\ol{\pefunc})(f)\}_{f \colon X \to O}$ by
 \[
   \alpha^O_f(U) \coloneqq \big\{ f^{-1}(o) \cap U \, \big| \,  o\in O\text{ s.t. }f^{-1}(o)\cap U\not = \emptyset\big \}, \text{ for any $U\in \pefunc(X)$.}
 \]
See
\cref{ap:proofAlphaNondet} 
for the proof that $\alpha$ is indeed a belief decomposition.
 Given a pointed PO coalgebra $(i, \langle \delta,\obsv\rangle) \colon I \to S \to \big(\pefunc(S) + \{\checkmark\}\big) \times O$, 
 its \emph{belief coalgebra} $\Bel(i, \langle \delta,\obsv\rangle)\colon I \rightarrow  \domofslice{\pefunc}{O}{\obsv} \to \big(\pefunc(\domofslice{\pefunc}{O}{\obsv}) + \{\checkmark\}\big) \times O$ is induced by the distributive law and the belief decomposition, where 
 the state object $\domofslice{\pefunc}{O}{\obsv}$ is the set of $U \in \pefunc(S)$ such that there exists $o \in O$ with $U \subseteq \obsv^{-1}(o)$, and
 the transition part $c^\Bel$ is given by 
 \begin{align*}
  c^\Bel(U) = \begin{cases}
    \big\{ V_U \cap \obsv^{-1}(o)  \, \big|\,  o\in O\big\} \setminus \{\emptyset\} &\text{ if }V_U\not = \emptyset,\\
    \checkmark &\text{ otherwise, }
  \end{cases}
 \end{align*}
where $V_U \coloneqq \bigcup_{s \in U \text{ s.t.}~\delta(s) \neq \checkmark}\delta(s)$.

Intuitively, a state of its belief coalgebra is a set of states of the original coalgebra that share the same observation.
Its nondeterministic transition is defined over partitions of all possible transitions, where the partitions are induced by observations.
\end{exa}

\begin{rem}[the design choice for nondeterminism] \label{rem:design_choice}
  You may wonder whether we can define $T$ and $F$ by the powerset monad $T\coloneqq \pfunc$ and $F\coloneqq \id$, respectively. 
  This design choice is not suitable for us since it does not satisfy an assumption we will make in~\Cref{sec:correctness}
  for the correctness of the belief construction.
\end{rem}

We conclude this section with two basic properties of the belief construction.
We first consider the fully observable case, and show that the belief construction
does not produce new states up to canonical isomorphism.
We then make precise how the belief construction is related to ordinary coalgebraic
determinization.
See 
\Cref{ap:proof_bel_id} 
for the proof.
\begin{prop} \label{prop:bel_delta_id}
For every $FT$-coalgebra $\delta \colon S \to FTS$, one has
$\Bel(\id_S, \langle \delta, \id_S\rangle) \cong (\id_S, \langle \delta, \id_S\rangle)\colon S \to S \to FTS \times S$.
\end{prop}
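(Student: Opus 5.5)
We need to show that, when $\obsv = \id_S$, the belief coalgebra is isomorphic to the original one. Here $O = S$, and the state object is $B_{\id_S}$, the pullback of $T\id_S = \id_{TS}$ along $\eta_S$. Since pulling back along an identity gives the other leg, we may take $B_{\id_S} \cong S$ with $\iota_{\id_S} = \eta_S$ and $\slicemonadO{T}{S}(\id_S) = \id_S$. If the fixed pullback is a different choice, the canonical comparison isomorphism $\phi\colon S \to B_{\id_S}$ satisfies $\iota_{\id_S}\circ\phi = \eta_S$ and $\slicemonadO{T}{S}(\id_S)\circ\phi = \id_S$. The candidate isomorphism of pointed PO coalgebras is $(\phi, \id_S)$, and $\id_S$ lies in $\cat{O}$ because $\cat{O}$ is a wide subcategory. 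Equivalently, we may check that $(\phi^{-1},\id_S)$ is a morphism from $\Bel(\id_S,\langle\delta,\id_S\rangle)$ to $(\id_S,\langle\delta,\id_S\rangle)$.

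The observation and pointing components are immediate. The observation component is $\slicemonadO{T}{S}(\id_S)\circ\phi = \id_S$. For the point, the unit $\eta^{\slicemonad{T}{S}}_{\id_S}$ is the morphism into $B_{\id_S}$ induced by $\eta_S$ and $\id_S$, which is exactly $\phi$. (By the construction in \Cref{prop:ol_T} via \Cref{lem:adj_monad}, the unit of $\slicemonadO{T}{O}$ at $f$ is induced by $\eta_X$ and $f$, using naturality $Tf\circ\eta_X = \eta_O\circ f$. This is also consistent with \Cref{prop:iota}, which gives $\iota\circ\eta^{\ol T} = \eta$.) Hence $\eta^{\slicemonad{T}{S}}_{\id_S}\circ\id_S = \phi$, as required.

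The main step is the transition component. We must show $c^\Bel\circ\phi = FT\phi\circ\delta$, that is,
\[
F\alpha_{\id_S}\circ F\mu_S\circ\lambda_{TS}\circ T\delta\circ\eta_S = FT\phi\circ\delta .
\]
Naturality of $\eta$ gives $T\delta\circ\eta_S = \eta_{FTS}\circ\delta$. The unit axiom of the distributive law gives $\lambda_{TS}\circ\eta_{FTS} = F\eta_{TS}$, and the monad law gives $\mu_S\circ\eta_{TS} = \id$. So $\Det(\delta)\circ\eta_S = \delta$, and it remains to prove $\alpha_{\id_S} = T\phi$ as maps $TS\to TB_{\id_S}$.

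This identification of the belief decomposition is the expected obstacle, and it uses only the section condition (i) of \Cref{def:alpha}. Since $\iota_{\id_S}\circ\phi = \eta_S$, we get $\flatten_{\id_S} = \mu_S\circ T\iota_{\id_S}$, so $\flatten_{\id_S}\circ T\phi = \mu_S\circ T\eta_S = \id_{TS}$. Moreover $T\phi$ is an isomorphism, so $\flatten_{\id_S} = (T\phi)^{-1}$ is itself an isomorphism. Condition (i) says $\alpha_{\id_S}$ is a section of $\flatten_{\id_S}$, and a section of an isomorphism is its inverse, hence $\alpha_{\id_S} = T\phi$. Substituting gives $c^\Bel\circ\phi = FT\phi\circ\delta$, so $(\phi,\id_S)$ is a morphism of pointed PO coalgebras. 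Its inverse $(\phi^{-1},\id_S)$ is also a morphism, because the equations transport along the isomorphism. This yields the claimed isomorphism.
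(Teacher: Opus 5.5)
Your proof is correct and takes essentially the same route as the paper. The paper works with the inverse $f = \slicemonadO{T}{S}(\id_S)$ of your $\phi$ and derives $Tf \circ \alpha_{\id_S} = \id_{TS}$ from the section condition (i), which is equivalent to your $\alpha_{\id_S} = T\phi$; it then concludes with $\Det(\delta)\circ\eta_S = \delta$ as you do, while your explicit check that the point is preserved (via $\eta^{\ol{T}}_{\id_S} = \phi$) is a detail the paper leaves implicit.
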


\begin{prop} \label{prop:bel_flat}
For each pointed PO coalgebra 
$(i, \langle \delta,\obsv\rangle) \colon I \to S \to FTS \times O$, where $c = \langle \delta,\obsv\rangle$,
one has
$F(\flatten_\obsv) \circ c^\Bel
=
\Det(\delta) \circ \iota_{\obsv}$.
\end{prop}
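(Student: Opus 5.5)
We unfold the definition of $c^\Bel$ from \Cref{def:bel} and reduce the claim to the section property of the belief decomposition. By definition, $c^\Bel = F\alpha_{\obsv} \circ \Det(\delta) \circ \iota_{\obsv}$, where $\alpha_\obsv$ abbreviates the component $\alpha^O_{\obsv}\colon TS \to T(\domofslice{T}{O}{\obsv})$ of the belief decomposition at the object $\obsv\colon S \to O$ of $\cat{C}/O$. Hence
\[
F(\flatten_\obsv) \circ c^\Bel = F(\flatten_\obsv) \circ F(\alpha_\obsv) \circ \Det(\delta) \circ \iota_\obsv = F(\flatten_\obsv \circ \alpha_\obsv) \circ \Det(\delta) \circ \iota_\obsv,
\]
using only functoriality of $F$.

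The key step is condition~(i) of \Cref{def:alpha}: $\flatten^O \circ \alpha^O = \id_{T\dom}$ as natural transformations $\cat{C}/O \Rightarrow \cat{C}$. Evaluated at the object $\obsv$, this reads $\flatten_\obsv \circ \alpha_\obsv = \id_{T\dom(\obsv)} = \id_{TS}$. Substituting gives $F(\id_{TS}) \circ \Det(\delta) \circ \iota_\obsv = \Det(\delta) \circ \iota_\obsv$, as required.

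There is no real obstacle. The only care needed is bookkeeping: $\flatten_\obsv = \mu_S \circ T\iota_\obsv\colon T(\domofslice{T}{O}{\obsv}) \to TS$ is the component of the natural transformation $\mu\,\dom \circ T\iota$ at $\obsv$, with $\dom(\obsv) = S$. Its domain therefore matches the codomain of $F\alpha_\obsv$, and both sides of the equation are morphisms $\domofslice{T}{O}{\obsv} \to FTS$. Condition~(ii) on coherence under change of observation object is not needed here.
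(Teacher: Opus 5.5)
Your proposal is correct and matches the paper's proof, which simply notes that the claim is immediate from condition~(i) of \Cref{def:alpha}. You unfold $c^\Bel = F\alpha_\obsv \circ \Det(\delta) \circ \iota_\obsv$, use functoriality of $F$, and apply $\flatten_\obsv \circ \alpha_\obsv = \id_{TS}$, which is exactly that argument written out.
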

\begin{proof}
This is immediate from the first condition in \Cref{def:alpha}.
\end{proof}
Thus, after collapsing the observation-wise decomposition via $F(\flatten_\obsv)$, 
the belief transition agrees with the ordinary
determinization along the morphism
$\iota_{\obsv}$.

\section{Semantics of Pointed Partially Observable Coalgebras}
\label{sec:semBeliefCoal}
One key property of the belief construction is that it preserves semantics,
such as the maximal expected reward in the case of POMDPs.
To make this statement precise, 
we first introduce a scheduler-based semantics of pointed PO coalgebras:
for each scheduler, one obtains a semantics of the coalgebra under the scheduler,
and the overall semantics is obtained by taking the join over all schedulers.

Accordingly, 
we restrict attention to 
partially observable $FT(\_)^A$-coalgebras 
\[(i, \langle \delta,\obsv\rangle)\colon I \to S \to (FTS)^A \times O,\]
where $A \in \cat{C}$ is a fixed object of actions.
Intuitively, a scheduler specifies a choice of action based on a history of observations, as we make precise in
the next subsections.

Before defining the semantics of pointed partially observable $FT(\_)^A$-coalgebras, 
we prepare several auxiliary notions.

\begin{defi}[ordered object~\cite{DBLP:journals/mscs/AguirreKK22}]
An object \(\Omega \in \cat{C}\) is an \emph{ordered object} if, for each object
\(X \in \cat{C}\), the hom-set \(\cat{C}(X,\Omega)\) is a complete lattice and
each precomposition map preserves arbitrary joins.
For an endofunctor \(F \colon \cat{C} \to \cat{C}\) and an ordered object \(\Omega\), an \(F\)-algebra
$\tau \colon F\Omega \to \Omega$
is \emph{monotone} if, for each object \(X \in \cat{C}\), the map $\tau \circ F(\_)\colon \cat{C}(X,\Omega) \to \cat{C}(FX,\Omega)$ is monotone.
\end{defi}
We write $\bot_X$ and $\leq_X$ (or simply $\bot$ and $\leq$) for the bottom element and the order on $\cat{C}(X,\Omega)$, respectively.

\begin{asm}
In the rest of the paper,
we assume~\cref{asm:minimal} and the following.
  \label{asm:semBelief}
  \begin{itemize}
    \item The category $\cat{C}$ is a cartesian closed category with countable coproducts.
    \item The functor $F$ and the monad $T$ are strong.
    We write their strengths as:
$\stl^F_{X,Y}\colon X \times FY \to F(X \times Y)$ and
$\stl^T_{X,Y}\colon X \times TY \to T(X \times Y)$ for each $X, Y \in \cat{C}$.
  \item The distributive law $\lambda$ satisfies $\lambda_{X \times Y} \circ T(\stl^F) \circ  \str^T = F\stl^T \circ \stl^F \circ X \times \lambda_{Y}$ for each $X, Y \in \cat{C}$.
  \item We fix an object $A \in \cat{C}$.
  \item We fix an ordered object $\Omega \in \cat{C}$ and a monotone algebra $\tau\colon FT\Omega \to \Omega$.
  \end{itemize}
\end{asm}
\newcommand{\nil}{\mathsf{nil}}
\newcommand{\cons}{\mathsf{cons}}

\noindent\textbf{Notation.}
We write
$(\_)^*$ and $(\_)^+$ for the endofunctors 
$\coprod_{n \in \mathbb{N}} (\_)^n$ and $\coprod_{n \in \mathbb{N}_{>0}} (\_)^n$ on $\cat{C}$, respectively,
and write 
$\nil$ for the canonical natural transformation $\id \Rightarrow (\_)^+$,
and $\cons_O\colon O \times O^* \to O^+$ for the canonical isomorphism for each $O \in \cat{C}$.
We write $f^\dagger$ for the adjoint transpose of a morphism $f$ under the relevant adjunction $(\_) \times X \dashv (\_)^X$ (for $X \in \cat{C}$). 
We write $\ev$ for the components of its counit, i.e.~evaluation morphisms, omitting subscripts when they are clear from the context.
For each product $X_1 \times \cdots X_n$, the $i$-th projection ($i \in \{1, \cdots, n\}$) is denoted by $\pi_i$.

\medskip

Under \Cref{asm:semBelief}, 
we obtain a strength $\str^{FT}$ of $FT$ given compositionally from the strength of $T$ and $F$.
Moreover,
the distributive law of $T$ over $F$ induces a distributive law of $T$ over $F(\_)^A$ by the following lemma.
Hence the construction of \Cref{sec:belief_const} applies to 
$(i, \langle \delta,\obsv\rangle) \colon I \to S \to (FTS)^A \times O$ simply by replacing $F$ with $F(\_)^A$.
\begin{lem}
  Let $F\colon \cat{C} \to \cat{C}$, $T$ be a strong monad on $\cat{C}$, and $\lambda\colon TF \Rightarrow FT$ be
  a  distributive law. 
  Then, there is a distributive law $\lambda'\colon T\circ (F(\_))^A \Rightarrow (F(\_))^A\circ T$ defined by 
  \[
    \lambda'_X \coloneqq  T((FX)^A) \xrightarrow{\stlift_{FX, A}} (TFX)^A \xrightarrow{(\lambda_X)^A} (FTX)^A,   
  \]
  where $\stlift_{FX, A}$ is the canonical map induced by the strength $\stl^T$. 
\end{lem}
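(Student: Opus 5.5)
We prove the lemma in two layers. First, show that $\stlift$ is a distributive law of the monad $T$ over the exponential functor $(\_)^A$. Second, compose it with $\lambda$. For the first step, define $\stlift_{X,A}\colon T(X^A) \to (TX)^A$ as the transpose
\[
\stlift_{X,A} \coloneqq \big(T(X^A)\times A \xrightarrow{\cong} A \times T(X^A) \xrightarrow{\stl^T} T(A\times X^A) \xrightarrow{T(\ev\circ\cong)} TX\big)^\dagger .
\]
Naturality in $X$ follows from naturality of $\stl^T$ and of $\ev$, taken dinaturally in $A$ with $A$ fixed.

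The unit law $\stlift\circ\eta_{X^A} = (\eta_X)^A$ is checked after transposing. Transposing gives the composite $T(\ev)\circ\stl^T\circ(\eta\times A)$ up to the symmetry. The strength axiom $\stl^T\circ(A\times\eta)=\eta$ reduces this to $T(\ev)\circ\eta = \eta\circ\ev$, which is the transpose of $(\eta_X)^A$.

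The multiplication law $\mu^A\circ\stlift_{T}\circ T\stlift = \stlift\circ\mu_{X^A}$ is also checked after transposing, applying $(\_)\times A$ and postcomposing $\ev$. The right-hand side becomes $T\ev\circ\stl^T\circ(\mu\times A)$. The strength axiom $\stl^T\circ(A\times\mu)=\mu\circ T\stl^T\circ\stl^T$ rewrites this as $\mu\circ T(T\ev\circ\stl^T)\circ\stl^T$. On the left-hand side, we use $\ev\circ(\stlift\times A) = T\ev\circ \stl^T$, which is the definition of transpose, together with naturality of $\stl^T$. Unfolding the two occurrences of $\stlift$ then produces the same expression.

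Second, we use the standard fact that distributive laws compose: given $\kappa\colon TG\Rightarrow GT$ and $\lambda\colon TF\Rightarrow FT$, the composite $\lambda' = G\lambda\circ\kappa F\colon TGF\Rightarrow GFT$ is a distributive law. Here $G=(\_)^A$, $\kappa=\stlift$, and $\lambda'_X=(\lambda_X)^A\circ\stlift_{FX,A}$, which matches the statement.

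The unit law follows from $\stlift_{FX}\circ\eta_{(FX)^A} = (\eta_{FX})^A$ and $(\lambda_X\circ\eta_{FX})^A = (F\eta_X)^A$. For the multiplication law we expand $(F\mu_X)^A\circ\lambda'_{TX}\circ T\lambda'_X$ and then:
\begin{itemize}
\item move $T((\lambda_X)^A)$ past $\stlift_{TFX}$ using naturality of $\stlift$, giving $(\lambda_{TX})^A\circ(T\lambda_X)^A\circ\stlift\circ T\stlift$;
\item apply the multiplication law of $\lambda$ inside $(\_)^A$, using functoriality of $(\_)^A$;
\item apply the multiplication law of $\stlift$ from the first step, obtaining $(\lambda_X)^A\circ\stlift\circ\mu_{(FX)^A}$.
\end{itemize}

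The main obstacle is the multiplication law for $\stlift$ in the first step. It requires careful bookkeeping of symmetries, strengths and transposes. I would do it entirely at the level of transposed maps $T(\cdot)\times A\to TX$, invoking only the strength axioms and naturality of $\stl^T$ and $\ev$. Everything else is formal naturality chasing.
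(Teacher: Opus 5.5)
The paper states this lemma without proof, so there is no argument of its own to compare against. Your proof is correct and is the standard argument the paper implicitly relies on. First, $\stlift$ is a distributive law of $T$ over $(\_)^A$ by the unit and multiplication axioms of $\stl^T$; your transposed computations check out, with both sides of the multiplication law transposing to $\mu\circ T(T\ev\circ\stl^T)\circ\stl^T$ up to the symmetry. Second, composing $\stlift$ with $\lambda$ in the form $G\lambda\circ\kappa F$ yields exactly $\lambda'$.

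One small remark on the naturality of $\stlift$: only naturality of $\ev$ in $X$ for fixed $A$ is needed, and dinaturality in $A$ plays no role.
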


\begin{definition}
For an $FT$-coalgebra $c\colon S \to FTS$, define a monotone map
$\Phi_{c} \colon \cat{C}(S,\Omega) \to \cat{C}(S,\Omega)$
by
$\Phi_{c}(f) := \tau \circ FTf \circ c$.
Then the semantics of an $I$-pointed $FT$-coalgebra $(i, c)\colon I \to S \to FTS$ is given as
$\bigvee_{n \in \nat} \Phi^n_c(\bot_S) \circ i$.
\end{definition}

\begin{prop}\label{prop:Phi-coalg-morphism}
If $f\colon c \to c'$ is an $FT$-coalgebra morphism, then
$\Phi_c^n(\bot_S) = \Phi_{c'}^n(\bot_{S'}) \circ f$
for each $n \in \mathbb{N}$.
Consequently, for each 
morphism $f\colon (i, c) \to (i', c')$ between $I$-pointed coalgebras,
the semantics of $(i, c)$ is equal to that of $(i', c')$, that is,
\[
\bigvee_{n \in \nat} \Phi^n_c(\bot_S) \circ i
= \bigvee_{n \in \nat} \Phi^n_{c'}(\bot_{S'}) \circ i'.
\]
\end{prop}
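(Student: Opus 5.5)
The first claim goes by induction on $n$, and the second follows by taking joins and composing with $i$. For $n = 0$ we need $\bot_S = \bot_{S'} \circ f$. This holds because $\Omega$ is an ordered object: precomposition $(\_)\circ f\colon \cat{C}(S',\Omega) \to \cat{C}(S,\Omega)$ preserves arbitrary joins, in particular the empty join, so it sends $\bot_{S'}$ to $\bot_S$.

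For the inductive step, assume $\Phi_c^n(\bot_S) = \Phi_{c'}^n(\bot_{S'}) \circ f$ and write $g \coloneqq \Phi_{c'}^n(\bot_{S'})$. Then
\[
\Phi_c^{n+1}(\bot_S) = \tau \circ FT(g \circ f) \circ c = \tau \circ FTg \circ FTf \circ c = \tau \circ FTg \circ c' \circ f = \Phi_{c'}^{n+1}(\bot_{S'}) \circ f,
\]
using functoriality of $FT$ and the coalgebra morphism equation $c' \circ f = FTf \circ c$. This step is pure equational rewriting, and monotonicity of $\tau$ is not needed here.

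For the second claim, let $f\colon (i,c) \to (i',c')$ be a morphism of $I$-pointed coalgebras, so $f \circ i = i'$. Precomposition with $f$ preserves arbitrary joins, hence
\[
\Big(\bigvee_n \Phi_{c'}^n(\bot_{S'})\Big) \circ f = \bigvee_n \big(\Phi_{c'}^n(\bot_{S'}) \circ f\big) = \bigvee_n \Phi_c^n(\bot_S),
\]
where the last equality is the first claim. Precomposing with $i$ and using $f \circ i = i'$ gives the stated equality of semantics. Precomposition with $i$ also preserves joins, so it does not matter whether the join is taken before or after composing with $i$.

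There is no real obstacle. The only points needing care are reading the join $\bigvee_{n}\Phi^n_c(\bot_S)\circ i$ as $(\bigvee_n \Phi^n_c(\bot_S))\circ i$, which is harmless since precomposition preserves joins, and justifying the base case through preservation of the empty join.
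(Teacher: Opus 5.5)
Your proof is correct and matches the paper's route: the paper only says the first claim is ``easily proved by induction on $n$'' and that the second follows immediately. Your base case via preservation of the empty join under precomposition, your inductive step via $c' \circ f = FTf \circ c$, and your use of $f \circ i = i'$ for the pointed statement fill in exactly those omitted details.
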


\begin{proof}
The first statement can be easily proved by induction on $n$.
The second statement follows immediately from the first.
\end{proof}

\begin{exa}[nondeterminism]
  \label{ex:nondet2}
  We continue the example shown in Ex.~\ref{ex:nondet}. 
  We define the ordered object $\Omega$ to be $\bset$, where $\bset \coloneqq\{\btrue, \bfalse\}$ is the Boolean domain with the standard total order $\bfalse <\btrue$.
  The order on $\Set(X,\bset)$ is given by the pointwise order.  
  We define a monotone algebra $\tau\colon \pefunc(\bset) + \{\checkmark\}\rightarrow \bset$ by 
    $\tau(S) \coloneqq \land S\text{ for $S\in \pefunc(\bset)$, and } \tau(\checkmark) = \btrue$. 
  Given a coalgebra $c\colon S \to \pefunc(S) + \{\checkmark\}$, the monotone map $\Phi_{c} \colon \cat{C}(S,\bset) \to \cat{C}(S,\bset)$
  is 
  \begin{align*}
    \Phi_{c}(f)(s) = \begin{cases*}
        \quad \bfalse \quad \text{ if there exists $s' \in c(s) \cap S$ s.t.~$f(s') = \bfalse$},\\
        \quad \btrue \quad \text{ otherwise.}
    \end{cases*} 
  \end{align*}
  For an $I$-pointed coalgebra $(i,c)\colon I \to S \to \pefunc(S) + \{\checkmark\}$,
the semantics
$\bigvee_{n \in \nat} \Phi_c^n(\bot_S) \circ i \colon I \to \bset$
is then given by
$\bigl(\bigvee_{n \in \nat} \Phi_c^n(\bot_S) \circ i\bigr)(x)=\btrue$
if and only if all paths from $i(x)$ in $c$ terminate
(equivalently, reach the designated state $\checkmark$).
\end{exa}

\newcommand{\upd}{\mathrm{upd}}
\newcommand{\crt}{\mathrm{crt}}

We then introduce two scheduler-based semantics for pointed PO coalgebras. 
In \Cref{subsec:schSemantics}, we define a semantics
by carrying schedulers in the state space, and in \Cref{subsec:histSemantics},
we give an equivalent formulation based on state histories.

\subsection{Semantics based on Coalgebras Carrying Schedulers} \label{subsec:schSemantics}
For an object of observations $O$,
a \emph{scheduler} is a morphism $u\colon O^+ \to A$ in $\cat{C}$, assigning an action to each nonempty history of observations.
One may also present schedulers as morphisms
$O \times (A\times O)^* \to A$.
In the deterministic setting, however, these two presentations are equivalent,
because the past actions are recursively determined by the past observations.

As the history grows, the
scheduler is consumed accordingly, and the remaining choice mechanism can be carried along
as part of the state. This yields an \(FT\)-coalgebra on \(S \times A^{O^\ast}\).

\begin{defi} \label{def:pol}
For a PO coalgebra
$c = \langle \delta,\obsv\rangle\colon S \to (FTS)^A \times O$,
we define an $FT$-coalgebra
$\Pol(c)\colon S \times A^{O^*} \to FT(S \times A^{O^*})$
as the composite
\[
S \times A^{O^*}
\cong S \times A \times A^{O^+} \!
\xrightarrow{\delta^\dagger \times \id} \!
FTS \times A^{O^+} \!
\xrightarrow{\str^{FT}} \!
FT(S \times A^{O^+}) \!
\xrightarrow{FT\langle \pi_1, \ev \circ (\obsv \times \id)\rangle}\!
FT(S \times A^{O^*}).
\]

Let $\CoalgOctr{(FT(\_))^A}$ be the category of partially observable $(FT(\_))^A$-coalgebras and morphisms that are contravariant in the observation part; that is, a morphism from $\langle \delta,\obsv\rangle \colon S \to (FTS)^A \times O$ to $\langle \delta',\obsv'\rangle \colon S' \to (FTS')^A \times O'$ is a pair of morphisms $f\colon S \to S'$ and $g\colon O' \to O$ such that  $\delta' \circ f = (FTf)^A \circ \delta$ and $\obsv = g \circ \obsv' \circ f$.

The construction defined in \Cref{def:pol} extends to a functor
\[
\Pol\colon \CoalgOctr{(FT)^A}
\to
\Coalg{FT}
\]
with $\Pol(f, g) \coloneqq f \times A^{g^*}$
where 
$A^{g^*}\colon A^{O^*} \to A^{O'^*}$ is the morphism induced by $g$.
\end{defi}

An objective $\Vc{i, c}$ of an $I$-pointed $FT(\_)^A$-coalgebra $(i, c)$ is defined as
the join of the semantics of the $I$-pointed $FT$-coalgebra $(\langle \id, (u \circ \cons_O)^\dagger \circ \obsv\rangle \circ i, \Pol(c))\colon I \to S \times A^{O^*} \to FT(S \times A^{O^*})$ for all $u \colon O^+ \to A$:
\[
\Vc{i, c}
\coloneqq
\bigvee_{u\colon O^+ \to A,\; n\in\mathbb{N}}
\Phi_{\Pol(c)}^n(\bot)\circ \langle \id, (u \circ \cons_O)^\dagger \circ \obsv\rangle \circ i
\;\colon\;
I \to \Omega.
\]

\begin{exa}[nondeterminism]
  \label{ex:nondetOb}
  We continue Ex.~\ref{ex:nondet2}. 
  Consider a coalgebra $c = \langle \delta,\obsv\rangle\colon S \to (\pefunc(S) + \{\checkmark\})^A \times O$
  and a scheduler $u\colon O^{+}\rightarrow A$.
  For each $s \in S$, let $\tilde u_s\colon O^* \to A$ be $\lambda \vec{o}.u(\obsv(s) \vec{o})$
  where $\obsv(s) \vec{o}$ is the sequence given by prepending $\obsv(s)$ to $\vec{o}$.
  Then the composite
  $\Phi_{\Pol(c)}^n(\bot)\langle \id, (u \circ \cons)^\dagger \circ \obsv \rangle\colon S\rightarrow \bset$ maps $s \in S$ to
  \begin{align*}
    \Phi_{\Pol(c)}^n(\bot)(s, \tilde u_s) = \begin{cases}
      \btrue &\text{if under $u$, all paths from $s$ reach $\checkmark$ in $c$ within $n$ steps,}\\
      \bfalse &\text{otherwise.}
    \end{cases}
  \end{align*}
  Consequently, given an initial element $x\in I$, the objective $\Vc{i, c}$ is given by 
  \begin{align*}
    \Vc{i, c}(x) =  \begin{cases*}
      \btrue &\text{if under some $u\colon O^{+}\rightarrow A$, all paths from $i(x)$ eventually reach $\checkmark$ in $c$,}\\
      \bfalse &\text{otherwise.}   
    \end{cases*}
  \end{align*}
\end{exa}

\newcommand{\ext}{\mathsf{ext}}
\subsection{Semantics based on State-History Coalgebras} \label{subsec:histSemantics}

We now give an equivalent presentation of the objective $\Vc{i, c}$ in which
the state-history is stored explicitly in the state space.

\begin{defi} \label{def:pospol}
For a PO coalgebra 
$c = \langle \delta,\obsv\rangle \colon S \to (FTS)^A \times O$
and a morphism $u\colon O \to A$,
we define an $FT$-coalgebra
$c_u\colon S \to FT(S)$
to be
$\ev \circ \langle \delta, u \circ \obsv\rangle$.
\end{defi}

Let
$\ext_S$ be the canonical concatenation morphism $S^+ \times S\to S^+$ 
(or $S^+ \times S^* \to S^+$),
and
$\last\colon S^+ \to S$ be the morphism sending a sequence to its last element.

\begin{defi}\label{def:history-coalg}
Let 
$(i, \langle \delta, \obsv\rangle) \colon I \to S \to (FTS)^A \times O$ be a pointed PO coalgebra, and $c$ be the second part $\langle \delta,\obsv\rangle$.
We define 
a pointed PO coalgebra 
$(\nil \circ i, 
\Hist(c))\colon I \to S^+ \to \big(FT(S^+)\big)^A \times O^+$
where $\Hist(c) \coloneqq \langle f, \obsv^+\rangle$ and $f$ is the composite
\[
S^+
\xrightarrow{\langle \id,\,
\delta \circ \last\rangle}
S^+ \times (FTS)^A
\xrightarrow{(\stl \circ (\id_{S^+} \times \ev))^\dagger}
\big(FT(S^+ \times S)\big)^A
\xrightarrow{FT(\ext)^A}
\big(FT(S^+)\big)^A.
\]
\end{defi}
\newcommand{\Hc}[1]{H_{#1}^*}
An objective $\Hc{i, c}$ of $c$ can then be defined as
the join of the semantics of the $I$-pointed $FT$-coalgebra $(\nil \circ i, \Hist(c)_u)\colon I \to S^+ \to FT(S^+)$ for all $u \colon O^+ \to A$:
\[
\Hc{i, c} \coloneqq
\bigvee_{u \colon O^+ \to A,\; n \in \mathbb{N}}
\Phi_{\Hist(c)_u}^{n}(\bot) \circ \nil_S \circ i,
\]
which is equal to $\Vc{i, c}$ as shown below. See 
\Cref{ap:proof_history_semantics} 
for the proof.

\begin{prop}\label{prop:history-semantics}
  The equation
$\Vc{i, c}
= \Hc{i, c}$ holds.
\end{prop}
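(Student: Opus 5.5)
The plan is to prove the equation scheduler by scheduler. For each fixed $u\colon O^+ \to A$ I will build an $FT$-coalgebra morphism from the history coalgebra $\Hist(c)_u$ to the scheduler-carrying coalgebra $\Pol(c)$ that also respects the initial points. Then \Cref{prop:Phi-coalg-morphism} gives, for every $n$,
\[
\Phi^n_{\Hist(c)_u}(\bot)\circ \nil_S \circ i = \Phi^n_{\Pol(c)}(\bot)\circ \langle \id, (u\circ\cons_O)^\dagger\circ\obsv\rangle\circ i .
\]
Taking the join over all $n$ and all $u$ yields $\Hc{i,c} = \Vc{i,c}$, because both objectives are joins indexed by the same set of pairs $(u,n)$.

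The candidate morphism $h_u\colon S^+ \to S\times A^{O^*}$ sends a history $w$ to its current state together with the residual scheduler after the observations of $w$. Concretely, I take
\[
h_u \coloneqq \big\langle \last,\ (u\circ \ext_O\circ(\obsv^+\times \id_{O^*}))^\dagger\big\rangle ,
\]
where $\ext_O\colon O^+\times O^*\to O^+$ is concatenation. Set-theoretically this is $w\mapsto (\last(w), \lambda\vec o.\,u(\obsv^+(w)\vec o))$.

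Compatibility with the points is a direct check. We have $\last\circ\nil = \id$, and $\ext_O\circ(\nil\times\id) = \cons_O$. Hence $h_u\circ\nil_S\circ i = \langle \id,(u\circ\cons_O)^\dagger\circ\obsv\rangle\circ i$, using $\obsv^+\circ\nil_S = \nil_O\circ\obsv$ by naturality of $\nil$.

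Next I check the coalgebra morphism equation $\Pol(c)\circ h_u = FT(h_u)\circ\Hist(c)_u$, splitting each side into the action chosen and the update of the residual scheduler.
\begin{itemize}
\item \emph{Action.} $\Pol(c)$ applies the residual scheduler to the empty word, which gives $u(\obsv^+(w))$. This is exactly the action $u\circ\obsv^+$ used by $\Hist(c)_u$, via \Cref{def:pospol} applied to the PO coalgebra $\Hist(c)$ with observation $\obsv^+$.
\item \emph{Transition.} In both cases $\delta$ is applied to $\last(w)$ at that action.
\item \emph{Update.} On the $\Pol$ side the residual scheduler $v$ at successor $s'$ becomes $\lambda\vec o.\,v(\obsv(s')\vec o)$. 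On the $\Hist$ side the successor is $\ext(w,s')$, whose image under $h_u$ has residual scheduler $\lambda \vec o.\,u(\obsv^+(w)\obsv(s')\vec o)$. These agree by associativity of concatenation, and $\last(\ext(w,s')) = s'$.
\end{itemize}

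Categorically, I would decompose $S^+\times\ldots$ and push $h_u$ through the strengths. The key facts are naturality of $\str^{FT}$ in its first argument, the evaluation/transpose identities for the exponential, and the equations $\last\circ\ext_S = \pi_2$ and $\obsv^+\circ\ext_S = \ext_O\circ(\obsv^+\times\nil\circ\obsv)$.

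The main obstacle is this last categorical verification. The two composites thread the residual scheduler through $\str^{FT}$ and the transposes in different orders. I would first reduce both sides to the form $FT(k)\circ\str^{FT}\circ(\text{something}\times\delta^\dagger\ldots)$, using naturality of the strength. It then suffices to compare the two maps $k, k'\colon S^+\times S \to S\times A^{O^*}$ inside $FT$. Their equality is a pure exponential-transpose computation, which reduces to the associativity of $\ext_O$ stated above.
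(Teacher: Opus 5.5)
Your proposal is correct and follows the paper's proof essentially exactly. For each $u$ the paper builds the same pointed $FT$-coalgebra morphism $\cmp_u = \langle \last, \chi_u^\dagger\rangle$ with $\chi_u = u\circ\ext_O\circ(\obsv^+\times\id)$. It checks the point via $\last\circ\nil_S=\id$, naturality of $\nil$ and $\cons=\ext\circ(\nil\times\id)$, and verifies the coalgebra square by a diagram that splits $\chi_u^\dagger$ as $\langle u\circ\obsv^+, f\rangle$ under $A^{O^*}\cong A\times A^{O^+}$, before concluding with \Cref{prop:Phi-coalg-morphism}.
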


\newcommand{\cmp}{\mathrm{cmp}}

\section{Correctness of the Belief Construction} \label{sec:correctness}

We now turn to the main theorem of this paper: the correctness of the belief construction. We
show that the belief construction preserves the semantics of pointed PO coalgebras, in the sense that 
for each pointed PO coalgebra $(i, c)\colon I \to S \to (FTS)^A \times O$, 
the equation $\Vc{i, c}=\Vc{\Bel(i, c)}$ holds.

Throughout the section, we fix a pointed PO coalgebra $(i, \langle \delta, \obsv\rangle)\colon I \to S \to (FTS)^A \times O$ and write $c$ for $\langle \delta, \obsv\rangle$.

For a coalgebra $c\colon S \to GS$ of an endofunctor $G\colon \cat{C} \to \cat{C}$ and $n \in \mathbb{N}$,
we write
$c^n \colon S \to G^n S$
for the morphism defined inductively by $c^0 = \id_S$ and
$c^{n+1} := G(c^n)\circ c$.
We use the same notation for algebras $\tau\colon G\Omega\rightarrow \Omega$.
For each \(n \ge 1\), let
$\lambda^n \colon (FT)^n \Rightarrow F^nT^n$
denote the natural transformation obtained by repeatedly using the distributive law
\(\lambda\) to move all occurrences of \(T\) to the right.
\begin{lem} \label{lem:bel_pol}
  We have the following equalities: 
  \begin{enumerate}
    \item \label{item:bel_delta}
  $\big(\domofslice{T}{O}{\obsv} \times A \xrightarrow{(c^\Bel)^\dagger} FT(\domofslice{T}{O}{\obsv}) \xrightarrow{F(\flatten_\obsv)} FTS\big) \\ = \big(\domofslice{T}{O}{\obsv} \times A \xrightarrow{\iota_\obsv \times A} TS \times A \xrightarrow{\str^T} T(S \times A) \xrightarrow{T\delta^\dagger} TFTS \xrightarrow{\lambda_{TS}} FT^2 S \xrightarrow{F\mu_S^T} FTS \big)$.
    \item \label{item:bel_delta_eta} 
    $\big(S \times A \xrightarrow{\eta^{\slicemonad{T}{O}}_\obsv \times A} \domofslice{T}{O}{\obsv} \times A \xrightarrow{(c^\Bel)^\dagger} FT(\domofslice{T}{O}{\obsv}) \xrightarrow{F(\flatten_\obsv)} FTS\big) = \big(S \times A \xrightarrow{\delta^\dagger} FTS\big)$.
  \end{enumerate}
\end{lem}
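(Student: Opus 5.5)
Both items come from \Cref{prop:bel_flat}, applied with $F$ replaced by $F(\_)^A$, followed by adjoint transposition. The one place needing care is unfolding $\Det$ for $F(\_)^A$ in terms of the strength $\stl^T$.

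\textbf{Item~\ref{item:bel_delta}.} In this setting $\lambda'$ is the distributive law of the preceding lemma, so \Cref{prop:bel_flat} gives
\[(F\flatten_\obsv)^A \circ c^\Bel = \Det(\delta)\circ\iota_\obsv,\]
where $\Det(\delta) = (F\mu_S)^A\circ(\lambda_{TS})^A\circ \stlift_{FTS,A}\circ T\delta$.

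Transpose both sides under $(\_)\times A \dashv (\_)^A$. By naturality of transposition, the left side becomes $F(\flatten_\obsv)\circ (c^\Bel)^\dagger$, which is the left side of the claim. The right side becomes
\[F\mu_S\circ\lambda_{TS}\circ(\stlift_{FTS,A}\circ T\delta)^\dagger\circ(\iota_\obsv\times A).\]

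It then suffices to check that $(\stlift_{X,A}\circ Tg)^\dagger = Tg^\dagger\circ\str^T_{TY,A}$ for $g\colon Y\to X^A$ (with the strength oriented as $TY\times A\to T(Y\times A)$, using the symmetry of the product). This is the defining property of $\stlift$: it is the transpose of $T\ev\circ\str^T$. Combined with naturality of $\str^T$ in its first argument, this gives $(\stlift\circ Tg)^\dagger = T\ev\circ\str^T\circ(Tg\times A) = T(\ev\circ(g\times A))\circ\str^T = Tg^\dagger\circ\str^T$. Instantiating $g=\delta$ yields exactly the composite in item~\ref{item:bel_delta}.

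\textbf{Item~\ref{item:bel_delta_eta}.} Precompose item~\ref{item:bel_delta} with $\eta^{\ol T}_\obsv\times A$. By \Cref{prop:iota}, $(\dom,\iota)$ is a monad morphism, so its unit axiom gives $\iota_\obsv\circ\eta^{\ol T}_\obsv=\eta^T_S$. Hence the right side becomes
\[F\mu_S\circ\lambda_{TS}\circ T\delta^\dagger\circ\str^T\circ(\eta_S\times A).\]

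The remaining steps are standard identities:
\begin{itemize}
\item Strength compatibility with the unit gives $\str^T\circ(\eta\times A)=\eta_{S\times A}$.
\item Naturality of $\eta$ turns $T\delta^\dagger\circ\eta$ into $\eta_{FTS}\circ\delta^\dagger$.
\item The unit axiom of the distributive law gives $\lambda_{TS}\circ\eta_{FTS}=F\eta_{TS}$.
\item The monad law gives $F\mu_S\circ F\eta_{TS}=\id$.
\end{itemize}
Together these leave $\delta^\dagger$.
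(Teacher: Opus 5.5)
Your proof is correct and follows the paper's argument. For item~1 the paper unfolds $(c^\Bel)^\dagger$ directly and cancels $\flatten\circ\alpha=\id$, where you package that cancellation as \Cref{prop:bel_flat} and then transpose; both routes then use $T\ev\circ\str^T\circ(T\delta\times A)=T\delta^\dagger\circ\str^T$. For item~2 you use exactly the paper's chain of unit identities: \Cref{prop:iota}, strength--unit compatibility, the unit axiom of $\lambda$, and $\mu\circ\eta_T=\id$.
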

See 
\ifarxiv
\Cref{ap:proof_lem_correct_flat} 
\else
\cite{DBLP:journals/corr/abs-2604-25355}
\fi
for the proof.

The following lemma, especially its second statement, is a key result to show the correctness theorem.
A standard way to prove such a preservation result would be to construct a coalgebra morphism, between $\Pol(c)$ and $\Pol(\langle c^\Bel, \slicemonad{T}{O}\obsv\rangle)$.
However,
we do not expect such a morphism to exist in general.
Instead, under the following assumptions $(\star)$:
\begin{enumerate}
  \item 
  the monotone algebra $\tau\colon FT\Omega \to \Omega$ is of the form $\rho \circ F\sigma$ for some monotone algebra $\rho \colon F\Omega \to \Omega$ and monotone Eilenberg--Moore algebra $\sigma \colon T\Omega \to \Omega$ such that 
  $\sigma \circ T(\rho) = \rho \circ F(\sigma) \circ \lambda_\Omega$ and $\bot_{TX} = \sigma \circ T(\bot_X)$ for each $X \in \cat{C}$;
  \item 
  $\big(\domofslice{T}{O}{\obsv} \xrightarrow{\iota_\obsv} TS \xrightarrow{T\langle \id, \obsv \rangle} T(S \times O)\big) = \big(\domofslice{T}{O}{\obsv} \xrightarrow{\langle \iota_\obsv, \slicemonad{T}{O}\obsv\rangle} TS \times O \xrightarrow{\str^T} T(S \times O)\big)$,
\end{enumerate}
we show that 
the morphism 
 $\mu\circ T(\str^T\circ(\iota_{\obsv}\times \id))\colon  T(\domofslice{T}{O}{\obsv} \times A^{O^*}) \to T(S \times A^{O^*})$,
becomes compatible with the transition structures after an application of 
\(FT\) and postcomposition with \(F^2\mu\circ F\lambda\).
See 
\ifarxiv
\Cref{ap:proof_correct}
\else
\cite{DBLP:journals/corr/abs-2604-25355}
\fi
 for the proof.

\begin{lem} \label{lem:commute_delta_mu}
  \begin{enumerate}
    \item \label{item:delta_eta} $\Pol(c) = F(\mu \circ T(\str^T \circ \iota_\obsv \times \id)) \circ \Pol(\langle c^\Bel, \slicemonad{T}{O}\obsv \rangle) \circ (\eta^{\slicemonad{T}{O}}_\obsv \times \id)$.
    \item \label{item:delta_mu}
  Under the assumption ($\star$),
  the following equation between morphisms of type $FT(\domofslice{T}{O}{\obsv} \times A^{O^*}) \to F^2T(S \times A^{O^*})$ holds:
  \begin{align*}
&F^2\mu \circ F\lambda \circ FT(\Pol(c)) \circ F(\mu \circ T(\str^T \circ \iota_\obsv \times \id)) \\
&= F^2 \mu \circ F\lambda \circ FTF(\mu \circ T(\str^T \circ \iota_\obsv \times \id)) \circ FT(\Pol(\langle c^\Bel, \slicemonad{T}{O}\obsv \rangle)).
  \end{align*}
  \end{enumerate}
\end{lem}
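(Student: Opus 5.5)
Write $B \coloneqq \domofslice{T}{O}{\obsv}$ and $k \coloneqq \mu\circ T(\str^T\circ(\iota_\obsv\times\id))\colon T(B\times A^{O^*})\to T(S\times A^{O^*})$. Both items are proved by unfolding \Cref{def:pol} and pushing structure maps past each other using naturality, the monad laws, the strength axioms, the compatibility of $\lambda$ with strengths from \Cref{asm:semBelief}, and \Cref{lem:bel_pol}.

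\textbf{Item 1.} Precompose $\Pol(\langle c^\Bel,\slicemonad{T}{O}\obsv\rangle)$ with $\eta^{\slicemonad{T}{O}}_\obsv\times\id$. By naturality of the isomorphism $S\times A^{O^*}\cong S\times A\times A^{O^+}$ this becomes $(c^\Bel)^\dagger\circ(\eta^{\slicemonad{T}{O}}_\obsv\times A)\times\id$. Since $\slicemonad{T}{O}(\obsv)\circ\eta^{\slicemonad{T}{O}}_\obsv=\obsv$ (the unit is a morphism in $\cat{C}/O$), the observation component $\ev\circ(\obsv\times\id)$ is unchanged. Next, naturality of $\str^{FT}$ lets $F(\mu\circ T(\str^T\circ\iota_\obsv\times\id))$ absorb the step $F(\flatten_\obsv)$. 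Here we use $\flatten_\obsv=\mu\circ T\iota_\obsv$ together with the associativity of strength with $\mu$, namely $\mu\circ T\str^T\circ\str^T=\str^T\circ(\id\times\mu)$. Finally, \Cref{lem:bel_pol}\eqref{item:bel_delta_eta} replaces $F(\flatten_\obsv)\circ(c^\Bel)^\dagger\circ(\eta\times A)$ by $\delta^\dagger$, which yields exactly $\Pol(c)$.

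\textbf{Item 2.} Reduce to a statement about one $T$-layer by precomposing with $F(\eta)$-free generators. The cleanest route is to show the equation after removing the outer $F$, i.e. to prove the corresponding equation of maps $T(B\times A^{O^*})\to FT(S\times A^{O^*})$ of the form
\[
F\mu\circ\lambda\circ T(\Pol(c))\circ k \;=\; F\mu\circ\lambda\circ TF(k)\circ T(\Pol(\langle c^\Bel,\slicemonad{T}{O}\obsv\rangle)),
\]
and then apply $F$. (If this stronger form fails, keep the outer $F$ and work modulo it.) For the left side, first use naturality of $\mu$ and the distributive-law axiom $F\mu\circ\lambda_T\circ T\lambda=\lambda\circ\mu_F$ to rewrite $F\mu\circ\lambda\circ T\Pol(c)\circ\mu$ as $F\mu\circ\lambda\circ T(F\mu\circ\lambda\circ T\Pol(c))$. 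The task then becomes a pointwise identity on $B\times A^{O^*}$, namely
\[
F\mu\circ\lambda\circ T\Pol(c)\circ\str^T\circ(\iota_\obsv\times\id)\;=\;F(k)\circ\Pol(\langle c^\Bel,\slicemonad{T}{O}\obsv\rangle).
\]

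For the right side, unfold $\Pol(\langle c^\Bel,\ldots\rangle)$ and use \Cref{lem:bel_pol}\eqref{item:bel_delta} to replace $F(\flatten_\obsv)\circ(c^\Bel)^\dagger$ by $F\mu\circ\lambda\circ T\delta^\dagger\circ\str^T\circ(\iota_\obsv\times A)$. The strength/distributive-law compatibility from \Cref{asm:semBelief} then commutes $\str^{FT}$ past $\lambda$.

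The main obstacle is the observation component. On the belief side, the scheduler is advanced using the single observation $\slicemonad{T}{O}\obsv$ attached to the belief. On the original side, after $\str^T\circ(\iota_\obsv\times\id)$, each state in the support uses its own $\obsv$. Assumption ($\star$)(2) says exactly that these agree, since $T\langle\id,\obsv\rangle\circ\iota_\obsv=\str^T\circ\langle\iota_\obsv,\slicemonad{T}{O}\obsv\rangle$. I would insert $T\langle\id,\obsv\rangle$ explicitly, rewrite $\ev\circ(\obsv\times\id)$ as $\ev\circ(\pi_2\times\id)\circ\langle\id,\obsv\rangle\times\id$, and apply ($\star$)(2) together with naturality of strength to move the observation outside $T$.

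One caveat: the equation only holds once both sides are postcomposed with $F^2\mu\circ F\lambda$, since the layers of $T$ are only equal after flattening. I expect ($\star$)(1) not to be needed for this lemma itself. It should only be needed later, to transfer the equation through $\tau$.
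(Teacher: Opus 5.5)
Your treatment of Item~2 is sound, though organized differently from the paper (see the last paragraph). Your argument for Item~1, however, has a genuine gap.

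\textbf{The gap in Item~1.} You argue that, because $\ol{T}(\obsv)\circ\eta^{\ol{T}}_\obsv=\obsv$, the observation component is unchanged. That identity does not touch the relevant maps. In $\Pol(\langle c^\Bel,\ol{T}\obsv\rangle)$ the map $\ol{e}=\langle\pi_1,\ev\circ(\ol{T}\obsv\times\id)\rangle$ acts, under $FT$, on the \emph{successor} beliefs produced by $(c^\Bel)^\dagger$, i.e.\ on elements in the image of $\alpha_\obsv$. Precomposing with $\eta^{\ol{T}}_\obsv\times\id$ only changes the starting state.

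To replace $\ol{e}$ by $e=\langle\pi_1,\ev\circ(\obsv\times\id)\rangle$ you must push $F(\mu\circ T(\str^T\circ(\iota_\obsv\times\id)))$ past $FT(\ol{e})$. That requires
\[
\str^T\circ(\iota_\obsv\times\id)\circ\ol{e}=T(e)\circ\str^T\circ(\iota_\obsv\times\id)\colon B\times A^{O^+}\to T(S\times A^{O^*}),
\]
i.e.\ every state in a belief carries that belief's observation. This is the rightmost square of the paper's diagram for Item~1, and the paper justifies it from $(\star)$(2) at the start of its proof of Item~2.

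The identity fails in general. Take the free abelian group monad on $\Set$ and states with $\obsv(x_1)\neq\obsv(x_2)=\obsv(x_3)$. The element $\delta_{x_1}+\delta_{x_2}-\delta_{x_3}$ lies in $B$ with observation $\obsv(x_1)$, and a belief decomposition can output it. The two sides of Item~1 then attach different residual schedulers to $x_2$ and $x_3$.

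So Item~1 also rests on $(\star)$(2), which is harmless for \Cref{thm:belief-correctness}. The obstacle you correctly isolate under Item~2 already arises here. Your other two steps match the middle squares of the paper's diagram: absorbing $F(\mathsf{flat}_\obsv)$ via the strength--$\mu$ compatibility, then applying \Cref{lem:bel_pol}(2).

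\textbf{Item~2: a different route.} The paper runs one long chain from the right-hand side. It applies the identity above, absorbs $\mathsf{flat}_\obsv$, uses \Cref{lem:bel_pol}(1), and then interleaves the strength/$\lambda$ compatibility with the multiplication axiom of $\lambda$. You instead apply naturality of $\mu$ and that axiom first, reducing everything to the one-step identity
\[
F\mu\circ\lambda\circ T\Pol(c)\circ\str^T\circ(\iota_\obsv\times\id)=F(k)\circ\Pol(\langle c^\Bel,\ol{T}\obsv\rangle)
\]
on $B\times A^{O^*}$. This follows from the same ingredients, including the displayed identity above.

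This reduction is correct, and the outer $F$ can indeed be dropped. It is also more modular. Precompose the one-step identity with $\eta^{\ol{T}}_\obsv\times\id$ and use $\iota_\obsv\circ\eta^{\ol{T}}_\obsv=\eta_S$, $\str^T\circ(\eta\times\id)=\eta$, $\lambda\circ\eta_F=F\eta$ and $\mu\circ\eta_T=\id$; this gives Item~1 immediately. That is the cleanest repair of the gap.

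The sentence about ``$F(\eta)$-free generators'' corresponds to no actual step and should be removed.
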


\begin{thm}\label{thm:belief-correctness}
Under the assumption $(\star)$,
the equation $\Vc{i, c}=\Vc{\Bel(i, c)}$ holds.
\end{thm}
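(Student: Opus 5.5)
We prove the two inequalities $\Vc{i,c} \le \Vc{\Bel(i,c)}$ and $\Vc{\Bel(i,c)} \le \Vc{i,c}$ by comparing the approximants $\Phi^n$ of the two $FT$-coalgebras $\Pol(c)$ and $\Pol(c^{\Bel}_{\mathrm{PO}})$, where $c^{\Bel}_{\mathrm{PO}} \coloneqq \langle c^\Bel, \slicemonad{T}{O}\obsv\rangle$. Both objectives range over the same schedulers $u\colon O^+\to A$, since the belief coalgebra has observation object $O$. So it suffices to show, for each fixed $u$, that the two pointed $FT$-coalgebras have the same semantics. Their initial points are $\langle \id,(u\circ\cons_O)^\dagger\circ\obsv\rangle\circ i$ and $\langle \id,(u\circ\cons_O)^\dagger\circ\slicemonad{T}{O}\obsv\rangle\circ\eta^{\slicemonad{T}{O}}_\obsv\circ i$. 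Since $\slicemonad{T}{O}\obsv\circ\eta^{\slicemonad{T}{O}}_\obsv=\obsv$, the second equals $(\eta^{\slicemonad{T}{O}}_\obsv\times\id)$ composed with the first.

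Write $m\coloneqq\mu\circ T(\str^T\circ(\iota_\obsv\times\id))\colon T(\domofslice{T}{O}{\obsv}\times A^{O^*})\to T(S\times A^{O^*})$, and let $\sigma,\rho$ be as in $(\star)$. Let $\Psi^n_c\coloneqq\sigma\circ T\Phi^n_{\Pol(c)}(\bot)$, the extension of the $n$-th approximant to $T$-structured states. The core claim, proved by induction on $n$, is
\[
\sigma\circ T\Phi^n_{\Pol(c)}(\bot)\circ m \;=\; \sigma\circ T\Phi^n_{\Pol(c^{\Bel}_{\mathrm{PO}})}(\bot).
\]
This is an equality of maps $T(\domofslice{T}{O}{\obsv}\times A^{O^*})\to\Omega$.

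For the base case we use $\bot_{TX}=\sigma\circ T\bot_X$ together with the fact that precomposition preserves joins, hence bottoms.

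For the inductive step we unfold $\Phi^{n+1}(\bot)=\rho\circ F\sigma\circ FT\Phi^n(\bot)\circ\Pol(\cdot)$. We then use that $\sigma$ is an Eilenberg--Moore algebra ($\sigma\circ T\sigma=\sigma\circ\mu$) and the compatibility $\sigma\circ T\rho=\rho\circ F\sigma\circ\lambda_\Omega$. These rewrite $\sigma\circ T\Phi^{n+1}(\bot)$ into the form $\rho\circ F\sigma\circ F\mu\circ\ldots\circ F\lambda\circ FT\Pol(\cdot)$. At this point \Cref{lem:commute_delta_mu}(\ref{item:delta_mu}) exactly allows us to move $m$ across the transition structures. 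After that, naturality of $\lambda$ and $\mu$ together with the induction hypothesis close the step.

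Finally we precompose with $\eta^T$. Since $\sigma\circ\eta=\id$, the left side becomes $\Phi^n_{\Pol(c)}(\bot)\circ m\circ\eta$. We have $m\circ\eta = \eta\circ(\iota_\obsv\times\id)$ after using the strength and unit laws, so precomposing further with $(\eta^{\slicemonad{T}{O}}_\obsv\times\id)$ and using $\iota_\obsv\circ\eta^{\slicemonad{T}{O}}_\obsv=\eta^T_S$ gives $\Phi^n_{\Pol(c)}(\bot)$ on the original states. Alternatively, this last reduction can be obtained by using \Cref{lem:commute_delta_mu}(\ref{item:delta_eta}) once for the first step and then the claim. Taking joins over $n$ and $u$ then yields the equality. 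One subtlety: the first transition step is where the two sides are linked via (\ref{item:delta_eta}), so a cleaner formulation may be to prove the claim with one step of offset and apply (\ref{item:delta_eta}) for the first step.

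The main obstacle is the inductive step: threading $\sigma$ through $T\rho\circ TF\sigma$ to reach exactly the shape $F^2\mu\circ F\lambda\circ FT(-)$ required by \Cref{lem:commute_delta_mu}(\ref{item:delta_mu}). This bookkeeping of where $\mu$, $\lambda$ and the doubled $F$ sit needs care. I would first verify on the POMDP instance that the indexing (one extra $F$ layer) matches, and adjust the induction statement accordingly, for instance stating it with $\rho\circ F\sigma$ pre-applied.
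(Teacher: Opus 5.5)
Your proof is correct in substance and follows the paper's skeleton: an induction on approximants that transports along $m$. It uses the same ingredients: $\tau=\rho\circ F\sigma$, $\sigma\circ\mu=\sigma\circ T\sigma$, $\sigma\circ T\rho=\rho\circ F\sigma\circ\lambda_\Omega$, $\bot_{TX}=\sigma\circ T\bot_X$, and \Cref{lem:commute_delta_mu}.\ref{item:delta_mu}. What differs is the invariant. The paper's \eqref{eq:ast_n} unfolds to $\rho\circ F\bigl(\sigma\circ T\Phi^{n-1}_{\Pol(c)}(\bot)\circ m\bigr)=\rho\circ F\bigl(\sigma\circ T\Phi^{n-1}_{\Pol(\langle c^\Bel,\slicemonad{T}{O}\obsv\rangle)}(\bot)\bigr)$. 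Your claim is this statement with the outer $\rho\circ F(-)$ removed, so it is stronger. That gives you a one-line base case. It also lets you finish with unit laws alone, without \Cref{lem:commute_delta_mu}.\ref{item:delta_eta}; that item is in fact the unwrapped form of item 2 precomposed with $\eta\circ(\eta^{\slicemonad{T}{O}}_\obsv\times\id)$.

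The cost shows up in the inductive step. Rewriting $\sigma\circ T\Phi^{n+1}_{\Pol(c)}(\bot)\circ m$ gives $\rho\circ F\bigl(\sigma\circ T\Phi^{n}_{\Pol(c)}(\bot)\bigr)\circ F\mu\circ\lambda\circ T\Pol(c)\circ m$. This has a bare $\lambda$ and $T\Pol(c)$, not $F\lambda\circ FT\Pol(c)$ as you wrote. So you need \Cref{lem:commute_delta_mu}.\ref{item:delta_mu} without its outer $F$, which does not follow from the stated version unless $F$ is faithful. The unwrapped version is true, since every step of the paper's proof of that lemma is $F$ applied to an identity, but you must say so explicitly. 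Otherwise, fall back on your own suggestion of pre-applying $\rho\circ F\sigma$: that is exactly \eqref{eq:ast_n}, and you then use \Cref{lem:commute_delta_mu}.\ref{item:delta_eta} for the first step, as the paper does.

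Also repair the last step, because $\Phi^n_{\Pol(c)}(\bot)\circ m\circ\eta$ and $m\circ\eta=\eta\circ(\iota_\obsv\times\id)$ do not typecheck as written. The identity you need is $m\circ\eta\circ(\eta^{\slicemonad{T}{O}}_\obsv\times\id)=\str^T\circ(\eta_S\times\id)=\eta_{S\times A^{O^*}}$, using $\iota_\obsv\circ\eta^{\slicemonad{T}{O}}_\obsv=\eta_S$. Then $\sigma\circ\eta=\id$ yields $\Phi^n_{\Pol(c)}(\bot)=\Phi^n_{\Pol(\langle c^\Bel,\slicemonad{T}{O}\obsv\rangle)}(\bot)\circ(\eta^{\slicemonad{T}{O}}_\obsv\times\id)$.
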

\begin{proof}
  We prove that 
$\Phi_{\Pol(c)}^{n}(\bot)
=
\Phi_{\Pol(\langle c^\Bel, \slicemonad{T}{O}\obsv \rangle)}^{n}(\bot) \circ (\eta^{\slicemonad{T}{O}}_\obsv \times \id)$ for each $n \geq 1$.
  Write $\ol{T}S$ for the object $\domofslice{T}{O}{\obsv}$.
  By definition,
  $\Phi_{\Pol(c)}^{n}(\bot)
  =
  \tau^n\circ (FT)^n(\bot)\circ \Pol(c)^n = \tau^n\circ (FT)^n(\bot)\circ FT(\Pol(c)^{n-1}) \circ \Pol(c)$.
  Hence by \Cref{lem:commute_delta_mu}.\ref{item:delta_eta},
  it suffices to show 
  the following equality between 
  morphisms of type $FT(\ol{T}S \times A^{O^*}) \to \Omega$:
  \begin{align} 
  &\tau^{n} \circ (FT)^{n} \bot \circ FT(\Pol(c)^{n-1}) \circ F(\mu \circ T(\str^T \circ \iota_\obsv \times \id)) \notag \\
  &= \tau^{n} \circ (FT)^{n} \bot \circ FT(\Pol(\langle c^\Bel, \slicemonad{T}{O}\obsv \rangle)^{n-1}),\label{eq:ast_n}
  \end{align}
  for each $n \geq 1$.
  The full proof of this equality is deferred to 
  \ifarxiv
  \Cref{ap:diag_correctness_n}. 
\else
\cite{DBLP:journals/corr/abs-2604-25355}.
\fi
  Here we only give a proof sketch.
  We proceed by induction on $n$.
For the base case \(n=1\), the claim follows from the factorization assumption
\(\tau=\rho\circ F\sigma\), together with 
the equality
$\sigma\circ T\bot_X \circ \mu_X = \sigma\circ T\bot_{TX}$ for each $X \in \cat{C}$.
  The inductive step 
  uses
$\sigma \circ T\bot = \bot$, $\sigma \circ \mu = \sigma \circ T\sigma$,
and
  \Cref{lem:commute_delta_mu}.\ref{item:delta_mu}.
\end{proof}

\begin{exa}[nondeterminism]
  \label{ex:nondetAssumption}
  We continue~\cref{ex:nondetOb}. By~\cref{thm:belief-correctness}, it suffices to check the assumption ($\star$). 
  Set $\rho\colon \bset + \{\checkmark\} \rightarrow \bset$ and $\sigma\colon \pefunc(\bset)\rightarrow \bset$ by $\rho(b) = b$ for $b\in \bset$ and $\rho(\checkmark) = \btrue$, and $\sigma(U) = \land U$, respectively. 
  Clearly, we have $\tau = \rho \circ (\sigma + \{\checkmark\})$, and $\bot_{\pefunc X} = \sigma \circ \pefunc(\bot_X)$; note that the latter condition is not satisfied if we replace $\pefunc$ with $\pfunc$, and this is the reason why we employ this definition.  
  \ifarxiv
  See~\cref{ap:nondetAssump} 
\else
See~\cite{DBLP:journals/corr/abs-2604-25355}
\fi
  for the details of the remaining conditions. 
\end{exa}

\section{Comparing Semantics of Partially and Fully Observable Coalgebras}
\label{sec:reachable}
In this section, we compare the semantics of a pointed PO coalgebra and those of its fully observable counterpart, obtained by replacing the observation map with the identity.
We show that the semantics of a pointed PO coalgebra is always bounded above by that of its fully observable counterpart, 
and we give two sufficient conditions under which the two semantics coincide:
the first, given in \Cref{prop:partial-information-upper-bound}, is simpler, whereas the second, given in \Cref{prop:partial_upper_hist_new}, is weaker.

\begin{prop}\label{prop:partial-information-upper-bound}
For each pointed PO coalgebra 
$(i, \langle \delta,\obsv\rangle) \colon I \to S \to (FTS)^A \times O$,
it holds that
$\Vc{i, \langle \delta,\obsv\rangle}
\le
\Vc{i, \langle \delta,\id_S\rangle}$.
If $\obsv$ is split mono, then $\Vc{i, \langle \delta,\obsv\rangle}
=
\Vc{i, \langle \delta,\id_S\rangle}$ holds.
\end{prop}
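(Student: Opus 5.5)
The plan is to compare schedulers on the two sides via a coalgebra morphism between the scheduler-carrying coalgebras of \Cref{def:pol}, and then use \Cref{prop:Phi-coalg-morphism}.

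\textbf{The coalgebra morphism.} Write $c \coloneqq \langle \delta,\obsv\rangle$ and $c' \coloneqq \langle \delta,\id_S\rangle$. The pair $(\id_S,\obsv)$ is a morphism $c \to c'$ in $\CoalgOctr{(FT)^A}$. Indeed, $\delta \circ \id_S = (FT\id_S)^A \circ \delta$ trivially, and the observation condition reads $\obsv = \obsv \circ \id_S \circ \id_S$. Applying the functor $\Pol$ gives an $FT$-coalgebra morphism
\[
\id_S \times A^{\obsv^*}\colon \Pol(c) \to \Pol(c'),
\]
from $S \times A^{O^*}$ to $S \times A^{S^*}$.

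\textbf{Translating schedulers.} Fix a scheduler $u\colon O^+ \to A$ and set $w \coloneqq u \circ \obsv^+\colon S^+ \to A$. The key identity to verify is
\[
(\id_S \times A^{\obsv^*}) \circ \langle \id, (u\circ\cons_O)^\dagger\circ\obsv\rangle = \langle \id, (w \circ \cons_S)^\dagger \circ \id_S\rangle .
\]
By adjoint transposition, this reduces to
\[
u \circ \cons_O \circ (\obsv \times \obsv^*) = u \circ \obsv^+ \circ \cons_S,
\]
which is naturality of $\cons$ in the coproduct structure of $(\_)^+$ and $(\_)^*$. With this identity, \Cref{prop:Phi-coalg-morphism} applies: for every $n$, the $n$-th approximant of $\Vc{i,c}$ under $u$ equals the $n$-th approximant of $\Vc{i,c'}$ under $w$. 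Taking the join over $n$ and then over all $u$, each term on the left appears among the terms on the right. This gives $\Vc{i,c} \le \Vc{i,c'}$.

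\textbf{Split mono case.} Let $r\colon O\to S$ satisfy $r\circ\obsv=\id_S$. Given an arbitrary scheduler $w\colon S^+\to A$, put $u \coloneqq w\circ r^+$. Then
\[
u\circ\obsv^+ = w\circ (r\circ\obsv)^+ = w
\]
by functoriality of $(\_)^+$. So every scheduler for $c'$ arises from some scheduler for $c$ via the translation above. The two joins therefore range over the same family of values, and equality follows.

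\textbf{Main obstacle.} The only delicate step is checking the initial-point identity in a general cartesian closed category. This means unfolding $A^{\obsv^*}$ as precomposition with $\obsv^*$ under the exponential and using naturality of $(\_)^\dagger$ in the domain. Everything else is bookkeeping.
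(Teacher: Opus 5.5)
Your proof is correct, and for the inequality it is exactly the paper's argument: $(\id_S,\obsv)$ is a morphism in $\CoalgOctr{(FT)^A}$, $\Pol$ turns it into the $FT$-coalgebra morphism $\id_S\times A^{\obsv^*}$, \Cref{prop:Phi-coalg-morphism} transfers the approximants, and naturality $\cons\circ(\obsv\times\obsv^*)=\obsv^+\circ\cons$ matches the initial points under $u\mapsto u\circ\obsv^+$. For the split-mono case the paper instead applies $\Pol$ to the reverse morphism $(\id_S,r)\colon\langle\delta,\id_S\rangle\to\langle\delta,\obsv\rangle$ and repeats the computation, whereas you reuse the forward translation and show it is surjective via $w=(w\circ r^+)\circ\obsv^+$; both routes use the same scheduler $w\circ r^+$, and yours is slightly more economical.
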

This follows from the fact that \((\id_S,\obsv)\colon \langle \delta,\obsv\rangle \to
\langle \delta,\id_S\rangle\) is a morphism in the category \(\CoalgOctr{FT(\_)^A}\), together
with functoriality of \(\Pol\) and the preservation of the semantics under
\(FT\)-coalgebra morphisms shown in~\Cref{prop:Phi-coalg-morphism}. The split mono case is also proved similarly. The full
proof is given in
\ifarxiv
\Cref{ap:proof_fully_observable}.
\else
\cite{DBLP:journals/corr/abs-2604-25355}.
\fi

In \Cref{sec:semBeliefCoal}, we defined the semantics
\[
\Hc{i, c}
=
\bigvee_{u:O^+\to A,\;n\in\nat}
\Phi^n_{\Hist(c)_u}(\bot)\circ \nil_S\circ i
\]
as the join of the semantics of the pointed coalgebras $(\nil_S\circ i,\Hist(c)_u)$ for all $u\colon O^+ \to A$.
By \Cref{prop:Phi-coalg-morphism}, the semantics of a pointed $FT$-coalgebra does not change under pointed coalgebra morphisms. 
Hence, for each $u$, one may compute the
semantics of $(\nil_S \circ i,\Hist(c)_u)$ on any pointed subcoalgebra.
\begin{lem} \label{lem:subcoalg_u}
  For each morphism $(f, g)$ from $c\colon S \to (FTS)^A \times O$ to $c'\colon S' \to (FTS')^A \times O'$ in $\CoalgOcov{FT(\_)^A}{}$ and each morphism $u\colon O' \to A$, 
  the morphism $f$ is also a morphism $f\colon c_{u \circ g} \to c'_u$ in $\Coalg{FT}$.
\end{lem}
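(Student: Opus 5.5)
We verify the coalgebra-morphism equation $c'_u \circ f = FTf \circ c_{u\circ g}$ directly from the definitions, by unfolding both sides and using naturality of evaluation. By \Cref{def:pospol}, $c_{u\circ g} = \ev \circ \langle \delta, u \circ g \circ \obsv\rangle$ and $c'_u = \ev \circ \langle \delta', u \circ \obsv'\rangle$. The hypotheses are that $(f,g)$ is a morphism in $\CoalgOcov{FT(\_)^A}{}$, i.e.\ $\delta' \circ f = (FTf)^A \circ \delta$ and $g \circ \obsv = \obsv' \circ f$.

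First, we compute
\[
c'_u \circ f = \ev \circ \langle \delta' \circ f, u \circ \obsv' \circ f\rangle = \ev \circ \langle (FTf)^A \circ \delta, u \circ g \circ \obsv\rangle ,
\]
substituting both hypotheses into the two components of the pairing. Second, we use naturality of the counit of $(\_) \times A \dashv (\_)^A$: for any $h\colon Y \to Z$ one has $\ev_Z \circ (h^A \times \id_A) = h \circ \ev_Y$. Applying this with $h = FTf$ gives
\[
\ev \circ \langle (FTf)^A \circ \delta, u\circ g\circ \obsv\rangle = \ev \circ ((FTf)^A \times \id_A) \circ \langle \delta, u \circ g \circ \obsv\rangle = FTf \circ \ev \circ \langle \delta, u \circ g\circ\obsv\rangle = FTf \circ c_{u\circ g}.
\]
This is exactly the required equation, so $f$ is a morphism $c_{u\circ g} \to c'_u$ in $\Coalg{FT}$.

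There is no real obstacle here. The only point needing care is recognising that the action component $u \circ \obsv' \circ f$ must be rewritten as $u \circ g \circ \obsv$ using the observation condition, so that it matches the scheduler $u \circ g$ on the domain side. Once that is done, the rest is pure naturality of $\ev$.
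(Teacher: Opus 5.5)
Your verification is correct: unfolding $c'_u \circ f$, substituting $\delta' \circ f = (FTf)^A \circ \delta$ and $\obsv' \circ f = g \circ \obsv$, and then using naturality of $\ev$ gives exactly $FTf \circ c_{u\circ g}$. The paper states this lemma without proof, treating it as immediate from the definitions, and your computation is the routine check it leaves implicit.
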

The following proposition is based on the observation that, by
\Cref{lem:subcoalg_u} and \Cref{prop:Phi-coalg-morphism}, 
for each scheduler,
the history-based semantics 
may be computed on 
a suitable pointed subcoalgebra
of the corresponding resolved history coalgebra. 
See 
\ifarxiv
\Cref{ap:proof_partial_upper_hist_new} 
\else
\cite{DBLP:journals/corr/abs-2604-25355}
\fi
for the proof.

\begin{prop} \label{prop:partial_upper_hist_new}
  Let $(i, \langle \delta,\obsv\rangle) \colon I \to S \to (FTS)^A \times O$ be an $I$-pointed PO coalgebra.
  Suppose that, for each $u\colon S^+ \to A$, there exists an $I$-pointed $FT$-subcoalgebra 
  $m_u\colon (i_u, \delta_u) \rightarrowtail (\nil \circ i, \Hist(\langle \delta, \id\rangle)_u)$
  such that there is $f_u\colon O^+ \to S^+$ satisfying $f_u \circ \obsv^+ \circ m_u = m_u$.
  Then the equation $\Hc{i, \langle \delta, \obsv\rangle} = \Hc{i, \langle \delta, \id\rangle}$ holds.
\end{prop}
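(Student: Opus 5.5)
The plan is to prove the two inequalities separately. The inequality $\Hc{i, \langle \delta, \obsv\rangle} \le \Hc{i, \langle \delta, \id\rangle}$ needs no hypothesis. By \Cref{prop:history-semantics} we have $\Hc{i,c} = \Vc{i,c}$ for both coalgebras, and \Cref{prop:partial-information-upper-bound} gives $\Vc{i, \langle \delta,\obsv\rangle} \le \Vc{i, \langle \delta,\id_S\rangle}$. So all the work is in the reverse inequality. For that, it suffices to show the following: for each full-information scheduler $u\colon S^+ \to A$, there is an observation-based scheduler $v\colon O^+ \to A$ such that the semantics of $(\nil\circ i, \Hist(\langle\delta,\id\rangle)_u)$ equals the semantics of $(\nil \circ i, \Hist(\langle \delta,\obsv\rangle)_v)$. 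Taking joins over $u$ on the left and bounding by the join over all $v$ on the right then gives $\ge$.

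Fix $u$, and take $m_u$ and $f_u$ as in the hypothesis. I define $v \coloneqq u \circ f_u\colon O^+ \to A$. The key observation is that $\Hist(\langle\delta,\obsv\rangle)$ and $\Hist(\langle\delta,\id\rangle)$ have literally the same transition component $f\colon S^+ \to (FT(S^+))^A$ of \Cref{def:history-coalg}. The transition component depends only on $\delta$; the two coalgebras differ only in their observation parts, $\obsv^+$ versus $\id^+ = \id_{S^+}$. Hence, by \Cref{def:pospol},
\[
\Hist(\langle\delta,\obsv\rangle)_v = \ev \circ \langle f, u\circ f_u \circ \obsv^+\rangle, \qquad \Hist(\langle\delta,\id\rangle)_u = \ev\circ\langle f, u\rangle .
\]
Precomposing with $m_u$ and using the hypothesis $f_u\circ \obsv^+\circ m_u = m_u$ gives $u\circ f_u\circ\obsv^+\circ m_u = u\circ m_u$. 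Therefore
\[
\Hist(\langle\delta,\obsv\rangle)_v\circ m_u = \ev\circ\langle f\circ m_u, u\circ m_u\rangle = \Hist(\langle\delta,\id\rangle)_u \circ m_u = FT(m_u)\circ \delta_u ,
\]
where the last equality holds because $m_u$ is an $FT$-coalgebra morphism into $\Hist(\langle\delta,\id\rangle)_u$. Since $m_u\circ i_u = \nil\circ i$, the same $m_u$ is therefore a pointed coalgebra morphism from $(i_u,\delta_u)$ into both $(\nil\circ i, \Hist(\langle\delta,\id\rangle)_u)$ and $(\nil\circ i, \Hist(\langle\delta,\obsv\rangle)_v)$.

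Applying \Cref{prop:Phi-coalg-morphism} twice, both pointed coalgebras have the same semantics as $(i_u,\delta_u)$, so they have the same semantics as each other. Taking the join over $n$ and $u$ yields $\Hc{i,\langle\delta,\id\rangle} \le \Hc{i,\langle\delta,\obsv\rangle}$, which completes the argument. I do not expect a real obstacle. The only point requiring care is the rewriting of $\Hist(\cdot)_v\circ m_u$, which uses the naturality $\langle g,h\rangle\circ m = \langle g\circ m, h\circ m\rangle$ and the fact that the transition component of $\Hist$ is independent of the observation map; both should be checked directly against \Cref{def:history-coalg}.
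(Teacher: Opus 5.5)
Your proof is correct and is essentially the paper's argument. Both get $\le$ from \Cref{prop:history-semantics} and \Cref{prop:partial-information-upper-bound}, and get $\ge$ by taking the observation-based scheduler $u\circ f_u$, showing that $m_u$ is also a pointed coalgebra morphism into $\Hist(\langle\delta,\obsv\rangle)_{u\circ f_u}$, and applying \Cref{prop:Phi-coalg-morphism}. The only difference is presentational: the paper splits your one-line computation into ``$m_u$ is a subcoalgebra of $\Hist(\langle\delta,\id\rangle)_{u\circ f_u\circ\obsv^+}$'' plus \Cref{lem:subcoalg_u} applied to $(\id,\obsv^+)$, whereas your direct calculation never needs to regard $(\id,\obsv^+)$ as a morphism with $\obsv^+$ in $\cat{O}$.
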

Note that the split mono condition in \Cref{prop:partial-information-upper-bound} implies the condition in \Cref{prop:partial_upper_hist_new}:
if $\obsv$ is split mono, then $\obsv^+$ is also, and hence $m_u \coloneqq \id_{(\nil \circ i, \Hist(\langle \delta, \id \rangle)_u)}$ satisfies the condition.

\begin{rem}[Reachable coalgebras] \label{rem:reachable-part}
  When $\cat{C} = \Set$, 
  the equality $\Hc{i, \langle \delta, \obsv\rangle} = \Hc{i, \langle \delta, \id\rangle}$ in \Cref{prop:partial_upper_hist_new} may be established pointwise in $x \in I$.
  Indeed, regarding $x \in I$ as a morphism $x \colon 1\to I$, we have $\Hc{i, c}(x) = \Hc{i \circ x, c}(*)$,
  where $* \in 1$ is the unique element.
  Since joins in $\Set(I, \Omega)$ are computed pointwise, it therefore suffices, for each $x \in I$, to apply \Cref{prop:partial_upper_hist_new} to the $1$-pointed coalgebra whose initial point is $i \circ x\colon 1\to S$.

    For fixed $x \in I$ and $u\colon S^+ \to A$, 
    a canonical choice for the required subcoalgebra is 
  the
  \emph{reachable coalgebra}~\cite{DBLP:journals/corr/abs-1305-0576},
  a pointed coalgebra having no proper pointed subcoalgebra, whenever it exists.
A reachable subcoalgebra of a given pointed coalgebra is also called its \emph{reachable part}. 
This notion coincides with the usual reachable part of the graph induced by the given coalgebra when $\cat{C} = \Set$ and the functor under consideration preserves intersections~\cite{Wissmann2019}.
There are several constructions of a reachable subcoalgebra of a given pointed coalgebra in the literature.
For example, 
it is known that
if
$\cat{C}$ has intersections and
 a functor $G\colon \cat{C} \to \cat{C}$ preserves them,
 then each $G$-coalgebra has a unique reachable subcoalgebra,
 obtained as the intersection of all subcoalgebras~\cite{DBLP:journals/corr/abs-1305-0576}.
 Another construction of a reachable subcoalgebra is proposed in~\cite{Wissmann2019} via an iterative computation.

 In all concrete applications of \Cref{prop:partial_upper_hist_new} in this paper, we work pointwise in $x \in I$ and take the subcoalgebra to be the reachable part.
\end{rem}

These results are useful when comparing the semantics $\Vc{i, c}$ of a pointed PO coalgebra and that of the fully observable counterpart of the belief coalgebra.
\Cref{thm:belief-correctness}
yields the equality 
$\Vc{i, c} = \Vc{\Bel(i, c)}$,
where
the right-hand side is still formulated as the semantics of a pointed PO coalgebra.
By applying \Cref{prop:partial_upper_hist_new} to the belief coalgebra $\Bel(i, c)$, 
we obtain a sufficient condition 
to ensure that
the semantics of $\Bel(i, c)$ can be replaced by that of the coalgebra 
in the fully
observable setting:
$\Vc{i, c} = \Vc{\eta^{\slicemonad{T}{O}}_\obsv \circ i, \big\langle c^\Bel, \id\big\rangle}$.

\begin{exa}[nondeterminism]
In the case of nondeterministic systems, the reachable part (cf.~\Cref{rem:reachable-part}) satisfies the condition in \Cref{prop:partial_upper_hist_new}, since from an initial state, the mapping from paths over its history coalgebra under a scheduler $u\colon S^+ \to A$ to their observation sequences is injective by the definition of the belief construction.
Since the belief coalgebra induced by a finite-state partially observable nondeterministic transition system is a finite-state labelled transition system, and since deciding the existence of a scheduler that ensures termination is clearly decidable for fully observable finite-state labelled transition systems, it follows from~\cref{thm:belief-correctness} that the termination problem for partially observable nondeterministic systems is decidable as well; see
\ifarxiv
\cref{prop:decidable} in~\cref{sec:omitReachable}
\else
\cite[Prop.~46]{DBLP:journals/corr/abs-2604-25355}
\fi
 for the details. 
\end{exa}

\section{Examples} \label{sec:examples}
We instantiate our framework with (1) partially observable Markov decision processes (POMDPs) and (2) partially observable weighted transition systems.

\subsection{POMDP}
  \label{ex:POMDP}
Let $T$ be the finitely supported distribution monad  $\dfunc$ on $\Set$, and let $F \coloneqq  \_ \times \R$, where the $\R$-component represents the reward assigned to state-action pairs.
We define a \emph{distributive law} $\lambda_X\colon \dfunc(X \times \R) \rightarrow \dfunc(X) \times \R$ by
\begin{align*}
  \lambda_X(\nu) \coloneqq \Big(\dfunc(\pi_1)(\nu), \sum_{(x, r)\in \supp(\nu)} \nu(x, r) \cdot r\Big).
\end{align*}

For a morphism $f\colon X\rightarrow O$, 
the object $\dom \slicemonad{\dfunc}{O}(f)$ consists of beliefs $b \in \dfunc(X)$ whose support is contained in a single fibre of $f$;
concretely, there exists a (unique) $o \in O$ such that $\supp(b) \subseteq f^{-1}(o)$.
Define a belief decomposition $\{\alpha^O_f\colon (\dfunc \dom)(f) \rightarrow  (\dfunc\dom\,\ol{\dfunc})(f)\}_{f\colon X \to O}$ by
\begin{align*}
  \alpha^O_f(b)(b') \coloneqq \begin{cases}
    Z_o(b) & \text{if $\exists o\in O.~Z_o(b) > 0$ and $b' = \frac{1}{Z_o(b)} \cdot \restr{b}{o}$, }\\
    0 &\text{otherwise, }
  \end{cases}
\end{align*}
 where $Z_o(b) \coloneqq \sum_{x\in f^{-1}(o)}b(x)$ and $\restr{b}{o}$ is defined by $\restr{b}{o}(x) \coloneqq b(x)$ if $f(x) = o$, and  $\restr{b}{o}(x) \coloneqq 0$ otherwise. 
The morphism \(\alpha_f^O\) decomposes the belief \(b\) into the family of conditional
beliefs indexed by observations; see~\cref{app:omitDPpomdp} for the details of the belief decomposition $\alpha$.

  We define the ordered object $\Omega$ to be $\eR$ with the standard total order, and we equip each homset $\Set(X, [0, \infty])$ with the pointwise order.
  We define the two algebras $\rho \colon \eR\times \R\rightarrow \eR $ and $\sigma\colon  \dfunc(\eR) \rightarrow \eR $ by 
  \begin{align*}
    \rho(r_1, r_2)\coloneqq r_1 + r_2,\text{ and }\sigma(\nu) \coloneqq \sum_{r\in \supp(\nu)}\nu(r) \cdot r.   
  \end{align*}
  The resulting algebra $\tau \colon \dfunc(\eR)\times \R\rightarrow \eR$ obtained by $\rho$ and $\sigma$ is thus given by 
$\tau(\nu, r) = r + \sum_{r'\in \supp(\nu)}\nu(r') \cdot r'$.

  Given $c\colon S \to  \dfunc(S)\times \R$, the monotone map $\Phi_{c} \colon \cat{C}(S,\eR) \to \cat{C}(S,\eR)$
  is 
  \begin{align*}
    \Phi_{c}(f)(s) = (\pi_2\circ c)(s) + \sum_{s'\in S} (\pi_1\circ c)(s)(s')\cdot f(s').
  \end{align*}

  Now let
  $c = \langle \delta,\obsv\rangle\colon S \to \big(\dfunc(S)\times \R\big)^A \times O$.
  Then for each  $u\colon O^{+}\rightarrow A$ and $n \in \nat$, 
  the composite $\Phi_{\Pol(c)}^n(\bot)\langle \id, (u \circ \cons)^\dagger \circ \obsv \rangle\colon S\rightarrow \eR$ maps $s \in S$ to
  \begin{align*}
    \Phi_{\Pol(c)}^n(\bot)(s, \tilde u_s) =  \text{the total expected reward from $s$ in $n$ steps under $u$,}
  \end{align*}
  where $\tilde u_s \coloneqq \lambda \vec{o}.u(\obsv(s) \vec{o}) \colon O^* \to A$.
  Consequently, given an initial element $x\in I$, the objective $\Vc{i, c}$ is given by 
  \begin{align*}
    \Vc{i, c}(x) = \text{the maximum total expected reward from $i(x)$ over schedulers $u\colon O^{+}\rightarrow A$.}
  \end{align*}
The belief construction yields the usual belief MDP $c^\Bel$ together with an
observation map $\slicemonad{\dfunc}{O}(\obsv)$ sending
each
belief to the unique observation on whose fibre its support lies.

For the belief coalgebra 
$\Bel(i, c)\colon I  \to \domofslice{\dfunc}{O}{\obsv} \to \big(\dfunc(\domofslice{\dfunc}{O}{\obsv})\times \R\big)^A\times O$,
the reachable part chosen as in~\Cref{rem:reachable-part} for each initial element and scheduler
satisfies the condition in \Cref{prop:partial_upper_hist_new}.
Because for each belief, action, and observation, there is at most
one successor belief carrying that observation,
every reachable belief history 
under a
fixed scheduler $u$
is uniquely
determined by its observation history.

Consequently, for POMDPs, the semantics of the belief coalgebra agrees with the
semantics of the corresponding fully observable belief MDP.
Combining this with \Cref{thm:belief-correctness}, we recover the standard equivalence between the
semantics of a POMDP and that of its fully observable belief MDP.

\subsection{Partially Observable Weighted Transition Systems}
\label{subsec:exweight}
Finally, we illustrate our belief construction for partially observable weighted transition systems.  

  Let $T\coloneqq \rfunc$ be the semimodule monad~\cite{BonchiS21} that is induced by the standard semiring $(\R, + , \cdot)$ with the summation $+$ and the multiplication $\cdot$ on $\Set$, and $F = \_\times \R \colon \Set \to \Set$. 
  We have a distributive law $\lambda\colon \rfunc(\_\times \R)\Rightarrow \rfunc(\_)\times\R $ defined by
  \begin{align*}
    \lambda(\nu) \coloneqq \Big(\sum_{r\in \R} \nu(\_, r), \sum_{(x, r)\in X\times \R} \nu(x, r)\cdot r\Big).
  \end{align*} 
  For a morphism $f\colon X \to O$,
  the object $\dom \ol{\rfunc}(f)$ consists of 
  normalized finitely supported weight functions whose support is contained in a single fibre of $f$;
  that is, $\nu \in \mathcal{R}(X)$  such that $\sum_{x \in X}\nu(x) = 1$ and $\supp(\nu) \subseteq f^{-1}(o)$ for some (necessarily unique) $o \in O$.
  We define a belief decomposition $\{\alpha^O_f\colon (\rfunc \dom)(f) \to (\rfunc\dom\ol{\rfunc})(f)\}_{f\colon X \to O}$ by
 \[
  \alpha^O_f(\nu)(\mu) \coloneqq \begin{cases*}
    Z_o(\nu) &\text{ if  $\exists o \in O.~Z_o(\nu) > 0 \text{ and }\mu = \frac{1}{Z_o(\nu)}\cdot \restr{\nu}{o}$}, \\
    0 &\text{ otherwise, }
  \end{cases*}
 \]
 where $Z_o(\nu) \coloneqq \sum_{x\in f^{-1}(o)} \nu(x)$ and $\restr{\nu}{o}(x) \coloneqq \nu(x)$ if $f(x) = o$, and  $\restr{\nu}{o}(x) \coloneqq 0$ otherwise. One can show that $\alpha$ is a belief decomposition by essentially the same argument as in~\cref{ex:POMDP}.

We take the ordered object $\Omega$ to be $[0, \infty]$ with the standard total order, and we equip each $\Set(X, [0, \infty])$ with the pointwise order.
We define the two algebras $\rho \colon [0, \infty]\times \R\rightarrow[0, \infty] $ and $\sigma\colon  \rfunc([0, \infty]) \rightarrow [0, \infty] $ by 
$\rho(r_1, r_2) \coloneqq r_1 + r_2$, and $\sigma(\nu) \coloneqq \sum_{r\in [0, \infty]} \nu(r)\cdot r$, for any $\nu\in \rfunc([0, \infty])$.  
The resulting algebra $\tau\colon \rfunc([0, \infty])\times \R\rightarrow [0,\infty]$ is then given by $\tau(\nu, r) = r + \sum_{r'\in [0, \infty]} \nu(r') \cdot r'$. 
Given a coalgebra $c\colon S\rightarrow \rfunc(S)\times \R$, we regard $c$ as a weighted transition system 
where $\pi_1(c)\colon S\rightarrow \rfunc(S)$ describes the weight assigned to transitions, and $\pi_2(c)\colon S\rightarrow \R$ describes the weight to terminate immediately; 
see~\cref{sec:omitDefExample} for the details. We remark that we can show that the assumption $(\star)$ holds by almost the same argument as in~\cref{ex:POMDP}.

The monotone map $\Phi_{c} \colon \cat{C}(S,[0, \infty]) \to \cat{C}(S,[0, \infty])$ is then given by
  \begin{align*}
  \Phi_{c}(f)(s) =  (\pi_2\circ c)(s)+ \sum_{s'\in S} (\pi_1\circ c)(s)(s')\cdot f(s').
\end{align*}

Given a coalgebra $c = \langle \delta,\obsv\rangle\colon S \to \big(\rfunc(S)\times \R\big)^A \times O$
and a scheduler $u\colon O^{+}\rightarrow A$, 
the composite $\Phi_{\Pol(c)}^n(\bot)\langle \id, (u \circ \cons)^\dagger \circ \obsv \rangle\colon S\rightarrow [0, \infty]$ maps $s \in S$ to
\begin{align*}
  \Phi_{\Pol(c)}^n(\bot)(s, \tilde u_s) 
  = \parbox[t]{0.58\linewidth}{%
    the sum of weights over all terminating paths from $s$
    in $n$ steps under $u$.%
  }
\end{align*}
  where $\tilde u_s \coloneqq \lambda \vec{o}.u(\obsv(s) \vec{o}) \colon O^* \to A$.
See~\cref{sec:omitDefExample} for the definitions of terminating paths and their weights. 
Consequently, given an initial element $x\in I$ in $I$-pointed coalgebra $(i, c)\colon I\rightarrow S \to \big(\rfunc(S)\times \R\big)^A \times O$, the objective $\Vc{i, c}$ is given by 
\begin{align*}
  \Vc{i, c}(x) = \text{the maximum sum of weights over terminating paths from $i(x)$ over $u$.}
\end{align*}
By \Cref{thm:belief-correctness} and
the same argument as in \Cref{ex:POMDP},
the objective $\Vc{i, c}$ 
coincides with 
that of the fully observable counterpart of $\Bel(i, c)$.

\section{Related Work}
\label{sec:relatedwork}
The work most closely related to ours is the recent work by Baltieri, Torresan, and Nakai~\cite{baltieri2025coalgebraic}, which proposes a coalgebraic approach to POMDPs in order to capture notions of behavioural equivalence in a systematic manner. They also propose a generalized determinization of POMDPs `a la~\cite{DBLP:journals/corr/abs-1302-1046}. This differs from our belief construction: their generalized determinization yields deterministic systems, whereas our belief coalgebras remain effectful, reflecting the stochastic nature of belief MDPs.
Compared with their work, our novelty lies in providing a unified coalgebraic belief construction and proving the coincidence of two semantics: one for partially observable systems and the other for belief coalgebras. Such an equivalence between partially observable systems and their belief coalgebras has not been established in~\cite{baltieri2025coalgebraic}.

Bonchi, Sokolova, and Vignudelli~\cite{BonchiSV22} propose a determinization  for systems combining nondeterministic and probabilistic choices.
In their work, the monad of convex subsets of distributions plays a crucial role. Subsequently, Goy and Petrisan showed that this monad arises from a weak distributive law of the powerset monad over the finite distribution monad.
Goy~\cite{Goy21} showed that the belief-state transformer of probabilistic automata~\cite{BonchiSS21} can be derived naturally from weak distributive laws, and Turkenburg et al.~\cite{TurkenburgKRS23} developed the notion of invertible steps~\cite{RotJL21} induced by weak distributive laws to show the preservation and reflection of bisimilarity.
Future work is to investigate whether our belief construction can be extended to these generalized determinization constructions arising from weak distributive laws.

Bezhanishvili, Cupke, and Panangaden~\cite{BezhanishviliKP12} study minimization in the category of compact Hausdorff spaces to support the minimization of belief automata for POMDPs.
In this paper, we study the construction of belief coalgebras from pointed PO coalgebras and the equivalence of two semantics that were not considered in~\cite{BezhanishviliKP12}.

Jacobs and Sokolova~\cite{JacobsS09} study the notion of schedulers coalgebraically, and they propose a coalgebraic treatment of schedulers for nondeterministic systems through a strong monad map. 
Our treatment of strategies is different and we regard executions under strategies as stateful computations (see~\cref{def:pol}).

\section{Conclusion}
\label{sec:conclusion}

We propose a coalgebraic belief construction for partially observable systems, including POMDPs, and show the equivalence of two semantics: that of a partially observable system and that of its belief coalgebra, under a mild assumption.
As future work, we would like to support randomized strategies, which are commonly used in POMDPs.
More generally, supporting effectful strategies itself would be an interesting direction for future work.
\newpage
\bibliography{mybib}

\appendix
\newpage

\section{An Indexed-Categorical View of \Cref{def:alpha}} \label{ap:indexed-belief-decomp}

\begin{rem}
  By precomposing the functor $\Sl$ from \Cref{rem:indexed_ol_T} with
  the inclusion functor $\cat{O} \hookrightarrow \cat{C}$, we may regard it 
  as a functor $\cat{O} \to \Cat$.
Let $\Delta \cat{C}\colon \cat{O} \to \Cat$ be the constant functor at $\cat{C}$,
and let $\dom\colon \mathrm{Sl} \Rightarrow \Delta \cat{C}$ be the indexed
functor whose component at $O$ is $\dom\colon \cat{C}/O \to \cat{C}$.
The monads $T$ on $\cat{C}$ and $\slicemonad{T}{O}$ on slice categories
induce
the indexed functor $T\colon \Delta \cat{C} \Rightarrow \Delta \cat{C}$ and
the lax indexed functor
$\ol{T}\colon \Sl \Rightarrow \Sl$, respectively.
With this notation,
the second condition in \Cref{def:alpha} says exactly that the family
$\alpha=(\alpha^O)_{O \in \cat{O}}$ forms a modification
$\alpha\colon T \dom \Rightarrow T \dom\,\ol{T}$.
Likewise, 
the family $\flatten=(\flatten^O)_{O \in \cat{O}}$ forms a modification
$\flatten \colon T\dom \ol{T} \Rightarrow T\dom$.
Hence a belief decomposition structure on $T$ can be seen simply as a
modification $\alpha \colon T\dom \Rightarrow T\dom \ol{T}$ 
that is a section of $\flatten$.
\end{rem}

\section{Omitted Definitions and Proofs in~\cref{sec:examples}}
\label{sec:omitDefExample}

\subsection{Omitted Definitions and Proofs in~\cref{ex:POMDP}}
\label{app:omitDPpomdp}
We first see that $(\alpha^O\colon \dfunc \dom \Rightarrow  \dfunc\dom\,\ol{\dfunc})_O$ is a belief decomposition. 
\begin{proposition}
  The data $(\alpha^O\colon \dfunc \dom \Rightarrow  \dfunc\dom\,\ol{\dfunc})_O$ is a belief decomposition. 
\end{proposition}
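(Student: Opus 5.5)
We have to check three things: that each $\alpha^O_f$ is well defined (its values are finitely supported distributions on $\dom\ol{\dfunc}(f)$), that it is natural in $f\in\cat{C}/O$, and that conditions (i) and (ii) of \Cref{def:alpha} hold. Everything takes place in $\Set$, so we make the pullback concrete. The pullback of $\dfunc f$ along $\eta_O$ is the set of $b\in\dfunc X$ with $\dfunc f(b)=\delta_o$ for some $o$, that is, with $\supp(b)\subseteq f^{-1}(o)$. Under this description $\iota_f$ is the inclusion and $\ol{\dfunc}(f)(b)=o$. We then unfold $\theta_u$ concretely. For $u\colon O\to O'$ and $f\colon X\to O$, the map $\dom(\theta_u)_f$ sends a belief $b$ supported in $f^{-1}(o)$ to the same $b$, now regarded as supported in $(u\circ f)^{-1}(u(o))$. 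So $\theta_u$ is again an inclusion of subsets of $\dfunc X$, and when $u$ is mono it is a bijection.

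\emph{Well-definedness and (i).} If $Z_o(b)>0$, then $\frac{1}{Z_o(b)}\restr{b}{o}$ is a distribution supported in $f^{-1}(o)$. Different $o$ give different conditional beliefs because their supports lie in disjoint fibres, so no two observations contribute to the same $b'$. The weights $Z_o(b)$ sum to $1$, and only finitely many are nonzero. For (i), $\flatten_f=\mu\circ\dfunc\iota_f$, which gives
\[
\flatten_f(\alpha_f(b))=\sum_{o:\,Z_o(b)>0}Z_o(b)\cdot\tfrac{1}{Z_o(b)}\restr{b}{o}=\sum_o\restr{b}{o}=b,
\]
since the fibres partition $X$.

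\emph{Naturality.} Let $h\colon f\to g$ be a morphism in $\cat{C}/O$, so $g\circ h=f$. We must show $\dfunc(\dom\ol{\dfunc}h)(\alpha_f(b))=\alpha_g(\dfunc h(b))$. Here $\dom\ol{\dfunc}h$ is the restriction of $\dfunc h$. Since $g\circ h=f$, we have $h^{-1}(g^{-1}(o))=f^{-1}(o)$. Hence $Z_o^g(\dfunc h(b))=Z_o^f(b)$ and $\dfunc h(\restr{b}{o})=\restr{(\dfunc h\, b)}{o}$. The atoms $\frac{1}{Z_o}\restr{b}{o}$ are therefore sent to the corresponding atoms with the same weights. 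The only subtlety is that distinct atoms might collapse after applying $\dfunc h$, but they remain in distinct fibres of $g$, so this cannot happen.

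\emph{Condition (ii).} This is the step where the restriction of $\cat{O}$ to monomorphisms matters, and it is the main point to check. Take a mono $u\colon O\to O'$. The left side $\dfunc\dom\theta_u\circ\alpha^O_f$ splits $b$ along the fibres of $f$ and then reinterprets each atom. The right side $\alpha^{O'}_{u\circ f}(b)$ splits $b$ along the fibres of $u\circ f$. Because $u$ is injective, $(u\circ f)^{-1}(u(o))=f^{-1}(o)$, and fibres over $o'\notin\operatorname{im}u$ are empty. So the two splittings agree atom by atom, with the same weights $Z_o=Z'_{u(o)}$. For a non-injective $u$ the two sides would merge fibres differently and the equation would fail. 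We should remark this in the proof, to justify the choice of $\cat{O}$.
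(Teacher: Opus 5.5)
Your proof is correct and follows essentially the same route as the paper: an elementwise check of naturality (using $h^{-1}(g^{-1}(o)) = f^{-1}(o)$) and of the section property $\flatten\circ\alpha = \id$. You also fill in what the paper leaves implicit: well-definedness of $\alpha^O_f$, and condition (ii), which the paper dismisses as easy, argued via injectivity of $u$ as the paper itself does for the nondeterministic instance.
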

\begin{proof}
  We first see that $\alpha^O$ is a natural transformation. 
  Let $h\colon f\rightarrow g$ and $b\in \dfunc(X)$. We have 
  \begin{align*}
   &\big((\dfunc\dom\,\ol{\dfunc})(h)\circ  \alpha^O_f\big)(b)(b')\\
    = &\begin{cases}
    \sum_{x\in f^{-1}(o)}b(x) & \text{if $\exists o\in O.~\sum_{x\in f^{-1}(o)}b(x) > 0$ and $b' = \frac{1}{\sum_{x\in f^{-1}(o)}b(x)} \cdot B(o)$, }\\
    0 &\text{otherwise, }
  \end{cases}\\
  &\big(\alpha^O_g\circ (\dfunc \dom)(h)\big)(b)(b')\\
  = &\begin{cases}
    \sum_{y\in g^{-1}(o)}\dfunc(h)(b)(y) & 
    \text{if $\exists o\in O.~C(o) > 0$ and $b'(y) = \frac{1}{C(o)} \cdot \restr{\dfunc(h)(b)}{o}$, }\\
    0 &\text{otherwise, }
  \end{cases}\\
  = &\begin{cases}
    \sum_{x\in f^{-1}(o)}b(x) & \text{if $\exists o\in O.~\sum_{x\in f^{-1}(o)}b(x) > 0$ and $b' = \frac{1}{\sum_{x\in f^{-1}(o)}b(x)} \cdot \restr{\dfunc(h)(b)}{o}$, }\\
    0 &\text{otherwise, }
  \end{cases}\\
  = &\begin{cases}
    \sum_{x\in f^{-1}(o)}b(x) & \text{if $\exists o\in O.~\sum_{x\in f^{-1}(o)}b(x) > 0$ and $b' = \frac{1}{\sum_{x\in f^{-1}(o)}b(x)} \cdot B(o)$, }\\
    0 &\text{otherwise, }
  \end{cases}
  \end{align*}
  where 
  $B(o)(y) = \sum_{x\in f^{-1}(o)\cap h^{-1}(y)} b(x)$ and
  $C(o) = \sum_{y\in g^{-1}(o)}\dfunc(h)(b)(y)$.
  It satisfies the condition (i) in~\cref{def:alpha} as follows: 
  \begin{align*}
  (\flatten^O_f \circ \alpha^O_f)(b)(x) &= \sum_{b'} \alpha^O_f(b)(b')\cdot b'(x) \\
  &= \big(\sum_{x'\in X\text{ s.t. }f(x') = f(x)}b(x')\big)\cdot\frac{b(x)}{\sum_{x'\in X\text{ s.t. }f(x') = f(x)}b(x')} = b(x). 
  \end{align*}
  It is easy to see that the condition (ii) in~\cref{def:alpha} holds for $(\alpha^O)_O$. 
\end{proof}
Next, we see that the two modalities $\rho \colon \eR\times \R\rightarrow \eR $ and $\sigma\colon  \dfunc(\eR) \rightarrow \eR $  satisfy the first assumption in $(\star)$. 
\begin{proposition}
  The morphisms $\rho$ and $\sigma$ satisfy the first assumption in $(\star)$.
\end{proposition}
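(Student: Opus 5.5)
We verify the four requirements of the first assumption in $(\star)$ directly, in the order they appear there: $\rho$ and $\sigma$ are monotone, $\sigma$ is an Eilenberg--Moore algebra, the compatibility $\sigma \circ \dfunc(\rho) = \rho \circ (\sigma \times \R) \circ \lambda_{\eR}$ holds, and $\bot_{\dfunc X} = \sigma \circ \dfunc(\bot_X)$ holds for every set $X$. We also note $\tau = \rho \circ F\sigma$, which holds by the definition of $\tau$ as $\tau(\nu,r) = r + \sum_{r'} \nu(r') r'$.

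\textbf{Monotonicity.} Addition on $\eR$ is monotone in each argument. The expectation $\nu \mapsto \sum_r \nu(r)\, r$ is monotone under pushforward along pointwise-larger maps. Hence for $g \le g'\colon X \to \eR$ we get $\rho\circ (g\times \R) \le \rho \circ (g' \times \R)$ and $\sigma \circ \dfunc g \le \sigma \circ \dfunc g'$ pointwise. This is the monotonicity in the paper's sense.

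\textbf{Eilenberg--Moore laws.} The unit law is $\sigma(\delta_r) = r$. The multiplication law $\sigma \circ \mu = \sigma \circ \dfunc\sigma$ is linearity of expectation. For $\Xi \in \dfunc\dfunc(\eR)$ both sides equal
\[
\sum_{\nu} \Xi(\nu) \sum_r \nu(r)\, r.
\]
The expansion is legitimate in $\eR$ because all terms are nonnegative and supports are finite, with the convention $0\cdot\infty = 0$.

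\textbf{Compatibility with $\lambda$.} Take $\nu \in \dfunc(\eR \times \R)$.
\begin{itemize}
\item The left side is $\sigma(\dfunc(\rho)(\nu)) = \sum_{(x,r)} \nu(x,r)(x+r)$.
\item For the right side, $\lambda(\nu) = (\dfunc(\pi_1)\nu, \sum \nu(x,r)\, r)$, so it equals $\sum_x (\sum_r \nu(x,r))\, x + \sum_{(x,r)} \nu(x,r)\, r$.
\end{itemize}
These agree by distributivity and the additivity of finite sums in $\eR$. This is the step I expect to need the most care. One must check that infinite values cause no problem: all coefficients are strictly positive on the support, and addition on $\eR$ is well defined, so the identity holds even when some $x = \infty$.

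\textbf{Bottom condition.} In $\Set(X,\eR)$ the bottom is the constant $0$ map. Then $\sigma \circ \dfunc(\bot_X)$ sends $\nu$ to $\sum_x \nu(x)\cdot 0 = 0$, which is $\bot_{\dfunc X}$. Unlike the powerset case, no nonemptiness issue arises, since the distributions are normalized.
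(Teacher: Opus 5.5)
Your proposal is correct and follows the paper's proof. The paper likewise verifies $\sigma \circ \dfunc(\rho) = \rho \circ (\sigma\times\R)\circ\lambda_{\eR}$ and $\sigma\circ\dfunc(\bot_X) = 0 = \bot_{\dfunc X}$ by direct pointwise computation, and calls monotonicity and the Eilenberg--Moore laws straightforward; you spell those out via linearity of expectation, which is fine. One minor point: the bottom condition does not rely on normalization, since the expectation of the zero function is $0$ even for the zero weight function, so the contrast with $\pfunc$ comes from $\land\emptyset = \btrue$ rather than from distributions being nonempty.
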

\begin{proof}
It is straightforward to see that $\rho$ is a monotone algebra, and $\sigma$ is a monotone Eilenberg-Moore algebra. 
We see that $\bot_{\dfunc(X)} = \sigma\circ \dfunc(\bot_X)$ in the following: 
\[
\big(\sigma\circ\dfunc(\bot_X)\big)(\nu) = \sum_{x\in X} \nu(x) \cdot \bot_X(x) = 0 = \bot_{\dfunc(X)}(\nu).
\]
We next see that $\sigma \circ \dfunc(\rho) = \rho \circ (\sigma\times \R) \circ \lambda_\Omega$ as follows: 
\begin{align*}
  \big(\sigma \circ \dfunc(\rho)\big)(\nu) &= \sum_{r}  \dfunc(\rho)(\nu)(r) \cdot r = \sum_{r_1, r_2}  \nu(r_1, r_2) \cdot (r_1 + r_2) \\
  \big(\rho \circ (\sigma\times \R) \circ \lambda_\Omega \big)(\nu) &= \big(\pi_2\circ \lambda_{\Omega}\big)(\nu) + \sum_{r}(\pi_1 \circ \lambda_{\Omega})(\nu)(r)\cdot r \\
  &=\big(\sum_{r_1, r_2} \nu(r_1, r_2)\cdot r_2 \big) + \big(\sum_{r} \big(\sum_{r'}\nu(r, r')\big)\cdot r\big)\\
  &= \sum_{r_1, r_2} \nu(r_1, r_2)\cdot (r_1 + r_2). 
\end{align*}
\end{proof}

Lastly, we check the second assumption in $(\star)$. 
\begin{proposition}
  The second assumption in $(\star)$ holds for POMDPs.
\end{proposition}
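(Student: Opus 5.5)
We verify the second condition in $(\star)$ pointwise in $\Set$, with $T = \dfunc$. Both sides are maps $\domofslice{\dfunc}{O}{\obsv} \to \dfunc(S \times O)$, so we fix an element $b$ of $\domofslice{\dfunc}{O}{\obsv}$ and compare the two resulting distributions on $S \times O$.

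First we identify the data explicitly. The pullback of $\dfunc(\obsv)$ along $\eta_O$ is the set of pairs $(b,o) \in \dfunc(S) \times O$ with $\dfunc(\obsv)(b) = \delta_o$. The condition $\dfunc(\obsv)(b) = \delta_o$ says exactly that $\supp(b) \subseteq \obsv^{-1}(o)$. Under this description, $\iota_\obsv(b,o) = b$ and $\slicemonad{\dfunc}{O}(\obsv)(b,o) = o$. Since $b \neq 0$, the observation $o$ is determined by $b$, which matches the description given in \Cref{ex:POMDP}.

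Next we compute both sides on $(b,o)$.

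The left-hand side is $\dfunc\langle \id, \obsv\rangle(b)$. This is the pushforward of $b$ along $s \mapsto (s, \obsv(s))$, so its value at $(s,o')$ is $b(s)$ if $o' = \obsv(s)$, and $0$ otherwise.

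The right-hand side is $\str^{\dfunc}(b, o)$. The strength of $\dfunc$ is $(\nu, y) \mapsto \dfunc(x \mapsto (x,y))(\nu)$, so its value at $(s,o')$ is $b(s)$ if $o' = o$, and $0$ otherwise. (With the strength taken in the orientation $TS \times O \to T(S\times O)$, which the statement of $(\star)$ presupposes.)

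The two values can differ only at points where $b(s) > 0$. At such points $s \in \supp(b) \subseteq \obsv^{-1}(o)$, so $\obsv(s) = o$, and the two conditions "$o' = \obsv(s)$" and "$o' = o$" coincide. Hence the two distributions are equal, which proves the second condition in $(\star)$.

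There is no real obstacle in this argument. The only step needing care is the identification of the pullback with support-in-a-fibre beliefs, together with the matching orientation of the strength.
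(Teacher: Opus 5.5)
Your proof is correct and matches the paper's: both compute the two sides pointwise on a belief $b$ and use that $\supp(b)$ lies in a single fibre of $\obsv$ to see that the two resulting distributions on $S \times O$ agree. Your explicit description of the pullback as pairs $(b,o)$ with $\dfunc(\obsv)(b)=\delta_o$, and your remark on the orientation of the strength, spell out what the paper leaves implicit.
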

\begin{proof}
  We have
\begin{align*}
  \big(\dfunc\langle \id, \obsv \rangle\circ \iota_\obsv\big)(\nu)(x, o) &= \begin{cases}
    \nu(x) &\text{ if $o = \obsv(x)$,}\\
    0 &\text{otherwise,}
  \end{cases} \\
  \big(\str^{\dfunc} \circ \langle \iota_\obsv, \slicemonad{\dfunc}{O}\obsv\rangle\big)(\nu)(x, o) &=\big(\str^{\dfunc}\big(\nu, o' \big)\big)(x, o) = \begin{cases}
    \nu(x) &\text{ if $o = \obsv(x)$,}\\
    0 &\text{otherwise,}
  \end{cases}
\end{align*}
where $o' = \obsv(x')$ for some $x'\in \supp(\nu)$. 
\end{proof}
\subsection{Omitted Definitions in~\cref{subsec:exweight}}
Given a coalgebra $c\colon S\rightarrow \rfunc(S)\times \R$, we define a weighted transition system $c'\colon S\rightarrow \rfunc(S + \{\checkmark\})$ such that $c'(s)(s') \coloneqq (\pi_1\circ c)(s)(s')$ and $c'(s)(\checkmark) \coloneqq (\pi_2\circ c)(s)$. 
A \emph{terminating path} $p$ on $c$ is the path $p$ on $c'$ such that $p$ ends at $\checkmark$. 
The \emph{weight} $w(p)$ over the terminating path $p = s_1\cdots s_n \cdot \checkmark$ is given by  
\begin{align*}
  w(p) \coloneqq \big(\prod_{i\in [1, n-1]} c'(s_i, s_{i+1})\big)\cdot c'(s_{n}, \checkmark). 
\end{align*}
The characterization of the objective $\Vc{i, c}$ given in~\cref{subsec:exweight} is a corollary of the following characterization.  

\begin{proposition}
  Let $c\colon S\rightarrow \rfunc(S)\times \R$, and $\Phi_{c} \colon \cat{C}(S,[0, \infty]) \to \cat{C}(S,[0, \infty])$ be the monotone map defined in~\cref{subsec:exweight}. 
  We have 
  \begin{align*}
 \bigvee_{n\in \N} \Phi_{c}^n(\bot)(s) =  \sum_{p\in \mathrm{TPath}(s)} w(p),
  \end{align*}
  where $\mathrm{TPath}(s)$ is the set of terminating paths starting from $s$.
  
\end{proposition}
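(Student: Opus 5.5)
The plan is to prove by induction on $n$ that $\Phi_c^n(\bot)(s) = \sum_{p \in \mathrm{TPath}_{\le n}(s)} w(p)$. Here $\mathrm{TPath}_{\le n}(s)$ denotes the set of terminating paths from $s$ that visit at most $n$ states of $S$ before reaching $\checkmark$, i.e.\ paths $s = s_1 \cdots s_k \cdot \checkmark$ with $k \le n$. Taking the supremum over $n$ then gives the claim, since all weights are nonnegative. The partial sums are therefore monotone in $n$, and $\bigvee_n \sum_{p \in \mathrm{TPath}_{\le n}(s)} w(p) = \sum_{p \in \mathrm{TPath}(s)} w(p)$ in $[0,\infty]$. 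This holds because every terminating path is finite and so lies in some $\mathrm{TPath}_{\le n}(s)$, and a sum of nonnegative terms in $[0,\infty]$ is the supremum of its finite partial sums.

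Base case $n=0$: $\Phi_c^0(\bot)(s) = 0$, and $\mathrm{TPath}_{\le 0}(s)$ is empty, because a terminating path contains at least its starting state $s$. (If one prefers paths to start at $s$ and count transitions instead, I would shift the indexing by one; the argument is unchanged.)

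Inductive step: unfold the formula for $\Phi_c$ from \cref{subsec:exweight}:
\[
\Phi_c^{n+1}(\bot)(s) = c'(s,\checkmark) + \sum_{s' \in S} c'(s,s') \cdot \Phi_c^n(\bot)(s').
\]
Apply the induction hypothesis to each $\Phi_c^n(\bot)(s')$ and use distributivity of multiplication over (possibly infinite, nonnegative) sums in $[0,\infty]$. This yields $c'(s,\checkmark) + \sum_{s'} \sum_{q \in \mathrm{TPath}_{\le n}(s')} c'(s,s')\, w(q)$. Next, use the bijection between $\mathrm{TPath}_{\le n+1}(s)$ and $\{s\checkmark\} \sqcup \coprod_{s'} \mathrm{TPath}_{\le n}(s')$, sending $s \cdot q$ to $(s', q)$. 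Since $w(s \cdot q) = c'(s,s')\, w(q)$ and $w(s\checkmark) = c'(s,\checkmark)$, the two expressions coincide. Because $\rfunc(S)$ consists of finitely supported functions, the sum over $s'$ is finite, so no rearrangement subtleties arise.

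The main obstacle is only bookkeeping: fixing the path-length convention consistently with $w$, and handling the case where $\pi_1 c(s)$ assigns weight $0$. Zero-weight transitions contribute $0$ either way, so whether such steps count as paths of $c'$ does not affect the sum.
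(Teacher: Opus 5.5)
Your proof is correct. The paper states this proposition without proof. Your argument is the routine one it leaves implicit: by induction, $\Phi_c^n(\bot)(s)$ is the total weight of the terminating paths that visit at most $n$ states of $S$, which matches the paper's informal reading ``in $n$ steps''; you then take the supremum, using that all weights are nonnegative.
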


\section{Omitted Proofs} 

\subsection{Proof of~\Cref{prop:ol_T}}\label{prop:appol_T}
\begin{proof}
The functor $\slicemonadO{T}{O}$ forms a monad, which comes from the sliced adjunction 
of the Eilenberg--Moore adjunction $F_T \dashv U_T\colon \cat{C}^T \to \cat{C}$ at $O$:
\begin{displaymath}
  \xymatrix@C=1em{
    \cat{C}/O \ar@/^1em/[rr]^{\slicefunc{(F_T)}{O}} &\bot &\cat{C}^T/F_T(O) \ar@/^1em/[ll]^{\slicefunc{(U_T)}{O}}
  }
\end{displaymath}
Here $\slicefunc{(F_T)}{O}$ is the functor induced by applying $F_T$ to the objects and morphisms of $\cat{C}/O$, and $\slicefunc{(U_T)}{O}$ is the functor defined by applying $U_T$ and taking the pullback along the unit of the adjunction $F_T \dashv U_T$.
Therefore, the induced monad $\slicefunc{(U_T)}{O}\slicefunc{(F_T)}{O}$ is $\slicemonadO{T}{O}$.

Let $u\colon O \to O'$.
Then we have the equality $\slicefunc{(F_T)}{O'} \circ \Sigma_u = \Sigma_{F_T(u)} \circ \slicefunc{(F_T)}{O}$.
Hence 
by applying \Cref{lem:adj_monad} to 
the functors $\Sigma_u$, $\Sigma_{F_T(u)}$ and
the adjunctions $\slicefunc{(F_T)}{O'} \dashv \slicefunc{(U_T)}{O'}$ and $\slicefunc{(F_T)}{O} \dashv \slicefunc{(U_T)}{O}$,
the statement about the monad morphism follows.
The explicit description of the natural transformation $\theta_u$ follows from the construction of the monad morphism in \Cref{lem:adj_monad}.
\end{proof}

\subsection{Proof of~\Cref{prop:iota}}\label{ap:proof_iota}

\begin{proof}
  Apply \Cref{lem:adj_monad} to the 
  forgetful functors $\cat{C}/O \to \cat{C}$ and $\cat{C}^T/F_T(O) \to \cat{C}^T$ and the adjunctions $F_T \dashv U_T$ and $\slicefunc{(F_T)}{O}\dashv \slicefunc{(U_T)}{O}$.
\end{proof}

\subsection{Proof of~\Cref{prop:bel}}\label{ap:proof_bel}
\begin{proof}
  Let $h \coloneqq \dom\slicemonad{T}{O'}(f) \circ \dom(\theta_g)_{\obsv}$.
  We show that 
\((h,g)\) is a morphism of \(I\)-pointed PO coalgebras. 
First, since \((\Sigma_g,\theta_g)\) is a monad morphism, its compatibility with units
yields
$h\circ \eta^{\slicemonad{T}{O}}_{\obsv}
=
\eta^{\slicemonad{T}{O'}}_{\obsv'}\circ f$.
Hence
$h\circ \eta^{\slicemonad{T}{O}}_{\obsv}\circ i
=
\eta^{\slicemonad{T}{O'}}_{\obsv'}\circ f\circ i
=
\eta^{\slicemonad{T}{O'}}_{\obsv'}\circ i'$.

Next, by construction, \(h\) is a morphism
$\Sigma_g(\slicemonad{T}{O}(\obsv)) \to \slicemonad{T}{O'}(\obsv')$
in the slice category \(\cat C/O'\). Therefore the observation part commutes:
$g\circ \slicemonad{T}{O}(\obsv)=\slicemonad{T}{O'}(\obsv')\circ h$.

The transition part also commutes by the following equations:
\[
\begin{aligned}
FTh \circ c^\Bel
&=FTh \circ F\alpha_{\obsv}\circ \Det(\delta)\circ \iota_{\obsv} \\
&=F\alpha_{\obsv'}\circ FTf\circ \Det(\delta)\circ \iota_{\obsv} \\
&=F\alpha_{\obsv'}\circ \Det(\delta')\circ Tf\circ \iota_{\obsv} \\
&=F\alpha_{\obsv'}\circ \Det(\delta')\circ \iota_{\obsv'}\circ h \\
&={\delta'}^{\Bel}\circ h .
\end{aligned}
\]
Here the second equality uses the compatibility of \(\alpha\) with base change together
with the naturality of \(\alpha^{O'}\), the third equality uses the functoriality of
ordinary coalgebraic determinization, and the fourth equality holds by definition of $h$.
\end{proof}

\subsection{Proof for $\alpha$ in~\cref{ex:nondet}}
\label{ap:proofAlphaNondet}
By definition, we can see that $(\ol{\pefunc})_O\colon \Set/O \to \cat \Set/O$ is defined by for any $f\colon S\rightarrow O$,
  $(\ol{\pefunc})_O(f)\colon \domofslicenew{T}{O}{f}\rightarrow O$ and $(\ol{\pefunc})_O(f)(U) = f(u)$ for some $u\in U$, where $\domofslicenew{T}{O}{f}\coloneqq \{U\in \pefunc(S)\mid \exists o\in O \text{ s.t. } U\subseteq f^{-1}(o)\}$. 
  For any $h\colon f\rightarrow g$ in $\Set/O$, we have $(\ol{\pefunc})_O(h)(U) = h(U)$.
  
We see that $\alpha^O$ is a natural transformation: given  $h\colon f\rightarrow g$ in $\Set/O$ and $U\in (\pefunc \dom)(f)$, we have 
\begin{align*}
  \big(\alpha^O_g \circ (\pefunc \dom)(h)\big)(U) &= \big\{  \{ v\in h(U)\mid g(v) = o\} \, \big| \,  o\in O\text{ s.t. }g^{-1}(o)\cap h(U)\not = \emptyset\big\}  \\
&= \big\{  \{ v\in h(U)\mid g(v) = o\} \, \big| \,  o\in O\text{ s.t. }f^{-1}(o)\cap U\not = \emptyset\big\}  \\
&= \big\{  \{ h(u)\mid u\in U\text{ and }f(u) = o\} \, \big| \,  o\in O\text{ s.t. }f^{-1}(o)\cap U\not = \emptyset\big\}  \\
&= \big((\pefunc\dom\ol{\pefunc})(h)\circ \alpha^O_f\big)(U).
\end{align*}
It is straightforward to see that $\alpha^O$ is a section of $\flatten^O$, since $\alpha^O$ simply creates the partition based on observations.

  Next,
  we show that $\alpha$ satisfies the condition (ii) in \Cref{def:alpha}.
  For each injective function $u\colon O \to O'$ and 
  each function
  $f\colon X \to O$,
  the component
  $(\pefunc \dom_{O'}\theta_u \circ \alpha^{O})_f$ is the function mapping $U \in \pefunc X$ to $\{f^{-1}(o) \cap U \mid o \in O \text{ s.t.~}f^{-1}(o) \cap U \neq \emptyset\}$,
  and the component
  $(\alpha^{O'} \Sigma_u)_f$ is the function mapping $U \in \pefunc X$ to $\{(u \circ f)^{-1}(o') \cap U \mid o' \in O' \text{ s.t.~}(u \circ f)^{-1}(o') \cap U \neq \emptyset\}$.
  They are equal because $u$ is injective.

\subsection{Proof of~\Cref{prop:bel_delta_id}} \label{ap:proof_bel_id}
\begin{proof}
  Note that $f \coloneqq \slicemonad{T}{S}(\id)\colon \domofslice{T}{S}{\id_S} \to S$
  is an isomorphism because it is obtained as the pullback of the isomorphism $\id_{TS}$.
From the pullback square we have
$\iota_{\id_S} = \eta_S \circ f$.

By the assumption
$\mu_S \circ T\iota_{\id_S} \circ \alpha_{\id_S} = id_{TS}$, 
the following equations hold:
\[
\id_{TS}
= 
\mu_S \circ T\iota_{\id_S} \circ \alpha_{\id_S}
=
\mu_S \circ T\eta_S \circ Tf \circ \alpha_{\id_S}
=
Tf \circ \alpha_{\id_S}.
\]
By the following equations, 
$f\colon c^\Bel \to \delta$ is an $FT$-coalgebra morphism:
\[
FTf \circ c^\Bel
=
FTf \circ F\alpha_{\id_S} \circ \Det(\delta) \circ \iota_{\id_S}
=
\Det(\delta) \circ \iota_{\id_S}
=
\Det(\delta) \circ \eta_S \circ f
= \delta \circ f.
\]
Thus
$(f, \id_S) \colon \Bel(\id_S, \langle \delta,id_S\rangle) \to (\id_S, \langle \delta,\id_S\rangle)$
is an isomorphism in $\CoalgOcov{FT}{S}$. 
\end{proof}

\subsection{Proof of \Cref{prop:history-semantics}} \label{ap:proof_history_semantics}

\begin{proof}
  By \Cref{prop:Phi-coalg-morphism},
  it suffices to show that
  there is a morphism
  from $(\nil \circ i, \Hist(c)_u)$ to $(\langle \id, (u \circ \cons_O)^\dagger \circ \obsv\rangle \circ i, \Pol(c))$ for each $u\colon O^+ \to A$. 
For each $u \colon O^+ \to A$, define a morphism
$\cmp_u \coloneqq \langle \last, \chi_u^\dagger \rangle \colon S^+ \to S \times A^{O^*}$
where $\chi_u \colon S^+ \times O^* \to A$ is the composite
\[
S^+ \times O^*
\xrightarrow{\obsv^+ \times \id}
O^+ \times O^*
\xrightarrow{\ext_O}
O^+
\xrightarrow{u}
A.
\]
We show that $\cmp_u\colon (\nil \circ i, \Hist(c)_u) \to (\langle \id, (u \circ \cons_O)^\dagger \circ \obsv\rangle \circ i, \Pol(c))$ in $\mathrm{Coalg}_I(FT)$.

First, $\cmp_u$ is an $FT$-coalgebra morphism
$\cmp_u \colon \Hist(c)_u \to \Pol(c)$
by the following commutative diagram:

\resizebox{.95\linewidth}{!}{\input{diag-hist-pol.tex}}

\noindent
where $f\colon S^+ \to A^{O^+}$
is the transpose of $S^+ \times O^+ \xrightarrow{\obsv^+ \times \id} O^+ \times O^+ \xrightarrow{\ext} O^+ \xrightarrow{u} A$.
The leftmost square commutes because, under the canonical isomorphism
\(A^{O^\ast}\cong A\times A^{O^+}\), the morphism \(\chi_u^\dagger\) corresponds to the pair
$\langle u\circ \obsv^+,\, f\rangle\colon S^+ \to A \times A^{O^+}$.

Next, we show that
$\cmp_u \circ \nil_S = \langle \id_S, (u \circ \cons)^\dagger \circ \obsv \rangle$, which induces that \(\cmp_u\) preserves the point.
For the first component, we use
$\last \circ \nil_S = \id_S$.
For the second component, by adjoint transposition, it suffices to show
$
\chi_u \circ (\nil_S \times \id_{O^*}) = u \circ \cons \circ (\obsv \times \id_{O^*})$.

This equality holds because of the naturality of $\nil$ and 
$\cons = \ext \circ (\nil \times \id)$.
\end{proof}
\ifarxiv

\subsection{Proof of~\Cref{lem:bel_pol}} \label{ap:proof_lem_correct_flat}
\begin{proof}
  Since $\flatten \circ \alpha = \id$, the following equations hold:
  \begin{align*}
    &F(\flatten_\obsv) \circ (c^\Bel)^\dagger \colon \domofslice{T}{O}{\obsv} \times A \to FTS\\
    &= F\mu \circ FT\iota_\obsv \circ F\alpha_\obsv \circ F\mu \circ \lambda_{TS} \circ T\ev \circ \str^T \circ (T\delta \circ \iota_\obsv) \times A  \\
    &= F\mu_S \circ \lambda_{TS} \circ T\delta^\dagger \circ \str^T \circ \iota_\obsv \times A.
  \end{align*}

  It further gives the following equations:
  \begin{align*}
    &F(\flatten_\obsv) \circ (c^\Bel)^\dagger \circ \eta^{\slicemonad{T}{O}}_\obsv \times A\\
    &= F\mu_S \circ \lambda_{TS} \circ T\delta^\dagger \circ \str^T \circ \eta_S \times \id &\text{by }\Cref{prop:iota} \\
    &= F\mu_S \circ \lambda_{TS} \circ T\delta^\dagger \circ \eta_{S \times A} \\ 
    &= \delta^\dagger &\text{by }\lambda_T \circ \eta_{FT} = F\eta_{T}.
  \end{align*}
\end{proof}

\subsection{Proof of~\Cref{lem:commute_delta_mu}} \label{ap:proof_correct}

\begin{proof}
  Let $e \coloneqq \langle \pi_1, \ev \circ (\obsv \times \id)\rangle\colon S \times A^{O^+} \to S \times A^{O^*}$,
  $\ol{e} \coloneqq \langle \pi_1, \ev \circ (\slicemonad{T}{O}\obsv \times \id)\rangle\colon \domofslice{T}{O}{\obsv} \times A^{O^+} \to \domofslice{T}{O}{\obsv} \times A^{O^*}$,
  and let $f$ be the canonical isomorphism $A^{O^*} \cong A \times A^{O^+}$.
  We also write $\ol{T}S$ for the object $\domofslice{T}{O}{\obsv}$.

1)
   By \Cref{lem:bel_pol}.\ref{item:bel_delta_eta},
   it follows that $\Pol(c) = F(\mu \circ T(\str^T \circ \iota \times \id)) \circ \Pol(\langle c^\Bel, \slicemonad{T}{O}\obsv \rangle) \circ (\eta^{\slicemonad{T}{O}}_\obsv \times \id)$
   by the following
   commutative diagram:

  \begin{equation} \label{eq:diag_pol_bel_1}
  \resizebox{1\linewidth}{!}{\input{diag-pol-bel-1.tex}}
  \end{equation}

  2)
Since $T\langle \id, \obsv \rangle \circ \iota_\obsv = \str^T \circ \langle \iota_\obsv, \slicemonad{T}{O}\obsv\rangle$,
the following equations hold:
\begin{align*}
  & \str^T \circ (\iota_\obsv \times A^{O^*}) \circ \ol{e} \colon \domofslice{T}{O}{\obsv} \times A^{O^+} \to T(S \times A^{O^*}) \\
  &= \str^T \circ (TS \times \ev) \circ (\langle \iota_\obsv, \slicemonad{T}{O}\obsv \rangle \times A^{O^+}) \\
&= T(S \times \ev) \circ \str^T_{S\times O, A^{O^+}} \circ (\str^T_{TS, O} \times A^{O^+}) \circ (\langle \iota_\obsv, \slicemonad{T}{O}\obsv\rangle \times A^{O^+}) \\
&= T(S \times \ev) \circ \str^T_{S\times O, A^{O^+}} \circ ((T\langle \id, \obsv\rangle \circ \iota_\obsv) \times A^{O^+}) \\
  &= T(e) \circ \str^T \circ (\iota_\obsv \times \id_{A^{O^+}}).
\end{align*}

  Then we obtain the following equations; they are summarized in the diagram below.
  \begin{align*}
    &F^2 \mu \circ F\lambda \circ FTF(\mu \circ T(\str^T \circ \iota \times A^{O^*})) \circ FT(\Pol(\langle c^\Bel, \slicemonad{T}{O}\obsv \rangle))   \\
    &= F^2 \mu \circ F\lambda \circ FTF\mu \circ (FT)^2(\str^T \circ (\iota \times A^{O^*}) \circ \ol{e}) \circ FT(\str^{FT} \circ ((c^\Bel)^\dagger \times A^{O^+}) \circ (\domofslice{T}{O}{\obsv} \times f))    \\ 
    &= F^2 \mu \circ F\lambda \circ FTF\mu \circ (FT)^2(T(e) \circ \str^T \circ (\iota \times A^{O^+})) \circ
    FT(\str^{FT} \circ ((c^\Bel)^\dagger \times A^{O^+}) \circ (\domofslice{T}{O}{\obsv} \times f)) \\
    &\qquad \text{ by naturality of }\str, \mu \text{ and commutativity of $\str$ with $\mu$}  \\
    &= F^2 \mu \circ F\lambda \circ (FT)^2(e) \circ FT\str^{FT} \circ FT((F\flatten_\obsv \circ (c^\Bel)^\dagger )\times A^{O^+}) \circ (\domofslice{T}{O}{\obsv} \times f) \\
    &\qquad \text{by \Cref{lem:bel_pol}.\ref{item:bel_delta} and naturality of }\mu, \lambda \\
    &= F^2 T(e) \circ F^2 \mu \circ F\lambda \circ FTF\str^{FT} \circ FT((F\mu_S \circ \lambda_{TS} \circ T\delta^\dagger \circ \str^T)\times A^{O^+}) \circ (\iota_\obsv  \times f) \\
    &\qquad \text{by naturality of }\str \\
    &= F^2 T(e) \circ F^2 \mu \circ F\lambda \circ FTF\str^{FT} \circ FTF(\mu \times A^{O^+}) \circ FT\str^F \circ FT((\lambda_{TS} \circ T\delta^\dagger \circ \str^T)\times A^{O^+}) \circ (\iota_\obsv  \times f) \\
    &\qquad \text{by compatibility of $\str$ with $\mu$} \\
    &= F^2 T(e) \circ F^2 \mu \circ F\lambda \circ FTF\mu \circ (FT)^2\str^T \circ FTF\str^T \circ FT\str^F \circ FT((\lambda_{TS} \circ T\delta^\dagger \circ \str^T)\times A^{O^+}) \circ (\iota_\obsv  \times f) \\
    &\qquad \text{by naturality of }\lambda \text{ and }F\str^T \circ \str^F \circ \lambda \times \id = \lambda \circ T(\str^F) \circ  \str^T  \\
    &= F^2 T(e) \circ F^2 \mu \circ F^2 T \mu \circ F\lambda \circ (FT)^2\str^T \circ FT\lambda \circ FT^2 \str^F \circ FT\str^T \circ FT((T\delta^\dagger \circ \str^T)\times A^{O^+}) \circ (\iota_\obsv  \times f) \\
    &\qquad \text{by }\mu_{S \times A^{O^+}} \circ T\mu_{S \times A^{O^+}} = \mu_{S \times A^{O^+}} \circ \mu_{T(S \times A^{O^+})}\text{ and naturality of }\lambda \\
    &= F^2 T(e) \circ F^2 \mu \circ F^2 \mu \circ F\lambda \circ FT\lambda \circ FT^2F\str^T \circ FT^2 \str^F \circ FT\str^T \circ FT((T\delta^\dagger \circ \str^T)\times A^{O^+}) \circ (\iota_\obsv  \times f) \\
    &\qquad \text{by compatibility of $\lambda$ with $\mu$} \\
    &= F^2 T(e) \circ F^2 \mu \circ F\lambda \circ F\mu \circ FT^2F\str^T \circ FT^2 \str^F \circ FT\str^T \circ FT((T\delta^\dagger \circ \str^T)\times A^{O^+}) \circ (\iota_\obsv  \times f) \\
    &\qquad \text{by naturality of }\mu, \lambda \\
    &= F^2\mu \circ F\lambda \circ FT(\Pol(c)) \circ F(\mu \circ T(\str^T \circ (\iota \times \id))).
  \end{align*}

\resizebox{\linewidth}{!}{\input{diag-pol-bel-big-n-rotate.tex}}
\end{proof}

\subsection{Omitted proof for \Cref{thm:belief-correctness}}
\label{ap:diag_correctness_n}

\begin{lem}
  For each $n \geq 1$, the following equation holds:
  \[
  \tau^{n} \circ (FT)^{n} \bot \circ FT(\Pol(c)^{n-1}) \circ F(\mu \circ T(\str^T \circ (\iota_\obsv \times \id))) 
  = \tau^{n} \circ (FT)^{n} \bot \circ FT(\Pol(\langle c^\Bel, \slicemonad{T}{O}\obsv \rangle)^{n-1}).
  \]
\end{lem}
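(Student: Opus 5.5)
We argue by induction on $n$, following the sketch after \Cref{thm:belief-correctness}. Write $\ol{T}S$ for $\domofslice{T}{O}{\obsv}$, $X \coloneqq S \times A^{O^*}$, $\ol{X} \coloneqq \ol{T}S \times A^{O^*}$, $k \coloneqq \mu \circ T(\str^T \circ (\iota_\obsv \times \id))\colon T\ol{X} \to TX$, $P \coloneqq \Pol(c)$ and $\ol{P} \coloneqq \Pol(\langle c^\Bel, \slicemonad{T}{O}\obsv \rangle)$. We expand $\tau = \rho \circ F\sigma$ throughout and move all $\sigma$'s inward. The facts we rely on are the Eilenberg--Moore law $\sigma\circ\mu = \sigma\circ T\sigma$, the compatibility $\sigma\circ T\rho = \rho\circ F\sigma\circ\lambda_\Omega$, and the bottom condition $\sigma\circ T\bot_Y = \bot_{TY}$ from $(\star)$.

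\textbf{Base case $n=1$.} The left side is $\rho\circ F(\sigma\circ T\bot_X\circ k)$ and the right side is $\rho\circ F(\sigma\circ T\bot_{\ol{X}})$, so it suffices to show $\sigma\circ T\bot_X\circ k = \sigma\circ T\bot_{\ol{X}}$. First, $T\bot_X\circ\mu = \mu\circ TT\bot_X$ by naturality of $\mu$. Then $\sigma\circ\mu = \sigma\circ T\sigma$, and $\sigma\circ T\bot_X = \bot_{TX}$, which is the equality $\sigma\circ T\bot_X\circ\mu_X = \sigma\circ T\bot_{TX}$ quoted in the sketch. Since $\bot$ is preserved by precomposition (precomposition preserves arbitrary joins, in particular the empty one), we have $\bot_{TX}\circ \str^T\circ(\iota_\obsv\times\id) = \bot_{\ol{X}}$. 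Both sides therefore reduce to $\sigma\circ T\bot_{\ol{X}}$.

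\textbf{Inductive step.} We strengthen the claim so that it can absorb a prefix. We prove, for all $n\ge 1$,
\[
\sigma\circ T(\tau^{n}\circ(FT)^{n}\bot\circ P^{n})\circ k = \sigma\circ T(\tau^{n}\circ(FT)^{n}\bot\circ P^{n-1}\circ\ldots)
\]
in the form $\sigma\circ T(g_n)\circ k = \sigma\circ T(\ol{g}_n)$ (*), where
\begin{align*}
g_n &\coloneqq \tau^{n}\circ (FT)^{n}\bot\circ P^{n} \colon X\to\Omega, \\
\ol{g}_n &\coloneqq \tau^{n}\circ (FT)^{n}\bot\circ \ol{P}^{n}\colon \ol{X}\to\Omega.
\end{align*}
The stated equation is $\rho\circ F$ applied to (*) at level $n-1$, after rewriting $\tau^n\circ(FT)^n\bot\circ FT(P^{n-1})$ as $\rho\circ F\sigma\circ FT(g_{n-1})$; for $n=1$ we take $g_0=\bot$, which is the base case above. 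For the step, write
\[
g_{n} = \rho\circ F\sigma\circ FT(g_{n-1})\circ P .
\]
Then $\sigma\circ T g_n\circ k = \sigma\circ T\rho\circ TF\sigma\circ TFT g_{n-1}\circ TP\circ k$. Using $\sigma\circ T\rho = \rho\circ F\sigma\circ\lambda_\Omega$ and naturality of $\lambda$, this becomes $\rho\circ F\sigma\circ F T\sigma\circ FTT g_{n-1}\circ\lambda_{TX}\circ TP\circ k$. Next we use $\sigma\circ T\sigma=\sigma\circ\mu$ and naturality of $\mu$ to collapse $FT\sigma\circ FTTg_{n-1}$ into a $\sigma\circ Tg_{n-1}\circ\mu$ form. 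This yields
\[
\rho\circ F(\sigma\circ Tg_{n-1})\circ F\mu\circ\lambda\circ TP\circ k ,
\]
and the same manipulation applies to the right side with $\ol{P}$.

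We now invoke \Cref{lem:commute_delta_mu}.\ref{item:delta_mu}, after factoring off the outer $F$: it rewrites $F\mu\circ\lambda\circ TP\circ k$ as $F\mu\circ\lambda\circ TFk\circ T\ol{P}$ (up to the $F$ prefix, which is harmless since everything is under $\rho\circ F(-)$). Then $F(\sigma\circ Tg_{n-1})\circ F\mu\circ\lambda\circ TFk$ is turned back, via $\lambda$-naturality and $\sigma\circ\mu=\sigma\circ T\sigma$, into a form where $\sigma\circ Tg_{n-1}\circ k$ appears. The induction hypothesis (*) then replaces it by $\sigma\circ T\ol{g}_{n-1}$, and folding back gives $\sigma\circ T\ol{g}_n$.

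\textbf{Main obstacle.} The delicate point is the bookkeeping in the inductive step. \Cref{lem:commute_delta_mu}.\ref{item:delta_mu} is stated with an extra outer $F$ and with $F^2\mu\circ F\lambda$ postcomposed, so the induction invariant must be phrased in exactly the shape that lemma consumes; this is why we choose (*) with $\sigma\circ T(-)$ wrapped around it. We must also check that the induction hypothesis can be applied inside $F$ and $\mu$, i.e.\ that $\sigma\circ Tg\circ\mu = \sigma\circ T(\sigma\circ Tg)$ lets us push $k$ through a layer of $T$. If this does not match directly, we would instead prove (*) with $k$ replaced by $\mu\circ Tk$, which is equivalent by the same monad laws.
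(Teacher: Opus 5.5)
Your base case is correct and is the paper's, and your rewriting of $\sigma\circ Tg_n\circ k$ into $\rho\circ F(\sigma\circ Tg_{n-1})\circ F\mu\circ\lambda\circ TP\circ k$ is also correct. The gap is the next step.

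\Cref{lem:commute_delta_mu}.\ref{item:delta_mu} only gives $F(F\mu\circ\lambda\circ TP\circ k)=F(F\mu\circ\lambda\circ TFk\circ T\ol{P})$. An outer $F$ cannot be cancelled, since $F$ need not be faithful. Your remark that the prefix is harmless ``since everything is under $\rho\circ F(-)$'' does not hold for your invariant (*). There $F$ wraps only $\sigma\circ Tg_{n-1}$, while $\lambda_{TX}\circ TP\circ k$ is not under any $F$. So your inductive step for (*) uses an identity strictly stronger than the cited lemma. Your strengthening is what creates the mismatch: (*) strips off the $\rho\circ F$ that would have supplied the outer $F$. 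Your fallback of replacing $k$ by $\mu\circ Tk$ does not address this.

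There are two repairs.

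\textbf{The paper's route.} Drop the strengthening and induct on the stated equation itself. Its left-hand side at level $n+1$ is $\rho\circ F\rho\circ F^2(\sigma\circ Tg_{n-1})\circ F^2\mu\circ F\lambda\circ FTP\circ Fk$, which has literally the lemma's shape. Rewrite with the lemma, then fold back using $\sigma\circ T\rho=\rho\circ F\sigma\circ\lambda_\Omega$, naturality of $\lambda$ and $\mu$, and $\sigma\circ\mu=\sigma\circ T\sigma$. This gives $\rho\circ F\bigl(\sigma\circ T(\rho\circ F(\sigma\circ Tg_{n-1}\circ k))\circ T\ol{P}\bigr)$. The inner $\rho\circ F(\sigma\circ Tg_{n-1}\circ k)$ is exactly the left-hand side of the statement at level $n$. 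So the unstrengthened hypothesis applies inside $T$, and folding once more gives $\rho\circ F(\sigma\circ T\ol{g}_n)$.

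\textbf{Keeping (*).} Alternatively, prove the $F$-free identity $F\mu\circ\lambda\circ TP\circ k=F\mu\circ\lambda\circ TFk\circ T\ol{P}$ directly. Each step in the paper's proof of the lemma is $F$ applied to an identity (strength and distributive-law axioms, \Cref{lem:bel_pol}.\ref{item:bel_delta}, the second condition of $(\star)$), so that argument establishes the $F$-free version. With it your induction is valid, but the stronger invariant buys nothing for the stated lemma.
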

\begin{proof}
  We prove it by induction on $n$.

  For the base case $n=1$,
  \begin{align*}
  &\tau \circ FT \bot \circ F(\mu \circ T(\str^T \circ (\iota_\obsv \times \id)))    \\
  &= \rho \circ F\sigma \circ FT\bot \circ F(\mu \circ T(\str^T \circ (\iota_\obsv \times \id)))  &\text{by definition of }\tau\\
  &= \rho \circ F\sigma \circ FT\bot \circ FT(\str^T \circ (\iota_\obsv \times \id)) \\
  &= \tau \circ FT\bot &\text{since $\bot$ is preserved under precomposition.}
  \end{align*}
  The third equality holds because for each $X \in \cat{C}$,
  \begin{align*}
    \sigma \circ T\bot_{X} \circ \mu_{X} 
    &= \sigma \circ \mu_{T\Omega} \circ T^2 \bot_{X} &\text{by naturality of $\mu$} \\
    &= \sigma \circ T\sigma \circ T^2 \bot_{X} &\text{since }\sigma \text{ is an EM-algebra} \\
    &= \sigma \circ T\bot_{TX} &\text{by $\bot_{TX} = \sigma \circ T\bot_X$.}
  \end{align*}

  For the step case, we assume that \eqref{eq:ast_n} for $n$ holds. 
  The commutativity of the following diagram
  yields the desired equality for $n+1$.

  \resizebox{1\linewidth}{!}{\input{diag-pol-bel-n.tex}}

  \noindent
  The whole diagram commutes
  by 
  $\sigma \circ T\rho = \rho \circ F(\sigma) \circ \lambda$,
$\sigma \circ \mu = \sigma \circ T\sigma$,
compatibility of $\lambda$ with $\mu$,
and
  \Cref{lem:commute_delta_mu}.

  Therefore, \eqref{eq:ast_n} holds for all $n \geq 1$, and the statement follows.

\end{proof}

\subsection{The Details of~\cref{ex:nondetAssumption}}\label{ap:nondetAssump}

It is straightforward to see that $\rho$ and $\sigma$ are monotone, and $\sigma$ is an Eilenberg-Moore algebra. 
We see that $\sigma \circ \pefunc(\rho) = \rho \circ (\sigma + \{\checkmark\}) \circ \lambda_\bset$: For any $U\in \pefunc(\bset + \{\checkmark\})$,  
\begin{align*}
  \big(\sigma \circ \pefunc(\rho)\big)(U) &= \land \big(\pefunc(\rho)(U)\big) = \begin{cases*}
    \bfalse \text{ if } \bfalse\in U,\\
    \btrue \text{ otherwise}. 
  \end{cases*}\\
  \big(\rho \circ (\sigma + \{\checkmark\}) \circ \lambda_\bset\big)(U) &= \begin{cases*}
    \big(\rho \circ (\sigma + \{\checkmark\})\big)(\checkmark) \text{ if } U = \{\checkmark\},\\
    \big(\rho \circ (\sigma + \{\checkmark\})\big)(U\backslash\{\checkmark\}) \text{ otherwise}
  \end{cases*} = 
  \begin{cases*}
    \btrue \text{ if } U = \{\checkmark\},\\
    \bfalse \text{ if $\bfalse \in U\backslash \{\checkmark\}$},\\
    \btrue \text{ otherwise }\\
  \end{cases*}\\
  &= \begin{cases*}
    \bfalse \text{ if } \bfalse\in U,\\
    \btrue \text{ otherwise}. 
  \end{cases*}
\end{align*}
It is easy to see that the second assumption in $(\star)$ is satisfied. 

\subsection{Proof of~\Cref{prop:partial-information-upper-bound}} \label{ap:proof_fully_observable}

\begin{proof}
Note that
$(\id_S,\obsv)\colon
\langle \delta,\obsv\rangle \to \langle \delta,\id_S\rangle$
is a morphism in
$\CoalgOctr{FT(\_)^A}$.
Hence
$\Pol(\id_S,\obsv)\colon
\Pol(i, \langle \delta,\obsv\rangle)
\to
\Pol(i, \langle \delta,\id_S\rangle)$
is an $FT$-coalgebra morphism.
Consider an arbitrary morphism $u\colon O^+ \to A$ and $n \in \mathbb{N}$.
Applying~\cref{prop:Phi-coalg-morphism}, we obtain
$\Phi_{\Pol(i, \langle \delta,\obsv\rangle)}^n(\bot) = \Phi_{\Pol(i, \langle \delta,\id_S\rangle)}^n(\bot)
\circ (\id \times A^{\obsv^*})$.
Moreover, 
since $\cons \circ (\obsv \times \obsv^*) = \obsv^+ \circ \cons\colon S \times S^* \to O^+$,
 we have $A^{\obsv^*} \circ (u \circ \cons)^\dagger \circ \obsv = (u \circ \obsv^+ \circ \cons)^\dagger\colon S \to A^{S^*}$.
 Therefore,
\begin{align*}
&\Phi_{\Pol(i, \langle \delta,\obsv\rangle)}^n(\bot)
\circ \langle \id, (u \circ \cons)^\dagger \circ \obsv \rangle \circ i \\
&= \Phi_{\Pol(i, \langle \delta,\id_S\rangle)}^n(\bot)
\circ (\id \times A^{\obsv^*})
\circ \langle \id, (u \circ \cons)^\dagger \circ \obsv \rangle \circ i \\
&= \Phi_{\Pol(i, \langle \delta,\id_S\rangle)}^n(\bot)
\circ \langle \id, (u \circ \obsv^+ \circ \cons)^\dagger \rangle \circ i
\le
\Vc{i, \langle \delta,\id_S\rangle} .
\end{align*}

If $\obsv$ is split mono, then there exists a morphism $g\colon O \to S$ such that $g \circ \obsv = \id_S$.
Then $(\id_S, g)\colon \langle \delta,\id_S\rangle \to \langle \delta,\obsv\rangle$ is a morphism in $\CoalgOctr{FT(\_)^A}$, and hence $\Pol(\id_S, g)\colon \Pol(i, \langle \delta,\id_S\rangle) \to \Pol(i, \langle \delta,\obsv\rangle)$ is an $FT$-coalgebra morphism.
By a similar argument as above, 
for each $u\colon S^+ \to A$ and $n \in \mathbb{N}$, 
we have
$\Phi_{\Pol(i, \langle \delta,\id_S\rangle)}^n(\bot) \circ \langle \id, (u \circ \cons)^\dagger \rangle \circ i \le \Vc{i, \langle \delta, \obsv\rangle}$.
\end{proof}

\subsection{Proof of \Cref{prop:partial_upper_hist_new}} \label{ap:proof_partial_upper_hist_new}

\begin{proof}
  The inequality 
$\Hc{i, \langle \delta, \obsv\rangle} \leq \Hc{i, \langle \delta, \id\rangle}$ follows from
  \Cref{prop:partial-information-upper-bound} and \Cref{prop:history-semantics}.

  Because 
  $m_u\colon (i_u, \delta_u) \rightarrowtail (\nil \circ i, \Hist(\langle \delta, \id\rangle)_u)$
  and $f_u \circ \obsv^+ \circ m_u = m_u$,
  it follows that 
  \[m_u\colon (i_u, \delta_u) \rightarrowtail (\nil \circ i, \Hist(\langle \delta, \id\rangle)_{u \circ f_u \circ \obsv^+}).\]
  Moreover,
  by \Cref{lem:subcoalg_u}, the morphism $(\id, \obsv^+)\colon \Hist(\langle \delta, \id\rangle) \to \Hist(c)$ in 
  $\CoalgOcov{FT(\_)^A}{}$ yields the morphism 
  \[\id\colon \Hist(\langle \delta, \id\rangle)_{u \circ f_u \circ \obsv^+} \to \Hist(c)_{u \circ f_u}\] in $\Coalg{FT}$.
  Thus we have $m_u\colon (i_u, \delta_u) \rightarrowtail (\nil \circ i, \Hist(c)_{u \circ f_u})$.
  By \Cref{prop:Phi-coalg-morphism}, we obtain the following:
  \begin{align*}
    \Hc{i, \langle \delta, \id\rangle}
    &= \bigvee_{u\colon S^+ \to A,\; n \in \mathbb{N}} \Phi_{\Hist(\langle \delta, \id\rangle)_u}^n(\bot) \circ \nil \circ i\\
    &= \bigvee_{u\colon S^+ \to A,\; n \in \mathbb{N}} \Phi_{\delta_u}^n(\bot) \circ i_u \\
    &= \bigvee_{u\colon S^+ \to A,\; n \in \mathbb{N}} \Phi_{\Hist(c)_{u \circ f_u}}^n(\bot) \circ \nil \circ i  \\
    &\leq \Hc{i, \langle \delta, \obsv\rangle}.
  \end{align*}
\end{proof}

\subsection{Omitted Proofs in~\cref{sec:reachable}}
\label{sec:omitReachable}

\begin{proposition}
  \label{prop:decidable}
  Let  $c = \langle \delta,\obsv\rangle\colon S \to (\pefunc(S) + \{\checkmark\})^A \times O$ and $A$ and $S$ be finite sets. 
  Computing the objective $\Vc{i, c}$  is decidable. 
\end{proposition}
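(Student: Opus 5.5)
The plan is to reduce, via \Cref{thm:belief-correctness} and the reachable-part argument of \Cref{sec:reachable}, to a finite fully observable problem that can be solved by a fixpoint computation. Since $I$ is a set and joins are pointwise, by \Cref{rem:reachable-part} it suffices to decide $\Vc{i,c}(x)$ for each $x\in I$, i.e.\ to treat the $1$-pointed coalgebra with initial state $i(x)$.

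\textbf{Step 1: pass to the belief coalgebra.} By \Cref{ex:nondetAssumption}, the assumption $(\star)$ holds for $T=\pefunc$, $F=(\_)+\{\checkmark\}$ with $\rho,\sigma$ as given. The law $\lambda'$ for $F(\_)^A$ is obtained from the strength, and $\rho,\sigma$ are unchanged. Hence \Cref{thm:belief-correctness} yields $\Vc{i,c}=\Vc{\Bel(i,c)}$. The state space $\domofslice{\pefunc}{O}{\obsv}$ is a subset of $\pefunc(S)$, so it is finite (at most $2^{|S|}$ elements), and $c^\Bel$ is computable from the explicit formula in \Cref{ex:nondet}.

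\textbf{Step 2: drop the observations.} I would apply \Cref{prop:partial_upper_hist_new} together with \Cref{prop:history-semantics} to $\Bel(i,c)$. This gives $\Vc{\Bel(i,c)}=\Vc{\eta\circ i,\langle c^\Bel,\id\rangle}$.

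The hypothesis to verify is the following. For each $u\colon(\domofslice{\pefunc}{O}{\obsv})^+\to A$, take the reachable part of $\Hist(\langle c^\Bel,\id\rangle)_u$ from $\nil(\{i(x)\})$. On this reachable part, the map $\obsv^+$ must be injective, so that a left inverse $f_u$ exists on its image (extended arbitrarily elsewhere; the image is nonempty).

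Injectivity is argued by induction on length. Two reachable belief histories with equal observation sequences share their prefix, by induction. Their last beliefs are then both successors of the same belief under the same action $u(\text{prefix})$. By the formula for $c^\Bel$, distinct successors $V_U\cap\obsv^{-1}(o)$ lie in distinct fibres, so equal observations force equal beliefs. I expect this to be the main obstacle, since the reachable part and the retraction $f_u$ must be stated carefully. The retraction has to hold as a function on $O^+$ restricted to the image, with the identity $f_u\circ\obsv^+\circ m_u=m_u$.

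\textbf{Step 3: decide the fully observable problem.} We now have a finite LTS-with-choice $d\colon B\to(\pefunc(B)+\{\checkmark\})^A$. Its objective is the existence of a history-dependent scheduler $u\colon B^+\to A$ under which all paths from $b_0$ terminate. I would show this equals the least fixpoint of $W\mapsto\{b\mid\exists a.\ d(b)(a)=\checkmark \text{ or } d(b)(a)\subseteq W\}$ (attractor computation).

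For soundness, a memoryless attractor-rank strategy guarantees termination within $|B|$ steps. For completeness, fix a winning $u$. By König's lemma the tree of paths under $u$ is finitely branching with no infinite branch, hence finite. An induction on its height then shows each visited node lies in the attractor.

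The fixpoint stabilizes in at most $|B|$ iterations and each iteration is computable, so $\Vc{i,c}(x)$ is decidable for each $x$, and hence so is $\Vc{i,c}$.
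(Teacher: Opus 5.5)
Your proposal is correct and follows the paper's proof. You pass to the finite belief coalgebra via \Cref{thm:belief-correctness}, drop observations by restricting to reachable parts (on which the observation map on belief histories is injective) via \Cref{prop:partial_upper_hist_new}, and then compute the least fixed point of the one-step termination operator on a finite lattice by iteration; you simply supply more detail than the paper's sketch, namely the injectivity induction and the soundness/completeness argument identifying the fully observable objective with that least fixed point.
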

\begin{proof}
  Since its belief coalgebra is still finite, and we can assume that this belief coalgebra is fully observable by restricting to reachable parts, 
  it suffices to show that computing the objective $\Vc{i, c'}$ for a coalgebra  $c' = \langle \delta',\id_S'\rangle\colon S' \to (\pefunc(S') + \{\checkmark\})^A \times S'$ with a finite set $S'$ is decidable. 
  Since it is fully observable, we can see that the objective $\Vc{i, c'}$ is 
  the composition of $i\colon I \to S'$ with 
  the least fixed point of the following operator $\Psi\colon \bset^{S'}\rightarrow \bset^{S'}$: 
  \begin{align*}
    \Psi(f)(s) \coloneqq \begin{cases*}
      \btrue \qquad\text{ if }\exists a\in A\text{ s.t. } \delta'(s)(a) = \checkmark\text{ or } f\big(\delta'(s)(a)\big) = \{\btrue\},\\
      \bfalse\qquad\text{ otherwise. }
    \end{cases*}
  \end{align*}
  Since $\Psi$ is $\omega$-continuous and the domain $\bset^{S'}$ is a finite complete lattice, the least fixed point $\Vc{i, c'}$ of $\Psi$ can be obtained in finite time by Kleene iteration. 
\end{proof}

\else
\fi

\end{document}

%% file: diag-hist-pol.tex
\begin{tikzcd}
	{S^+} && {FTS^A \times A \times S^+} & {FT(S \times S^+)} & {FT(S^+ \times S)} & {FT(S^+)} \\
	{S \times A^{O^*}} && {FTS^A \times A \times A^{O^+}} & {FT(S \times A^{O^+})} && {FT(S \times A^{O^*})}
	\arrow["{\langle \delta \circ \mathrm{last}_{\mathrm{id}}, u\circ \obsv^+, \mathrm{id}\rangle}", from=1-1, to=1-3]
	\arrow["{{\mathrm{cmp}_u}}"', from=1-1, to=2-1]
	\arrow["{{\mathrm{st} \circ \mathrm{ev} \times \mathrm{id}_{S^+}}}", from=1-3, to=1-4]
	\arrow["{{FTS^A \times A \times f}}"{description}, from=1-3, to=2-3]
	\arrow["\cong", from=1-4, to=1-5]
	\arrow["{{FT(S \times f)}}", from=1-4, to=2-4]
	\arrow["{{FT\mathrm{ext}}}", from=1-5, to=1-6]
	\arrow["{{FT(\mathrm{cmp}_u)}}", from=1-6, to=2-6]
	\arrow["{\langle \delta, \cong \rangle}", from=2-1, to=2-3]
	\arrow["{{\mathrm{st} \circ \mathrm{ev} \times \mathrm{id}}}", from=2-3, to=2-4]
	\arrow["{{FT\langle \pi_1, \mathrm{ev} \circ \mathrm{obs} \times \mathrm{id}\rangle}}", from=2-4, to=2-6]
\end{tikzcd}

%% file: diag-pol-bel-1.tex
\begin{tikzcd}
	{S \times A^{O^*}} & {S \times A \times A^{O^+}} & {FTS \times A^{O^+}} & {FT(S \times A^{O^+})} & {FT(S \times A^{O^*})} \\
	{\overline{T}S \times A^{O^*}} & {\overline{T}S \times A \times A^{O^+}} & {FT\overline{T}S \times A^{O^+}} & {FT(\overline{T}S \times A^{O^+})} & {FT(\overline{T}S \times A^{O^*})}
	\arrow["\cong", from=1-1, to=1-2]
	\arrow["{\eta^{\overline{T}}_\mathrm{obs} \times \mathrm{id}}"', from=1-1, to=2-1]
	\arrow["{\delta^\dagger \times \mathrm{id}}", from=1-2, to=1-3]
	\arrow["{\eta^{\overline{T}}_\mathrm{obs} \times \mathrm{id}}"{description}, from=1-2, to=2-2]
	\arrow["{\mathrm{st}}", from=1-3, to=1-4]
	\arrow["{FT(e)}", from=1-4, to=1-5]
	\arrow["\cong", from=2-1, to=2-2]
	\arrow["{(\delta^\mathrm{Bel})^\dagger \times \mathrm{id}}"', from=2-2, to=2-3]
	\arrow["{F(\mathrm{flat}_\mathrm{obs}) \times \mathrm{id}}"{description}, from=2-3, to=1-3]
	\arrow["{\mathrm{st}}", from=2-3, to=2-4]
	\arrow["{F\mu \circ FT(\mathrm{st} \circ \iota \times \mathrm{id})}"{description}, from=2-4, to=1-4]
	\arrow["{FT(\overline{e})}", from=2-4, to=2-5]
	\arrow["{F\mu \circ FT(\mathrm{st} \circ \iota \times \mathrm{id})}"', from=2-5, to=1-5]
\end{tikzcd}

%% file: diag-pol-bel-big-n-rotate.tex
\begin{tikzcd}
	{F^2T(S \times A^{O^*})} & {F^2T(S \times A^{O^+})} &&& \\
	{F^2T^2(S \times A^{O^*})} & {F^2T^2(S \times A^{O^+})} \\
	{(FT)^2(S \times A^{O^*})} & {(FT)^2(S \times A^{O^+})} \\
	{(FT)^2T(S \times A^{O^*})} & {(FT)^2T(S \times A^{O^+})} \\
	{(FT)^2(T S \times A^{O^*})} &&& {F^2T^3( S \times A^{O^+})} \\
	{(FT)^2(\overline{T} S \times A^{O^*})} &&& {(FT)^2T( S \times A^{O^+})} & {(FT)^2(S \times A^{O^+})} \\
	{(FT)^2(\overline{T} S \times A^{O^+})} & {(FT)^2(T S \times A^{O^+})} & {FTF(T S \times A^{O^+})} & {(FT)^2(T S \times A^{O^+})} \\
	{FT(FT\overline{T} S \times A^{O^+})} & {FT(FTT S \times A^{O^+})} & {FT(FT S \times A^{O^+})} & {FTF(T^2 S \times A^{O^+})} & {FT^2FT(S \times A^{O^+})} \\
	&& {FT(FT^2 S \times A^{O^+})} & {FT^2F(TS \times A^{O^+})} \\
	&& {FT(TFTS \times A^{O^+})} & {FT^2(FTS \times A^{O^+})} & {FT(FTS \times  A^{O^+})} \\
	{FT(\overline{T}S \times A  \times  A^{O^+})} & {FT(TS \times A \times A^{O^+})} & {FT(T(S \times A) \times A^{O^+})} & {FT^2(S \times A \times A^{O^+})} & {FT(S \times A \times A^{O^+})} \\
	{FT(\overline{T}S \times A^{O^*})} & {FT(TS \times A^{O^*})} & {FT^2(S \times A^{O^*})} && {FT(S \times A^{O^*})}
	\arrow["{F^2Te}", from=1-2, to=1-1]
	\arrow["{F^2\mu}", from=2-1, to=1-1]
	\arrow["{F^2\mu}"', from=2-2, to=1-2]
	\arrow["{F^2T^2e}", from=2-2, to=2-1]
	\arrow["{{F\lambda}}", from=3-1, to=2-1]
	\arrow["{{F\lambda}}", from=3-2, to=2-2]
	\arrow["{(FT)^2e}", from=3-2, to=3-1]
	\arrow["{{FTF\mu}}", from=4-1, to=3-1]
	\arrow["{FTF\mu}", from=4-2, to=3-2]
	\arrow["{(FT)^2Te}", from=4-2, to=4-1]
	\arrow["{{(FT)^2 \mathrm{st}^T}}", from=5-1, to=4-1]
	\arrow["{F^2T\mu}"{description}, from=5-4, to=2-2]
	\arrow["{F^2\mu}"{description}, curve={height=12pt}, from=5-4, to=2-2]
	\arrow["{(FT)^2 (\iota \times \mathrm{id})}", from=6-1, to=5-1]
	\arrow["{FTF\mu}"{description}, from=6-4, to=3-2]
	\arrow["{F\lambda}"', from=6-4, to=5-4]
	\arrow["{F\lambda}"{description}, shift right=3, curve={height=30pt}, from=6-5, to=2-2]
	\arrow["{{(FT)^2 (\overline{e})}}", from=7-1, to=6-1]
	\arrow["{(FT)^2 (\iota \times \mathrm{id})}", from=7-1, to=7-2]
	\arrow["{(FT)^2\mathrm{st}^T}", from=7-2, to=4-2]
	\arrow["{{FTF\mathrm{st}^T}}"{description}, curve={height=12pt}, from=7-3, to=3-2]
	\arrow["{(FT)^2\mathrm{st}^T}", from=7-4, to=6-4]
	\arrow["{{FT\mathrm{st}^{FT}}}", from=8-1, to=7-1]
	\arrow["{FT(FT\iota\times\mathrm{id})}", from=8-1, to=8-2]
	\arrow["{FT\mathrm{st}^{FT}}", from=8-2, to=7-2]
	\arrow["{FT(F\mu \times \mathrm{id})}"', from=8-2, to=8-3]
	\arrow["{{FT\mathrm{st}^F}}", from=8-3, to=7-3]
	\arrow["{FTF(\mu \times \mathrm{id})}"{description}, from=8-4, to=7-3]
	\arrow["{FTF\mathrm{st}^T}", from=8-4, to=7-4]
	\arrow["{FT\lambda}"{description}, from=8-5, to=6-4]
	\arrow["{{FT(F\mu \times \mathrm{id})}}", from=9-3, to=8-3]
	\arrow["{{FT\mathrm{st}^F}}"{description}, from=9-3, to=8-4]
	\arrow["{FT\lambda}"'{pos=0.6}, shift right, curve={height=30pt}, from=9-4, to=7-4]
	\arrow["{FT^2F\mathrm{st}^T}"', from=9-4, to=8-5]
	\arrow["{{FT(\lambda \times \mathrm{id})}}", from=10-3, to=9-3]
	\arrow["{{FT\mathrm{st}^T}}", from=10-3, to=10-4]
	\arrow["{FT^2\mathrm{st}^F}", from=10-4, to=9-4]
	\arrow["{{F\mu}}"', from=10-4, to=10-5]
	\arrow["{FT\mathrm{st}^{FT}}"', shift right=3, curve={height=40pt}, from=10-5, to=6-5]
	\arrow["{{FT(\langle c^\mathrm{Bel}, \overline{T}\mathrm{obs}\rangle^\dagger \times \mathrm{id})}}", from=11-1, to=8-1]
	\arrow["{{FT(\iota \times \mathrm{id})}}", from=11-1, to=11-2]
	\arrow["{{FT(\mathrm{st}^T \times \mathrm{id})}}"', from=11-2, to=11-3]
	\arrow["{{FT(T\delta^\dagger \times \mathrm{id})}}", from=11-3, to=10-3]
	\arrow["{{FT\mathrm{st}^T}}"', from=11-3, to=11-4]
	\arrow["{{FT^2(\delta^\dagger \times \mathrm{id})}}", from=11-4, to=10-4]
	\arrow["{{F\mu}}"', from=11-4, to=11-5]
	\arrow["{{FT(\delta^\dagger \times \mathrm{id})}}"', from=11-5, to=10-5]
	\arrow["{{FT(\mathrm{id} \times f)}}", from=12-1, to=11-1]
	\arrow["{{FT(\iota \times \mathrm{id})}}"', from=12-1, to=12-2]
	\arrow["{{FT(\mathrm{id} \times f)}}", from=12-2, to=11-2]
	\arrow["{{FT\mathrm{st}^T}}"', from=12-2, to=12-3]
	\arrow["{{FT^2(\mathrm{id} \times f)}}"{description}, from=12-3, to=11-4]
	\arrow["{{F\mu}}"', from=12-3, to=12-5]
	\arrow["{{FT(\mathrm{id} \times f)}}"', from=12-5, to=11-5]
\end{tikzcd}

%% file: diag-pol-bel-n.tex
\begin{tikzcd}
	{FT(\overline{T}S \times A^{O^*})} & {(FT)^2(\overline{T}S \times A^{O^*})} & {(FT)^{n+1}(\overline{T}S \times A^{O^*})} & {(FT)^{n+1}\Omega} &&& \Omega \\
	{FT(S \times A^{O^*})} & {(FT)^2(S \times A^{O^*})} & {(FT)^{n+1}(S \times A^{O^*})} & {F^2T^2(FT)^{n-1}\Omega} & {F^{n+1}T^{n+1}\Omega} & {F^{n+1}T^2\Omega} & {F^{n+1}\Omega} \\
	& {F^2T^2(S \times A^{O^*})} & {F^2T^2(FT)^{n-1}(S \times A^{O^*})} & {F^2T(FT)^{n-1}\Omega} & {F^{n+1}T^{n}\Omega} & {F^{n+1}T\Omega} \\
	& {F^2T(S \times A^{O^*})} & {F^2T(FT)^{n-1}(S \times A^{O^*})}
	\arrow["{FT(\Pol(\langle \delta^\Bel, \slicemonad{T}{O}\obsv \rangle))}"'{pos=0}, from=1-1, to=1-2]
	\arrow["{FT(\Pol(\langle \delta^\Bel, \slicemonad{T}{O}\obsv \rangle)^n)}", curve={height=-18pt}, from=1-1, to=1-3]
	\arrow["{F(\mu \circ T(\mathrm{st} \circ \iota \times \mathrm{id}))}"{description}, from=1-1, to=2-1]
	\arrow["{(FT)^2(\Pol(\langle \delta^\Bel, \slicemonad{T}{O}\obsv \rangle)^{n-1})}"', from=1-2, to=1-3]
	\arrow["{FTF(\mu \circ T(\mathrm{st} \circ \iota \times \mathrm{id}))}"{description}, from=1-2, to=2-2]
	\arrow["{(FT)^{n+1}\bot}", from=1-3, to=1-4]
	\arrow["{\tau^{n+1}}", from=1-4, to=1-7]
	\arrow["{F\lambda}"', from=1-4, to=2-4]
	\arrow["{FT(\Pol(c))}"', from=2-1, to=2-2]
	\arrow["{(FT)^2\Pol(c)^{n-1}}", from=2-2, to=2-3]
	\arrow["{ F\lambda}"{description}, from=2-2, to=3-2]
	\arrow["{(FT)^{n+1}\bot}"{description}, from=2-3, to=1-4]
	\arrow["{F\lambda}"', from=2-3, to=3-3]
	\arrow["{\vec{\lambda}}", from=2-4, to=2-5]
	\arrow["{F^2\mu}"{description}, from=2-4, to=3-4]
	\arrow["{F^{n+1}T\sigma^n}", from=2-5, to=2-6]
	\arrow["{F^{n+1}\mu}", from=2-5, to=3-5]
	\arrow["{F^{n+1}\sigma^2}"', from=2-6, to=2-7]
	\arrow["{F^{n+1}\mu}", from=2-6, to=3-6]
	\arrow["{\rho^{n+1}}"', from=2-7, to=1-7]
	\arrow["{F^2 \mu}"', from=3-2, to=4-2]
	\arrow["{F^2T^2(FT)^{n-1}\bot}"{description}, from=3-3, to=2-4]
	\arrow["{F^2\mu}"{description}, from=3-3, to=4-3]
	\arrow["{\vec{\lambda}}", from=3-4, to=3-5]
	\arrow["{F^{n+1}T^2\sigma^n}", from=3-5, to=3-6]
	\arrow["{F^{n+1}\sigma}"', curve={height=6pt}, from=3-6, to=2-7]
	\arrow["{F^2T\Pol(c)^{n-1}}"', from=4-2, to=4-3]
	\arrow["{F^2T(FT)^{n-1}\bot}"{description}, from=4-3, to=3-4]
\end{tikzcd}